\newtheorem{question}[theorem]{Question}
\newtheorem{oracle}{Oracle}
\theoremstyle{remark}
\numberwithin{equation}{section}
\numberwithin{oracle}{section}
\numberwithin{remark}{section}
\newcommand{\eqn}[1]{(\ref{eqn:#1})}
\newcommand{\ora}[1]{\hyperref[ora:#1]{Oracle~\ref*{ora:#1}}}
\newcommand{\thm}[1]{\hyperref[thm:#1]{Theorem~\ref*{thm:#1}}}
\newcommand{\cor}[1]{\hyperref[cor:#1]{Corollary~\ref*{cor:#1}}}
\newcommand{\defn}[1]{\hyperref[defn:#1]{Definition~\ref*{defn:#1}}}
\newcommand{\lem}[1]{\hyperref[lem:#1]{Lemma~\ref*{lem:#1}}}
\newcommand{\prop}[1]{\hyperref[prop:#1]{Proposition~\ref*{prop:#1}}}
\newcommand{\fig}[1]{\hyperref[fig:#1]{Figure~\ref*{fig:#1}}}
\newcommand{\tab}[1]{\hyperref[tab:#1]{Table~\ref*{tab:#1}}}
\newcommand{\algo}[1]{\hyperref[algo:#1]{Algorithm~\ref*{algo:#1}}}
\newcommand{\ques}[1]{\hyperref[ques:#1]{Question~\ref*{ques:#1}}}
\newcommand{\rem}[1]{\hyperref[rem:#1]{Remark~\ref*{rem:#1}}}
\renewcommand{\sec}[1]{\hyperref[sec:#1]{Section~\ref*{sec:#1}}}
\newcommand{\append}[1]{\hyperref[append:#1]{Appendix~\ref*{append:#1}}}
\def\>{\rangle}
\def\<{\langle}
\newcommand{\range}[1]{[#1]}
\newcommand{\ket}[1]{|#1\rangle}
\newcommand{\bra}[1]{\langle#1|}
\newcommand{\proj}[1]{\ket{#1}\bra{#1}}
\newcommand{\E}{\mathbf{E}}
\newcommand{\C}{\ensuremath{\mathbb{C}}}
\newcommand{\R}{\ensuremath{\mathbb{R}}}
\newcommand{\Z}{\ensuremath{\mathbb{Z}}}
\newcommand{\eps}{\varepsilon}
\DeclareMathOperator{\poly}{poly}
\DeclareMathOperator{\tr}{Tr}
\DeclareMathOperator{\Tr}{Tr}
\DeclareMathOperator{\rank}{rank}
\newcommand{\hd}[1]{\vspace{2mm} \noindent \textbf{#1}}
\def \eps {\epsilon}
\title{Quantum SDP Solvers: Large Speed-ups, Optimality, and Applications to Quantum Learning} 
\titlerunning{Quantum SDP Solver: Speed-ups, Optimality, Applications}
\author{Fernando G. S. L. Brand\~{a}o}{Institute of Quantum Information and Matter, California Institute of Technology, USA}{fgslbrandao@gmail.com}{}{}
\author{Amir Kalev}{Joint Center for Quantum Information and Computer Science, University of Maryland, USA}{amirk@umd.edu}{}{}
\author{Tongyang Li}{Joint Center for Quantum Information and Computer Science, University of Maryland, USA}{tongyang@cs.umd.edu}{}{}
\author{Cedric Yen-Yu Lin}{Joint Center for Quantum Information and Computer Science, University of Maryland, USA}{cedriclin37@gmail.com}{}{}
\author{Krysta M. Svore}{Station Q, Quantum Architectures and Computation Group, Microsoft Research, USA}{ksvore@microsoft.com}{}{}
\author{Xiaodi Wu}{Joint Center for Quantum Information and Computer Science, University of Maryland, USA}{xwu@cs.umd.edu}{}{}
\authorrunning{F. G. S. L. Brand\~{a}o, A. Kalev, T. Li, C. Y.-Y. Lin, K. M. Svore, X. Wu}
\keywords{Quantum algorithms, Semidefinite program, Convex optimization}
\begin{document}

\maketitle

\begin{abstract}
We give two new quantum algorithms for solving semidefinite programs (SDPs) providing quantum speed-ups. We consider SDP instances with $m$ constraint matrices, each of dimension $n$, rank at most $r$, and sparsity $s$. The first algorithm assumes an input model where one is given access to an oracle to the entries of the matrices at unit cost. We show that it has run time $\tilde{O}(s^2 (\sqrt{m} \epsilon^{-10} + \sqrt{n} \epsilon^{-12}))$, with $\epsilon$ the error of the solution. This gives an optimal dependence in terms of $m, n$ and quadratic improvement over previous quantum algorithms (when $m \approx n$). The second algorithm assumes a fully quantum input model in which the input matrices are given as quantum states. We show that its run time is $\tilde{O}(\sqrt{m}+\poly(r))\cdot\poly(\log m,\log n,B,\epsilon^{-1})$, with $B$ an upper bound on the trace-norm of all input matrices. In particular the complexity depends only polylogarithmically in $n$ and polynomially in $r$.

We apply the second SDP solver to learn a good description of a quantum state with respect to a set of measurements: Given $m$ measurements and a supply of copies of an unknown state $\rho$ with rank at most $r$, we show we can find in time $\sqrt{m}\cdot\poly(\log m,\log n,r,\epsilon^{-1})$ a description of the state as a quantum circuit preparing a density matrix which has the same expectation values as $\rho$ on the $m$ measurements, up to error $\epsilon$. The density matrix obtained is an approximation to the maximum entropy state consistent with the measurement data considered in Jaynes' principle from statistical mechanics.

As in previous work, we obtain our algorithm by "quantizing" classical SDP solvers based on the matrix multiplicative weight update method. One of our main technical contributions is a quantum Gibbs state sampler for low-rank Hamiltonians, given quantum states encoding these Hamiltonians, with a poly-logarithmic dependence on its dimension, which is based on ideas developed in quantum principal component analysis. We also develop a "fast" quantum OR lemma with a quadratic improvement in gate complexity over the construction of Harrow et al.~\cite{harrow2017sequential}. We believe both techniques might be of independent interest.
\end{abstract}


\section{Introduction}
\hd{Motivation.}  Semidefinite programming has been a central topic in the study of mathematical optimization, theoretical computer science, and operations research in the last decades.  It has become an important tool for designing efficient optimization and approximation algorithms. The power of semidefinite programs (SDPs) lies in their generality (that extends the better-known linear programs (LPs)) and the fact that they admit polynomial-time solvers.

It is natural to ask whether quantum computers can have advantage in solving this important optimization problem.
In Ref.~\cite{brandao2016quantum}, Brand\~ao and Svore provided an affirmative answer, giving a quantum algorithm with worst-case running time $\tilde{O}(\sqrt{mn} s^2 (R\tilde{R}/\varepsilon)^{32})$~\footnote{$\tilde{O}$ hides factors that are polynomial in $\log m$ and $\log n$.}, where
$n$ and $s$ are the dimension and row sparsity of the input matrices, respectively,
$m$ the number of constraints, $\eps$ the accuracy of the
solution, and $R, \tilde{R}$ upper bounds on the norm of the optimal primal and dual solutions.
This is a \emph{polynomial} speed-up in $m$ and $n$ comparing to the two state-of-the-art classical SDP-solvers \cite{lee2015faster, arora2007combinatorial} (with complexity $\tilde{O}(m(m^2+n^{\omega}+mns)\poly\log(R/\eps))$ \cite{lee2015faster}, where $\omega$ is the exponent of matrix multiplication, and $\tilde{O}( m n s (R \tilde{R}/\varepsilon)^{4} + n s \left( R \tilde{R}/\varepsilon  \right)^7)$~\cite{arora2007combinatorial}), and beating the classical lower bound of $\Omega(m + n)$ \cite{brandao2016quantum}. The follow-up work by van Apeldoorn et al.~\cite{vanApeldoorn2017quantum} improved the running time giving a quantum SDP solver with complexity $\tilde{O}(\sqrt{mn} s^2 (R\tilde{R}/\eps)^8)$. In terms of limitations, Ref.~\cite{brandao2016quantum} proved a quantum lower bound $\Omega(\sqrt{m}+\sqrt{n})$ when $R,\tilde{R},s,\epsilon$ are constants; stronger lower bounds can be proven if $R$ and/or $\tilde{R}$ scale with $m$ and $n$ \cite{vanApeldoorn2017quantum}. We note all these results are shown in an input model in which there is an oracle for the entry of each of the input matrices (see \ora{plain} below for a formal definition).

In this paper, we investigate quantum algorithms for SDPs (i.e., quantum SDP solvers) further in the following two perspectives: (1) the best dependence of parameters, especially the dimension $n$ and the number of constraints $m$; (2) whether there is any reasonable alternative input model for quantum SDP solvers and what is its associated complexity. To that end, let us first formulate the precise SDP instance in our discussion.

\hd{The SDP approximate feasibility problem.} We will work with the SDP approximate feasibility problem formulated as follows  (see \append{SDP} for details): Given an $\epsilon>0$, $m$ real numbers $a_{1},\ldots,a_{m}\in\R$, and Hermitian $n\times n$ matrices $A_{1},\ldots,A_{m}$ where $-I\preceq A_{i}\preceq I,  \forall\,j\in\range{m}$, define the convex region $\mathcal{S}_{\epsilon}$ as all $X$ such that
\begin{align}
\tr(A_i X)&\leq a_{i}+\epsilon\quad\forall\,i\in\range{m};  \label{eqn:SDP}  \\
X&\succeq 0;\ \Tr[X]=1. \nonumber
\end{align}
For approximate feasibility testing, it is required that either (1) If $\mathcal{S}_{0}=\emptyset$, output fail; or (2) If $\mathcal{S}_{\epsilon}\neq\emptyset$, output an $X\in\mathcal{S}_{\epsilon}$. Throughout the paper, we denote by $n$ the the dimension of the matrices, $m$ the number of constraints, and $\epsilon$ the (additive) error of the solution. For Hermitian matrices $A$ and $B$, we denote $A\preceq B$ if $B-A$ is positive semidefinite, and $A\succeq B$ if $A-B$ is positive semidefinite. We denote $I_{n}$ to be the $n\times n$ identity matrix.

There are a few reasons that guarantee our choice of approximate SDP feasibility problem do not lose generality: (1) first, it is a routine\footnote{To see why this is the case, for any general SDP problem, one can guess a candidate value (e.g., $c_0$) for the objective function (e.g., $\tr(CX)$ and assume one wants to maximize $\tr(CX)$) and convert it into a constraint (e.g., $\tr(CX)\geq c_0$). Hence one ends up with a feasibility problem and the candidate value $c_0$ can then be found via binary search with $O(\log(1/\eps))$ overhead when $\tr(CX) \in [-1, 1]$.} to reduce general optimization SDP problems to the feasibility problem; (2) second, for general feasible solution $X\succeq 0$ with width bound $\tr(X)\leq R$, there is a procedure\footnote{The procedure goes as follows: (a) scale down every constraint by a factor $R$ and let $X'=X/R$ (thus $\tr(X')\leq 1$)
(b) let $\hat{X}=\textrm{diag}\{X, w\}$ be a block-diagonal matrix with $X$ in the upper-left corner and a scaler $w$ in the bottom-right corner. It is easy to see that $\tr(\hat{X})=1 \iff \tr(X)\leq 1$.} to derive an equivalent SDP feasibility instance with variable $\hat{X}$ s.t. $\tr(\hat{X})=1$.  Note, however, the change of $\eps$ to $\eps/R$ in this conversion. Also note one can use an approximate feasibility solver to find a strictly feasible solution, by changing $\eps$ to $\eps / R \tilde{R}$ (see Lemma 18 of Ref. \cite{brandao2016quantum}). The benefit of our choice of (\ref{eqn:SDP}) is its simplicity in presentation, which provides a better intuition behind our techniques and an easy adoption of our SDP solver in learning quantum states.
In contrast to Ref.~\cite{vanApeldoorn2017quantum}, we do not need to formulate the dual program of Eq. (\ref{eqn:SDP}) since our techniques do not rely on it.
We will elaborate more on these points in \sec{tech}.

\subsection{Quantum SDP solvers with optimal dependence on $m$ and $n$}
Existing quantum SDP solvers~\cite{brandao2016quantum,vanApeldoorn2017quantum} have close-to-optimal dependence on some key parameters but poor dependence on others. Seeking optimal parameter dependence has been an important problem in the development of classical SDP solvers and has inspired many new techniques. It is thus well motivated to investigate the optimal parameter dependence in the quantum setting.
Our first contribution is the construction of a quantum SDP solver with the optimal dependence on $m$ and $n$ in the (plain) input model as used by~\cite{brandao2016quantum, vanApeldoorn2017quantum}, given as follows:

\begin{oracle}[Plain model for $A_{j}$]\label{ora:plain}
A quantum oracle, denoted ${\cal P}_A$, such that given the indices $j \in\range{m}$, $k \in [n]$ and $l \in [s]$, computes a bit string representation of the $l$-th non-zero element of the $k$-th row of $A_j$, i.e. the oracle performs the following map:
\begin{align}
|j, k, l, z\>\rightarrow |j, k, l, z \oplus (A_{j})_{kf_{jk}(l)}\>,
\end{align}
with $f_{jk} : [r] \rightarrow [N]$ a function (parametrized by the matrix index $j$ and the row index $k$) which given $l \in [s]$ computes the column index of the $l$-th nonzero entry.
\end{oracle}

Before we move on to our main result, we will define two primitives which will appear in our quantum SDP solvers. Our main result will also be written in terms of the cost for each primitive.

\begin{definition}[trace estimation]
Assume that we have an $s$-sparse $n\times n$ Hermitian matrix $H$ with $\|H\|\leq\Gamma$ and a density matrix $\rho$. Then we define $\mathcal{S}_{\tr}(s,\Gamma,\epsilon)$ and $\mathcal{T}_{\tr}(s,\Gamma,\epsilon)$ as the number of copies of $\rho$ and the time complexity (in terms of oracle call and number of gates) of using the plain model (\ora{plain}) for $H$, respectively, such that one can compute $\tr[H\rho]$ with additive error $\epsilon$ with success probability at least $2/3$.
\end{definition}

\begin{definition}[Gibbs sampling]
Assume that we have an $s$-sparse $n\times n$ Hermitian matrix $H$ with $\|H\|\leq\Gamma$. Then we define $\mathcal{T}_{\text{Gibbs}}(s,\Gamma,\epsilon)$ as the complexity of preparing the Gibbs state $\frac{e^{-H}}{\Tr[e^{-H}]}$ with additive error $\epsilon$ using the plain model (\ora{plain}) for $H$.
\end{definition}

Our main result is as follows.

\begin{theorem}[informal; see~\thm{SDP-feasibility-testing-plain}]\label{thm:SDP-feasibility-testing-plain-intro}
In the plain input model (\ora{plain}), for any $0<\eps<1$, there is a quantum SDP solver for the feasibility problem (\ref{eqn:SDP}) using $\frac{s}{\epsilon^{4}}\tilde{O}\big(\mathcal{S}_{\tr}\big(\frac{s}{\epsilon^{2}},\frac{1}{\epsilon},\epsilon\big)\mathcal{T}_{\text{Gibbs}}\big(\frac{s}{\epsilon^{2}}, \frac{1}{\epsilon},\epsilon\big)+\sqrt{m}\mathcal{T}_{\tr}\big(\frac{s}{\epsilon^{2}},\frac{1}{\epsilon},\epsilon\big)\big)$ quantum gates and queries to \ora{plain}, where $s$ is the sparsity of $A_{j}, j \in\range{m}$.
\end{theorem}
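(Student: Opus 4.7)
The plan is to quantize the matrix multiplicative weights (MMW) meta-algorithm of Arora--Kale for feasibility testing, following the strategy of \cite{brandao2016quantum,vanApeldoorn2017quantum} but routing every subroutine through the two primitives $\mathcal{T}_{\tr}$ (with sample complexity $\mathcal{S}_{\tr}$) and $\mathcal{T}_{\text{Gibbs}}$ appearing in the statement. Concretely, I would maintain at each round $t\leq T$ a Gibbs state $\rho^{(t)}\propto\exp\bigl(-\eta\sum_{s<t}(A_{i^{(s)}}-a_{i^{(s)}}I)\bigr)$, where the $i^{(s)}$ are the violated indices found in earlier rounds. In each round I call a \emph{separation oracle} that either returns a fresh violated index $i^{(t)}$ with $\tr(A_{i^{(t)}}\rho^{(t)})>a_{i^{(t)}}+\epsilon$, or certifies that no such index exists and halts, in which case $\rho^{(t)}\in\mathcal{S}_{\epsilon}$. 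With step size $\eta=\Theta(\epsilon)$, the MMW regret bound forces the oracle to certify feasibility within $T=\tilde{O}(1/\epsilon^{2})$ rounds unless $\mathcal{S}_{0}=\emptyset$, in which case the algorithm outputs fail.

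The quantum speed-up lies entirely in the implementation of this separation oracle. Each single-constraint test is a trace estimation $\tr(A_{i}\rho^{(t)})\approx a_{i}$ to additive error $\epsilon$, which by definition costs $\mathcal{T}_{\tr}$ time and consumes $\mathcal{S}_{\tr}$ copies of $\rho^{(t)}$. A naive Grover search over $m$ constraints would, however, require preparing fresh copies of the Gibbs state inside every amplitude-amplification iteration. Instead I would invoke a fast quantum OR subroutine in the spirit of Harrow et al.\ \cite{harrow2017sequential} (and sharpened in this paper) that tests the $m$ predicates coherently on a single batch of $\mathcal{S}_{\tr}$ copies of $\rho^{(t)}$, so that the per-round cost decouples into $\tilde{O}(\mathcal{S}_{\tr}\mathcal{T}_{\text{Gibbs}})$ for Gibbs preparation and $\tilde{O}(\sqrt{m}\,\mathcal{T}_{\tr})$ for the search itself. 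Each Gibbs copy is produced by the primitive $\mathcal{T}_{\text{Gibbs}}$ applied to $H^{(t)}=\eta\sum_{s<t}(A_{i^{(s)}}-a_{i^{(s)}}I)$, whose norm is bounded by $O(T\eta)=O(1/\epsilon)$ and whose combined sparsity is bounded by $Ts=\tilde{O}(s/\epsilon^{2})$; these are precisely the arguments $(s/\epsilon^{2},1/\epsilon,\epsilon)$ fed into the primitives in the theorem. The queries to the individual $A_{j}$ inside $\mathcal{T}_{\text{Gibbs}}$ and $\mathcal{T}_{\tr}$ are resolved through \ora{plain} in the usual way.

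Multiplying the per-round cost by $T=\tilde{O}(1/\epsilon^{2})$ and absorbing a further $\tilde{O}(s/\epsilon^{2})$ overhead for (i) boosting each of the $\tilde{O}(T\sqrt{m})$ primitive calls to failure probability $1/\poly(Tm)$ by standard repetition, and (ii) describing and querying the $\tilde{O}(s/\epsilon^{2})$-sparse Hamiltonian $H^{(t)}$ once per round, yields the prefactor $s/\epsilon^{4}$ in the bound. The principal obstacle is the fast OR step: one must argue that measuring a fixed batch of Gibbs copies in a carefully chosen coherent basis suffices to drive a $\sqrt{m}$-speed search without collapsing the copies prematurely, and one must ensure that the amplitude-amplification error compounds gracefully with the $\epsilon$ additive error of trace estimation, so that the error budget allocated to each primitive is preserved end-to-end. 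The remaining ingredients (MMW regret analysis, concentration for the primitives, and union-bound failure handling) are by comparison routine once this coherent search subroutine is in place.
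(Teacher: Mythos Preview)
Your proposal is essentially correct and matches the paper's proof: the paper runs the MMW feasibility loop for $T=O(\ln n/\epsilon^{2})$ rounds, prepares $\tilde{O}(\mathcal{S}_{\tr})$ Gibbs copies per round, and uses the fast quantum OR lemma (Lemma~\ref{lem:fast-quantum-OR-cite}) to search the $m$ constraints coherently on a single batch of copies, which is exactly the decoupling you describe; the sparsity and norm bounds $(\tilde{O}(s/\epsilon^{2}),O(1/\epsilon))$ for the accumulated Hamiltonian and the $s/\epsilon^{2}$ overhead for simulating \ora{plain} on the sum from the per-matrix oracles are tracked just as you say. One cosmetic remark: the paper frames the outer loop via the zero-sum game view rather than Arora--Kale primal--dual, but for the feasibility problem (\ref{eqn:SDP}) the two collapse to the same update, so this is not a substantive difference; also, your item (i) on boosting contributes only polylog factors---the $s/\epsilon^{2}$ prefactor comes solely from your item (ii), the cost of answering a single sparse-oracle query to $H^{(t)}$ by querying each of its $\tilde{O}(1/\epsilon^{2})$ summands.
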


When combined with specific instantiation of these primitives (i.e., in our case, we directly make use of results on $\mathcal{S}_{\tr}(s,\Gamma,\epsilon)$ and $\mathcal{T}_{\tr}(s,\Gamma,\epsilon)$ from Ref. ~\cite{brandao2016quantum}, and results on $\mathcal{T}_{\text{Gibbs}}(s,\Gamma,\epsilon)$ from Ref. ~\cite{poulin2009sampling}), we end up with the following concrete parameters:

\begin{corollary}[informal; see \cor{SDP-feasibility-testing-plain}] \label{cor:SDP-feasibility-testing-plain-intro}
In the plain input model (\ora{plain}), for any $0<\eps<1$, there is a quantum SDP solver for the feasibility problem (\ref{eqn:SDP}) using $\tilde{O}(s^2(\frac{\sqrt{m}}{\epsilon^{10}}+\frac{\sqrt{n}}{\epsilon^{12}}))$ quantum gates and queries to \ora{plain}, where $s$ is the sparsity of $A_{j}, j \in [m]$.
\end{corollary}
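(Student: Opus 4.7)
The plan is to obtain the corollary by a direct substitution into the bound of Theorem~\ref{thm:SDP-feasibility-testing-plain-intro}, instantiating the two abstract primitives $\mathcal{S}_{\tr}, \mathcal{T}_{\tr}$ and $\mathcal{T}_{\text{Gibbs}}$ with concrete complexities from the literature. Since the theorem is stated with the primitives evaluated at the parameter choice $(s/\epsilon^2,1/\epsilon,\epsilon)$, the only bookkeeping is to plug in those triples and simplify.

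First, I would quote the trace estimation bounds from Brand\~ao--Svore~\cite{brandao2016quantum}: for an $s'$-sparse Hermitian $H$ with $\|H\|\leq\Gamma'$ accessed through the plain oracle, one can estimate $\tr(H\rho)$ to additive error $\epsilon'$ using $\mathcal{S}_{\tr}(s',\Gamma',\epsilon')=\tilde{O}(\Gamma'^2/\epsilon'^2)$ copies of $\rho$ and total running time $\mathcal{T}_{\tr}(s',\Gamma',\epsilon')$ polynomial in $s',\Gamma',1/\epsilon'$ as worked out there (this uses controlled Hamiltonian simulation followed by phase estimation on $\rho$). Next I would quote the Gibbs-sampling bound of Poulin--Wocjan~\cite{poulin2009sampling}, which prepares $e^{-H}/\tr(e^{-H})$ for an $s'$-sparse $H$ with $\|H\|\leq\Gamma'$ in time $\mathcal{T}_{\text{Gibbs}}(s',\Gamma',\epsilon')=\tilde{O}(\sqrt{n}\,\poly(s',\Gamma',1/\epsilon'))$; this is the only place where $n$ enters (through the Grover-type amplification on the purified Gibbs state), and it is the source of the $\sqrt{n}$ term in the final bound.

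Second, I would substitute $s'=s/\epsilon^2$, $\Gamma'=1/\epsilon$, $\epsilon'=\epsilon$ into each expression and carry out the resulting powers of $\epsilon$. The theorem's cost splits into two additive contributions: the Gibbs-preparation-and-estimation term $\frac{s}{\epsilon^4}\mathcal{S}_{\tr}\cdot\mathcal{T}_{\text{Gibbs}}$, which after substitution scales as $\tilde{O}(s^2\sqrt{n}/\epsilon^{12})$, and the constraint-search term $\frac{s}{\epsilon^4}\sqrt{m}\,\mathcal{T}_{\tr}$, which scales as $\tilde{O}(s^2\sqrt{m}/\epsilon^{10})$. Adding these gives precisely $\tilde{O}\!\left(s^2\bigl(\tfrac{\sqrt{m}}{\epsilon^{10}}+\tfrac{\sqrt{n}}{\epsilon^{12}}\bigr)\right)$.

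The only real subtlety, and what I would treat most carefully, is checking that the stated Gibbs and trace-estimation bounds from~\cite{poulin2009sampling,brandao2016quantum} actually hold at the rescaled sparsity $s/\epsilon^2$ and norm bound $1/\epsilon$ used inside the Arora--Kale/matrix-multiplicative-weights driver (these reflect the rescaled Hamiltonian $-\eta\sum_j w_j A_j$ built up by the solver, not the original $A_j$), and that the error parameter $\epsilon$ passed to the primitives is small enough so that the MMW regret analysis in the parent theorem still goes through. Once those consistency checks are in place, the corollary is a calculation; the polynomial blow-up in $1/\epsilon$ is the predictable cost of amplifying each primitive's error tolerance to the level required by the outer iteration count, and no new ideas are needed beyond those already in the proof of Theorem~\ref{thm:SDP-feasibility-testing-plain-intro}.
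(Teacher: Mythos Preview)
Your proposal is correct and follows essentially the same approach as the paper: instantiate the abstract primitives in Theorem~\ref{thm:SDP-feasibility-testing-plain-intro} with the concrete bounds from~\cite{brandao2016quantum} (trace estimation) and~\cite{poulin2009sampling} (Gibbs sampling), substitute the rescaled parameters $(s/\epsilon^{2},1/\epsilon,\epsilon)$, and simplify. The paper is slightly more explicit, recording the precise bounds $\mathcal{S}_{\tr}(s,1,\epsilon)=O(1/\epsilon^{2})$, $\mathcal{T}_{\tr}(s,1,\epsilon)=O(s/\epsilon^{2})$, and $\mathcal{T}_{\text{Gibbs}}(s,\Gamma,\epsilon)=\tilde{O}(s\Gamma\sqrt{n}/\epsilon)$ (together with the normalization $\mathcal{T}_{\tr}(s,\Gamma,\epsilon)=\mathcal{T}_{\tr}(s,1,\epsilon/\Gamma)$) before plugging in, but the calculation and the resulting exponents are identical to what you describe.
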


Comparing to prior art, our main contribution is to decouple the dependence on $m$ and $n$, which used to be $O(\sqrt{mn})$ and now becomes $O(\sqrt{m} +\sqrt{n})$. Note that the $(\sqrt{m}+\sqrt{n})$ dependence is optimal due to the quantum lower bound proven in Ref. \cite{brandao2016quantum}.

\begin{remark}
Even though our result achieves the optimal dependence on $m$ and $n$, it is nontrivial to obtain quantum speed-ups by directly applying our quantum SDP solvers to SDP instances from classical combinatorial problems. The major obstacle is the poly-dependence on $1/\eps$, whereas, for interesting SDP instances such as Max-Cut,  $1/\eps$ is linear in $n$. In fact, the general framework of the classical Arora-Kale SDP solver also suffers from the poly-dependence on $1/\eps$ and cannot be applied directly either. Instead, one needs to specialize the design of SDP solvers for each instance to achieve better time complexity. 

Extending this idea to quantum seems challenging. One difficulty is that known classical approaches require explicit information of intermediate states, which requires $\Omega(n)$ time and space even to store.  It is not clear how one can directly adapt classical approaches on intermediate states when stored as amplitudes in quantum states, which is the case for our current SDP solvers. It seems to us that a resolution of the problem might require an independent tool beyond the scope of this paper. We view this as an important direction for future work. 

However, our quantum SDP solvers are sufficient for instances with mild $1/\eps$, which are natural in the context of quantum information, such as learnability of the quantum state problem (elaborated in~\sec{learn_intro}) as well as examples in~\cite{vAG18}. For those cases, we do establish a quantum speed-up as any classical algorithm needs at least linear time in $n$ and/or $m$. 
\end{remark}

\subsection{Quantum SDP solvers with quantum inputs}
Given the optimality of the algorithm presented before (in terms of $m$ and $n$), a natural question is to ask about the existence of alternative input models, \emph{which can be justified for specific applications, and at the same time allows more efficient quantum SDP solvers}. This is certainly a challenging question, but we can get inspiration from the application of SDPs in quantum complexity theory (e.g., Refs.~\cite{jain2011qip,gutoski2012parallel}) and quantum information (e.g., Refs.~\cite{aaronson2007learnability, aaronson2017quantum}). In these settings, input matrices of SDP instances, with dimension $2^\ell$, are typically quantum states and/or measurements generated by $\poly(\ell)$-size circuits on $\ell$ qubits.
For the sake of these applications, it might be reasonable to equip quantum SDP solvers with the ability to leverage these circuit information, rather than merely allowing access to the entries of the input matrices.

In this paper, we propose a \emph{truly} quantum input model in which we can construct quantum SDP solvers with running time only \emph{poly-logarithmic} in the dimension.
We note that such proposal was mentioned in an earlier version of Ref.~\cite{brandao2016quantum}, whose precise mathematical form and construction of quantum SDP solvers were unfortunately incorrect, and later removed.
Note that since we consider a non-standard input model in this section, our results are incomparable to those in the plain input model. We argue for the relevance of our quantum input model, by considering an applications of the framework to the problem of learning quantum states in~\sec{learn_intro}.

\hd{Quantum input model.}
Consider a specific setting in which we are given decompositions of each $A_j$: $A_j=A_j^+ -A_j^-$, where $A_j^+, A_j^- \succeq 0$. (For instance, a natural choice is to let $A_j^+$ (resp. $A_j^-$) be the positive (resp. negative) part of $A$.)

\begin{oracle}[Oracle for traces of $A_{j}$] \label{ora:1}
A quantum oracle (unitary), denoted $O_{\Tr}$ (and its inverse $O_{\Tr}^{\dagger}$), such that for any $j\in\range{m}$,
\begin{align}
O_{\Tr}|j\>|0\>|0\>=|j\>|\Tr[A_{j}^{+}]\>|\Tr[A_{j}^{-}]\>,
\end{align}
where the real values $\Tr[A_{j}^{+}]$ and $\Tr[A_{j}^{-}]$ are encoded into their binary representations.
\end{oracle}

\begin{oracle}[Oracle for preparing $A_{j}$] \label{ora:2}
A quantum oracle (unitary), denoted $O$ (and its inverse $O^\dagger$), which acts on $\C^{m}\otimes(\C^{n}\otimes\C^{n}) \otimes (\C^{n}\otimes\C^{n})$ such that for any $j\in\range{m}$,
\begin{align}
O|j\>|0\>|0\>=|j\>|\psi_{j}^{+}\>|\psi_{j}^{-}\>,
\end{align}
where $|\psi_{j}^{+}\>, |\psi_{j}^{-}\> \in \C^{n}\otimes\C^{n}$ are any purifications of $\frac{A_{j}^{+}}{\Tr[A_{j}^{+}]}, \frac{A_{j}^{-}}{\Tr[A_{j}^{-}]}$, respectively.
\end{oracle}

\begin{oracle}[Oracle for $a_{j}$] \label{ora:3}
A quantum oracle (unitary), denoted $O_{a}$ (and its inverse $O_{a}^\dagger$), such that for any $j\in\range{m}$,
\begin{align}
O_{a}|j\>|0\>=|j\>|a_{j}\>,
\end{align}
where the real value $a_{j}$ is encoded into its binary representation.
\end{oracle}

Throughout the paper, let us assume that $A_{j}$ has rank at most $r$ for all $j\in\range{m}$ and $\Tr[A_{j}^{+}]+\Tr[A_{j}^{-}]\leq B$. The parameter $B$ is therefore an upper bound to the trace-norm of all input matrices which we assume is given as an input of the problem. Similar to the plain input model, we will define the same two primitives and their associated costs in the quantum input model.

\begin{definition}[trace estimation]\label{defn:quantum-trace-main}
We define $\mathcal{S}_{\tr}(B,\epsilon)$ and $\mathcal{T}_{\tr}(B,\epsilon)$ as the sample complexity of a state $\rho\in\C^{n\times n}$ and the gate complexity of using the quantum input oracles (\ora{1}, \ora{2}, \ora{3}), respectively, for the fastest quantum algorithm that distinguishes with success probability at least $1-O(1/m)$ whether for a fixed $j\in\range{m}$, $\tr(A_{j} \rho)>a_{j}+\epsilon$ or $\tr(A_{j} \rho)\leq a_{j}$.
 \end{definition}

\begin{definition}[Gibbs sampling]\label{defn:quantum-Gibbs-main}
Assume that $K = K^+ - K^-$, where $K^{\pm}=\sum_{j \in S} c_j A_j^{\pm}$, $c_j>0$, $S\subseteq\range{m}$ and $|S|\leq\Phi$, and that $K^+$, $K^-$ have rank at most $r_{K}$. Moreover, assume that $\tr(K^+) + \tr(K^-) \leq B_{K}$ for some $B_{K}$. Then we define $\mathcal{T}_{\text{Gibbs}}(r_{K},\Phi,B_{K},\epsilon)$ as the gate complexity of preparing the Gibbs state $\rho_{G}=\exp(-K)/\tr(\exp(-K))$ to $\epsilon$ precision in trace distance using \ora{1}, \ora{2}, and \ora{3}.
\end{definition}

Our main result in the quantum input model is as follows.

\begin{theorem}[informal; see \thm{SDP-feasibility-testing}] \label{thm:SDP-feasibility-testing-intro}
For any $\epsilon>0$, there is a quantum algorithm for the approximate feasibility of the SDP using at most $\frac{1}{\epsilon^{2}}\tilde{O}\big(\mathcal{S}_{\tr}(B,\epsilon)\mathcal{T}_{\text{Gibbs}}\big(\frac{r}{\epsilon^{2}},\frac{1}{\epsilon^{2}},\frac{B}{\epsilon},\epsilon\big)+\sqrt{m}\mathcal{T}_{\tr}(B,\epsilon)\big)$ quantum gates and queries to \ora{1}, \ora{2}, and \ora{3}.
\end{theorem}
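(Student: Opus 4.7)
The plan is to quantize the matrix multiplicative weight update (MMWU) method in exactly the same high-level shape as the plain-model algorithm of \thm{SDP-feasibility-testing-plain-intro}, but with each primitive re-implemented through the quantum input oracles \ora{1}, \ora{2}, and \ora{3}. I would run $T = \tilde{O}(1/\epsilon^{2})$ MMWU rounds with learning rate $\eta = \Theta(\epsilon)$, maintaining at round $t$ the Gibbs-type iterate $X_{t}\propto\exp(-\eta K_{t-1})$, where $K_{0}=0$ and $K_{t}=K_{t-1}+A_{j_{t}}$ for $j_{t}$ any constraint currently violated by $X_{t}$. The standard MMWU regret analysis, which uses only the width bound $\|A_{j}\|\leq 1$, guarantees that whenever $\mathcal{S}_{0}\neq\emptyset$ at least one of the $X_{t}$ must lie in $\mathcal{S}_{\epsilon}$; if no violated constraint is ever found we output fail.

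Inside each round two quantum subroutines carry the cost. First, locating a violated constraint (or certifying there is none) is done by amplitude-amplified search over $j\in\range{m}$ whose marking predicate, by \defn{quantum-trace-main}, is ``$\tr(A_{j}X_{t})>a_{j}+\epsilon/2$,'' decided with failure probability $O(1/m)$ using $\mathcal{S}_{\tr}(B,\epsilon)$ copies of $X_{t}$ and $\mathcal{T}_{\tr}(B,\epsilon)$ gates; amplification then finds a violator (or certifies none) in $\tilde{O}(\sqrt{m}\,\mathcal{T}_{\tr}(B,\epsilon))$ gates per round. Second, each fresh copy of $X_{t+1}$ is produced by one call to the low-rank Gibbs sampler of \defn{quantum-Gibbs-main}. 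The key parameter accounting is that $\eta K_{t}=\eta K_{t}^{+}-\eta K_{t}^{-}$, where each $K_{t}^{\pm}$ is a nonnegative combination of at most $T = 1/\epsilon^{2}$ of the supplied $A_{j}^{\pm}$. Hence $\eta K_{t}$ has rank at most $r_{K}=rT=r/\epsilon^{2}$, support size $\Phi = 1/\epsilon^{2}$, and total trace norm $\tr(\eta K_{t}^{+})+\tr(\eta K_{t}^{-})\leq \eta T B = \tilde{O}(B/\epsilon)$, which matches the arguments $(r/\epsilon^{2},1/\epsilon^{2},B/\epsilon,\epsilon)$ of the $\mathcal{T}_{\text{Gibbs}}$ term in the claimed complexity. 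Multiplying per-round cost by $T=\tilde{O}(1/\epsilon^{2})$ then yields the stated bound.

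The main obstacle I foresee is bridging probabilistic, mixed-state trace estimation with coherent Grover amplification: the oracle we are amplifying does not act as a pure-state reflection, because it consumes copies of the mixed Gibbs state $X_{t}$ and evaluates the predicate only approximately. Standard amplitude amplification therefore has to be replaced by a mixed-input OR-type test (the ``fast quantum OR'' lemma advertised in the abstract), followed by a Grover-style isolation step to return the actual violated index $j_{t}$. The secondary issue is error propagation, since each Gibbs copy has trace-distance error up to $\epsilon$, each trace estimate has additive error up to $\epsilon$, and there are $\tilde{O}(T\sqrt{m})$ queries overall; I would set all internal precisions at polylogarithmically suppressed levels and apply a union bound, absorbing the losses into the $\tilde{O}(\cdot)$. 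Once these two ingredients are verified, correctness reduces to the classical MMWU guarantee applied to a separation oracle whose outputs satisfy the approximate condition with high probability, and the complexity bound follows from the cost decomposition above.
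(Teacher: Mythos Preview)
Your proposal is correct and follows essentially the same route as the paper's proof of \thm{SDP-feasibility-testing}: run $T=\tilde{O}(1/\epsilon^{2})$ MMWU rounds, prepare $\tilde{O}(\mathcal{S}_{\tr}(B,\epsilon))$ Gibbs copies per round via \defn{quantum-Gibbs-main} with exactly the parameter accounting you give, and use the fast quantum OR lemma on a single batch of copies to detect a violation in $\tilde{O}(\sqrt{m}\,\mathcal{T}_{\tr}(B,\epsilon))$ gates; the product $T\cdot(\text{per-round cost})$ gives the stated bound. Two small remarks: your sentence ``if no violated constraint is ever found we output fail'' has the cases reversed (absence of a violation means the current $X_{t}$ is feasible; running all $T$ rounds with a violation each time is what triggers \textsc{fail}), and the paper recovers the actual index $j_{t}$ by binary search over the OR test rather than a separate Grover isolation, though either works within the same $\tilde{O}(\sqrt{m})$ budget.
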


Contrary to the plain model setting, the quantum input model is a completely new setting so that we have to construct these two primitive by ourselves. In particular, we give a construction of trace estimation in \lem{oracle-implementation-SWAP} with $\mathcal{S}_{\tr}(B,\epsilon)=\mathcal{T}_{\tr}(B,\epsilon)=O(B^{2}\log m/\epsilon^{2})$ and a construction of Gibbs sampling in \lem{gibbs_prep} with $\mathcal{T}_{\text{Gibbs}}(r_{K},\Phi,B_{K},\epsilon)=O(\Phi\cdot\poly(\log n, r_{K}, B_{K}, \epsilon^{-1}))$. As a result,

\begin{corollary}[informal; see \cor{SDP-feasibility-testing-quantum}]\label{cor:SDP-feasibility-testing-quantum-intro}
For any $\epsilon>0$, there is a quantum algorithm for the feasibility of the SDP using at most $(\sqrt{m}+\poly(r))\cdot\poly(\log m,\log n,B,\epsilon^{-1})$ quantum gates and queries to \ora{1}, \ora{2}, and \ora{3}.
\end{corollary}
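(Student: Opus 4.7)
The plan is to simply substitute the concrete complexities of the two primitives (trace estimation and Gibbs sampling) into the general bound given by \thm{SDP-feasibility-testing-intro} and verify that the arithmetic collapses to the claimed form. So the corollary is derived from the theorem essentially by bookkeeping, modulo invoking the two constructions (\lem{oracle-implementation-SWAP} and \lem{gibbs_prep}) whose proofs are the genuinely new ingredients.

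First I would recall the two instantiations established elsewhere in the paper. For trace estimation in the quantum input model, \lem{oracle-implementation-SWAP} gives
\[
\mathcal{S}_{\tr}(B,\epsilon) \;=\; \mathcal{T}_{\tr}(B,\epsilon) \;=\; O\!\left(\frac{B^{2}\log m}{\epsilon^{2}}\right),
\]
so the contribution $\tfrac{1}{\epsilon^{2}}\sqrt{m}\,\mathcal{T}_{\tr}(B,\epsilon)$ from the second term of \thm{SDP-feasibility-testing-intro} is immediately $\sqrt{m}\cdot\poly(\log m, B, \epsilon^{-1})$, as required.

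Next I would handle the first term, which is the more delicate one because it is where the rank $r$ enters. Invoking \lem{gibbs_prep} with the parameter choices prescribed by \thm{SDP-feasibility-testing-intro}, namely $r_{K}=r/\epsilon^{2}$, $\Phi=1/\epsilon^{2}$, and $B_{K}=B/\epsilon$, yields
\[
\mathcal{T}_{\text{Gibbs}}\!\left(\tfrac{r}{\epsilon^{2}},\tfrac{1}{\epsilon^{2}},\tfrac{B}{\epsilon},\epsilon\right) \;=\; O\!\left(\tfrac{1}{\epsilon^{2}}\cdot\poly\!\left(\log n,\tfrac{r}{\epsilon^{2}},\tfrac{B}{\epsilon},\tfrac{1}{\epsilon}\right)\right) \;=\; \poly(r)\cdot\poly(\log n, B, \epsilon^{-1}).
\]
Multiplying this by the prefactor $\tfrac{1}{\epsilon^{2}}\,\mathcal{S}_{\tr}(B,\epsilon)=\tilde{O}(B^{2}/\epsilon^{4})$ and collecting $\log m$ factors into the polylog gives a first-term bound of $\poly(r)\cdot\poly(\log m,\log n,B,\epsilon^{-1})$. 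Summing the two contributions yields the stated $(\sqrt{m}+\poly(r))\cdot\poly(\log m,\log n,B,\epsilon^{-1})$.

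Two bookkeeping checks are worth doing carefully. (i) The hypothesis of \lem{gibbs_prep} requires the intermediate Hamiltonian to be a conic combination of at most $\Phi$ of the input matrices with bounded trace-norm; I would confirm that the Hamiltonians constructed inside the matrix multiplicative-weight loop of \thm{SDP-feasibility-testing-intro} indeed satisfy $\Phi=O(1/\epsilon^{2})$ iterations and $r_{K}\le r/\epsilon^{2}$, which follows from the standard $O(1/\epsilon^{2})$ iteration bound and the fact that rank is subadditive under sums. (ii) The trace-estimation routine must succeed with probability $1-O(1/m)$ per call, which \lem{oracle-implementation-SWAP} attains via $O(\log m)$ repetitions of SWAP-test-style sampling, accounting for the $\log m$ factor above.

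The main obstacle, and the only step that is not routine arithmetic, is really the Gibbs sampler of \lem{gibbs_prep}: obtaining a $\poly(r,\log n)$-time preparation of $e^{-K}/\tr(e^{-K})$ from quantum-state encodings of low-rank summands. This is where the quantum principal component analysis machinery and the density-matrix exponentiation techniques advertised in the abstract do the heavy lifting, and the corollary inherits its polylog-$n$ dependence entirely from that construction. Assuming that lemma, the derivation above is a direct substitution.
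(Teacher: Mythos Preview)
Your proposal is correct and matches the paper's own proof almost exactly: the paper simply plugs $\mathcal{S}_{\tr}(B,\epsilon)=\mathcal{T}_{\tr}(B,\epsilon)=O(B^{2}\log m/\epsilon^{2})$ and $\mathcal{T}_{\text{Gibbs}}(r_{K},\Phi,B_{K},\epsilon)=O(\Phi\cdot\poly(\log n, r_{K}, B_{K}, \epsilon^{-1}))$ into \thm{SDP-feasibility-testing-intro} and collapses the expression to $(\sqrt{m}+\poly(r))\cdot\poly(\log m,\log n,B,\epsilon^{-1})$. Your additional bookkeeping checks (on the iteration count $\Phi=O(1/\epsilon^{2})$, rank subadditivity, and the $\log m$ amplification) are correct and in fact spell out details the paper leaves implicit in its one-line derivation.
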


We also show the square-root dependence on $m$ is also optimal by establishing the following result:
\begin{theorem}[lower bound on \cor{SDP-feasibility-testing-quantum-intro}]\label{thm:SDP-feasibility-lower-bound}
There exists an SDP feasibility testing problem such that $B,r,\epsilon=\Theta(1)$, and solving the problem requires $\Omega(\sqrt{m})$ calls to \ora{1}, \ora{2}, and \ora{3}.
\end{theorem}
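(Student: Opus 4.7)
The plan is to reduce the $m$-bit OR problem (unstructured search) to approximate SDP feasibility in the quantum input model. Since OR has quantum query complexity $\Omega(\sqrt{m})$ (Bennett--Bernstein--Brassard--Vazirani), it suffices to exhibit a family of SDP instances with $B,r,\epsilon = \Theta(1)$ for which every call to \ora{1}, \ora{2}, or \ora{3} can be simulated by $O(1)$ queries to the standard bit-oracle $Q_x: |j\rangle|b\rangle \mapsto |j\rangle|b\oplus x_j\rangle$ of the OR input $x \in \{0,1\}^m$.

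For the hard instance, take $n = 2$ and, for each $j \in [m]$, set
\[ A_j \;=\; |x_j\rangle\langle x_j| \;-\; \tfrac{1}{2}\,|1-x_j\rangle\langle 1-x_j|, \qquad a_j \;=\; 0. \]
Then $A_j^+ = |x_j\rangle\langle x_j|$ and $A_j^- = \tfrac{1}{2}|1-x_j\rangle\langle 1-x_j|$ are both rank one with strictly positive trace, and $\tr(A_j^+) + \tr(A_j^-) = 3/2$, so $r = 1$ and $B = 3/2$. Parametrising any $2 \times 2$ density matrix by $p = X_{00}$, the relaxed constraint $\tr(A_j X) \le \epsilon$ reads $p \le (1+2\epsilon)/3$ when $x_j = 0$ and $p \ge (2-2\epsilon)/3$ when $x_j = 1$. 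Hence the all-zero input has $\mathcal{S}_0 \ne \emptyset$ (witnessed by $p = 0$), while any input with some $x_{j^\ast} = 1$ has $\mathcal{S}_\epsilon = \emptyset$ for every $\epsilon < 1/4$; fixing $\epsilon = 1/5$ suffices. An approximate feasibility tester must therefore output a feasible $X$ on the all-zero input and ``fail'' on every one-marked input, which is exactly the decision required by OR.

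It remains to implement the SDP oracles cheaply. Both $O_{\Tr}$ and $O_a$ return only $x$-independent constants ($\tr(A_j^+) = 1$, $\tr(A_j^-) = 1/2$, $a_j = 0$), so they use no queries to $x$. For $O$ we prepare the explicit purifications $|\psi_j^+\rangle = |x_j\rangle|0\rangle$ and $|\psi_j^-\rangle = |1-x_j\rangle|0\rangle$, which costs two applications of $Q_x$ and one extra bit-flip; the adjoints are obtained by uncomputation. Each SDP-oracle call therefore costs $O(1)$ queries to $x$, so a $T$-query feasibility algorithm with success probability $\ge 2/3$ yields an $O(T)$-query algorithm for OR$_m$ with the same success probability, and the BBBV lower bound forces $T = \Omega(\sqrt{m})$.

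The main technical point is juggling four constraints simultaneously: both $A_j^\pm$ must have strictly positive trace (so \ora{2} is well defined); the parameters $B, r, \epsilon$ must all be absolute constants; the infeasibility gap in the marked case must strictly exceed the allowed slack $\epsilon$; and each oracle call must depend on $x$ through only $O(1)$ bit-queries. The instance above is calibrated precisely so that all four conditions hold at once.
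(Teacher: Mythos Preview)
Your proof is correct and takes a different route from the paper. The paper encodes the hidden search index in the scalars $a_j$: all $A_j$ are set to the \emph{same} rank-one projector (so \ora{2} carries no information), while a single $a_{j^*} = -1/2$ among $a_j = 1/2$ makes the instance infeasible exactly when a marked index exists; the $\Omega(\sqrt m)$ bound then lands on \ora{3}. You instead hold $a_j \equiv 0$ and all traces constant, so \ora{1} and \ora{3} are content-free, and push the hidden bit $x_j$ into the matrices $A_j$ themselves, routing the hardness through \ora{2}. Both reductions bottom out in BBBV, but yours has the advantage that $A_j^+$ and $A_j^-$ each have strictly positive trace---the paper sets $A_j^- = 0$, which leaves $A_j^-/\tr(A_j^-)$ formally undefined in the statement of \ora{2}, a wrinkle you explicitly avoid.

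One small overclaim: the sentence ``any input with some $x_{j^\ast} = 1$ has $\mathcal{S}_\epsilon = \emptyset$'' fails when \emph{all} $x_j = 1$, since then every constraint reads $p \ge (2-2\epsilon)/3$ and $p = 1$ is feasible. This does not damage the argument, because the BBBV hard distribution is $0^m$ versus exactly one marked bit and your reduction is sound on those instances---which is exactly what you invoke in the next sentence (``every one-marked input''). Just tighten the earlier sentence to match.
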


\hd{Comparison between the plain model and the quantum input model.}
In the quantum input model (\ora{1}, \ora{2}, and \ora{3}), our quantum SDP solver has a \emph{poly-logarithmic} dependence on $n$ (but polynomial in $r$) and a \emph{square-root} dependence on $m$, while in the plain input model (\ora{plain}), the dependence on $n$ needs to be $\Omega(\sqrt{n})$~\cite{brandao2016quantum}. It is also worth mentioning that our quantum SDP solver in \cor{SDP-feasibility-testing-quantum-intro} does \emph{not} assume the \emph{sparsity} of $A_i$'s, which are crucial for the quantum SDP solvers with the plain model (such as \cor{SDP-feasibility-testing-plain-intro} and Refs.~\cite{brandao2016quantum,vanApeldoorn2017quantum}). This is because the quantum input models provide an alternative way to address the technical difficulty that was resolved by the sparsity condition (namely efficient algorithms for Hamiltonian evolution associated with the input matrices of the SDP).

\hd{Comparison between quantum and classical input models.}
The poly-logarithmic dependence on $n$ in \cor{SDP-feasibility-testing-quantum-intro} is intriguing and suggests that quantum computers might offer exponential speed-ups for some SDP instances. However one has to be cautious as the input model we consider is inherently quantum, so it is incomparable to classical SDP solvers. As suggested to us by Aram Harrow (personal communication), we could consider a classical setting in which we get as input all inner products between all eigenvectors of the input matrices. Then in that case one could solve the problem classically in time $\poly(r, m, 1/\epsilon)$ (essentially using Jaynes' principle which will be discussed in Section 1.5 to reduce the problem to a SDP of dimension $\poly(r)$). We have not formalized this approach, and there seems to be some technical problems doing so when the input matrices have close-by eigenvalues. However Harrow's observation shows the importance of justifying the input model in terms of natural applications to argue for the relevance of the run time obtained. We present one application of it in Section 1.5; more applications are given in Ref. \cite{vAG18}.

Furthermore, several quantum-inspired classical algorithms were recently proposed originated from Tang \cite{tang2019quantum}. Such classical algorithms assume the following sampling access:
\begin{definition}[Sampling access]\label{defn:sampling-informal}
  Let $A\in \C^{n \times n}$ be a matrix. We say that we have the \emph{sampling access} to $A$ if we can
  \begin{enumerate}
    \item\label{sample:row} sample a row index $i \in [n]$ of $A$ where the probability of row $i$ being chosen is $\frac{\|A_{i\cdot}\|^2}{\|A\|_{F}^2}$, and\footnote{Here $\|A\|_{F}$ is the Frobenius norm of $A$ and $\|A_{i\cdot}\|$ is the $\ell_{2}$ norm of the $i^{\text{th}}$ row of $A$.}
    \item\label{sample:element} for all $i \in [n]$, sample an index $j \in [n]$ where the probability of $j$ being chosen is $\frac{|A_{ij}|^2}{\|A_{i\cdot}\|^2}$
  \end{enumerate}
  with time and query complexity $O(\poly(\log n))$ for each sampling.
\end{definition}
In particular, we notice that Ref. \cite{chia2019quantum} recently gave a classical SDP solver for \eqn{SDP} with complexity $O(m\cdot\poly(\log n, r, \epsilon^{-1}))$, given the above sampling access to $A_{1},\ldots,A_{m}$. We point out that this result is incomparable to \cor{SDP-feasibility-testing-quantum-intro} because the sampling access (\defn{sampling-informal}) and our quantum state model (\ora{1}, \ora{2}, and \ora{3}) are incomparable. Nevertheless, it reminds us that under various input models, the speedup of quantum SDP solvers (compared to their classical counterparts) can also vary.

\subsection{Related works on quantum SDP solvers}
Previous quantum SDP solvers~\cite{brandao2016quantum, vanApeldoorn2017quantum} focus on the plain input model. A major contribution of ours is to improve the dependence $O(\sqrt{mn})$ to $O(\sqrt{m} +\sqrt{n})$ (ignoring dependence on other parameters) which is optimal given the lower bound $\Omega(\sqrt{m}+\sqrt{n})$ in \cite{brandao2016quantum}. To that end, we have also made a few technical contributions, including bringing in a new SDP solving framework and a fast version of quantum OR lemma (\lem{fast-quantum-OR-cite}), which will be elaborated in~\sec{tech}.

The quantum input model was briefly mentioned in an earlier version of~\cite{brandao2016quantum}. The construction of quantum SDP solvers under the quantum input model therein was unfortunately incorrect. We provide the first rigorous mathematical formulation of the quantum input model and its justification in the context of learning quantum states (see~\sec{learn_intro}). We also provide a construction of quantum SDP solvers in this model with a rigorous analysis.  Moreover, we construct the first Gibbs state sampler with quantum inputs (\lem{gibbs_prep}).

Subsequent to a previous version of this paper, an independent interesting result by van Apeldoorn and Gily{\'e}n~\cite{vAG18} has improved the complexity of trace-estimation and Gibbs sampling. After a personal communication~\cite{dewolf2017personal} introducing our fast version of the quantum OR lemma, the authors of Ref.~\cite{vAG18} observed independently that the application of the quantum OR lemma \cite{harrow2017sequential} can be applied to decouple the dependence of $m$ and $n$. As a result, Ref. \cite{vAG18} improved the complexity of \cor{SDP-feasibility-testing-plain-intro} to $\tilde{O}(s(\frac{\sqrt{m}}{\epsilon^{4}}+\frac{\sqrt{n}}{\epsilon^{5}}))$ in the quantum operator model, a stronger input model than the plain one proposed by Ref. \cite{vAG18}. Using novel techniques, it also has improved the complexity of \cor{SDP-feasibility-testing-quantum-intro} to $\tilde{O}(\frac{B\sqrt{m}}{\epsilon^{4}}+\frac{B^{3.5}}{\epsilon^{7.5}})$ in the quantum input model. Note there is no explicit dependence on the rank $r$, which is an important advance (though it can be argued that rank $r$ is implicitly included in the parameter $B$).

\subsection{Techniques}\label{sec:tech}
At a high level, and in similarity to Refs. \cite{brandao2016quantum, vanApeldoorn2017quantum}, our quantum SDP solver can be seen as a "quantized" version of classical SDP solvers based on the matrix multiplicative weight update (MMWU) method~\cite{v008a006}. In particular, we will leverage quantum Gibbs samplers as the main source of quantum speed-ups. In Refs.~\cite{brandao2016quantum, vanApeldoorn2017quantum}, quantum Gibbs samplers with quadratic speed-ups (e.g., \cite{poulin2009sampling,chowdhury2016quantum}) have been exploited to replace the classical Gibbs state calculation step in~\cite{v008a006}. Because the number of iterations in MMWU is poly-logarithmic in terms of the input size, the use of quantum Gibbs samplers, together with a few other tricks, leads to the overall quadratic quantum speed-up.

However, there are a few key differences (our major technical contributions) which are essential for our improvements.

\hd{Zero-sum game approach for MMW.} Our quantum SDP solvers do not follow the primal-dual approach in Arora-Kale's SDP solver~\cite{arora2007combinatorial} which is the classical counterpart of previous quantum SDP solvers~\cite{brandao2016quantum,vanApeldoorn2017quantum}. Instead, we follow a zero-sum game framework to solve SDP feasibility problems, which is also based on the MMWU method (details in \append{SDP}).
This framework has appeared in the classical literature (e.g.,~\cite{Hazan}) and has already been used to in semidefinite programs of relevance in quantum complexity theory (e.g.,~\cite{Wu10, gutoski2012parallel, LRS15}).
Let us briefly describe how the zero-sum game framework works when solving the SDP feasibility problem (\ref{eqn:SDP}).

Assume there are two players. Player 1 wants to provide a feasible $X \in \mathcal{S}_{\epsilon}$. Player 2, on the other side, wants to find any violation of any proposed $X$, which can be formulated as follows.
\begin{oracle}[Search for violation] \label{ora:violation-intro}
Inputs a density matrix $X$, outputs an $i\in\range{m}$ such that $\tr(A_{i} X)>a_{i}+\epsilon$. If no such $i$ exists, output "FEASIBLE".
\end{oracle}

If the original problem is feasible, there exists a feasible point $X_0$ (provided by Player 1) such that there is no violation of $X_0$ that can be found by Player 2 (i.e., \ora{violation-intro}). This actually refers to an \emph{equilibrium} point of the zero-sum game, which can also be approximated by the matrix multiplicative weight update method~\cite{v008a006}.

We argue that there are a few advantages of adopting this framework. One prominent example is its simplicity, which perhaps provides more intuition than the primal-dual approach. Together with our choice of the approximate feasibility problem, our presentation is simple both conceptually and technically (indeed, the simplicity of this framework has led to the development of the fast quantum OR lemma, another main technical contribution of ours.) Another example is that the zero-sum game approach does not make use of the dual program of SDPs and thus there is no dependence on the size of any dual solution. The game approach also admits an intuitive application of our SDP solvers to learning quantum states~\sec{learn_intro}, which coincides with the approach adopted by~\cite{LRS15} in a similar context.

One might wonder whether the simplicity of this framework will restrict the efficiency of SDP solvers. As indicated by the independent work of van Apeldoorn and Gily{\'e}n~\cite{vAG18} which has achieved the same complexity of quantum SDP solvers following both the primal-dual approach and the zero-sum approach, we conclude that it is not the case at least up to our current knowledge.

\hd{Fast quantum OR lemma.} We now outlines what is the main idea to find  a solution to \ora{violation-intro} efficiently. Roughly speaking, the idea behind previous quantum SDP solvers~\cite{brandao2016quantum, vanApeldoorn2017quantum} when applied to this context was to generate a new copy of a quantum state $X$ for each time one would query the expectation value of one of the input matrices on it. The cost of generating $X$ (i.e., Gibbs sampling) is $O(\sqrt{n})$ (ignoring the dependence on other parameters) and one can use a Grover-search-like approach to test for $m$ constraints with $O(\sqrt{m})$ iterations. The resultant cost is then $O(\sqrt{mn})$. Our key observation is to leverage the quantum OR lemma~\cite{harrow2017sequential} to detect a single violation with only a single copy of $X$.

At a high level, given a single copy of any state $\rho$ and $m$ projections $\Lambda_1, \ldots, \Lambda_m$, the quantum OR lemma describes a procedure to distinguish between the case that $\exists\,i\in\range{m}$ s.t. $\Tr[\rho\Lambda_{i}]$ is very large, or $\frac{1}{m}\sum_{i=1}^{m}\Tr[\rho\Lambda_{i}]$ is very small.
It is not hard to see that with some gap-amplification step and a search-to-decision reduction, the above procedure will output a violation $i^*$ if any.
By using quantum OR lemma, one can already decouple the cost of generating $X$ and the number of iterations in violation-detection.

Unfortunately,  Ref.~\cite{harrow2017sequential} has only been focusing on the use of a single copy of $\rho$, while its gate complexity is $O(m)$ for $m$ projections.
To optimize the gate complexity, we develop the following fast implementation of the quantum OR lemma with gate complexity $O(\sqrt{m})$, using ideas from the fast amplification technique in~\cite{nagaj2009fast}. Overall, this leads to a complexity of $O(\sqrt{m} +\sqrt{n})$.

\begin{lemma}[informal; see \lem{fast-quantum-OR-cite}]
Let $\Lambda_{1},\ldots,\Lambda_{m}$ be projections, and fix parameters $0<\varepsilon\leq 1/2$ and $\varphi, \xi>0$. Let $\rho$ be a state such that either $\exists\,j\in\range{m}$ $\Tr[\rho\Lambda_{j}]\geq 1-\varepsilon$, or $\frac{1}{m}\sum_{j=1}^{m}\Tr[\rho\Lambda_{j}]\leq\varphi$. There is a test using one copy of $\rho$ and $O(\xi^{-1}\sqrt{m}(p+\poly(\log m)))$ operations such that: in the former case, accepts with probability at least $(1-\varepsilon)^{2}/4 - \xi$; in the latter case, accepts with probability at most $3\varphi m + \xi$.
\end{lemma}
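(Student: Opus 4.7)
The plan is to execute the quantum OR test coherently over the $m$ projections on a single copy of $\rho$, and then amplify the resulting $\Theta(1/m)$ raw success probability in the YES case using the phase-estimation-based ``fast amplification'' of \cite{nagaj2009fast}, which --- in contrast to standard amplitude amplification --- does not require a preparation unitary for $\rho$. I would start by purifying $\rho = \Tr_{A'} |\psi\rangle\langle\psi|_{AA'}$, introducing a control register $C \cong \C^m$ in uniform superposition and an ancilla qubit $B$, and taking as initial state $|\Phi_0\rangle = |+\rangle_C \otimes |\psi\rangle_{AA'} \otimes |0\rangle_B$ with $|+\rangle = m^{-1/2}\sum_{j=1}^m |j\rangle$; this consumes exactly one copy of $\rho$. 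Let $V_j$ be the coherent realization of the projective measurement $\{\Lambda_j, I-\Lambda_j\}$ on $A$ that writes its outcome into $B$, costing $p$ gates, and set $V = \sum_j |j\rangle\langle j|_C \otimes V_j$, implementable per call in $O(p + \poly(\log m))$ gates by standard multiplexing on $j$.

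Next, I would define two orthogonal projectors $\Pi_I = |+\rangle\langle +|_C \otimes I_{AA'} \otimes |0\rangle\langle 0|_B$ and $\Pi_A = V^\dagger(I \otimes |1\rangle\langle 1|_B)V$; reflection about $\Pi_I$ is cheap because $\Pi_I$ factors as a tensor product of ``easy'' projectors onto known states, and one checks $\Pi_I|\Phi_0\rangle = |\Phi_0\rangle$ and $\|\Pi_A|\Phi_0\rangle\|^2 = \frac{1}{m}\sum_j \Tr(\Lambda_j\rho)$. Form the walk $W = (2\Pi_I - I)(2\Pi_A - I)$, each application of which uses two calls to $V, V^\dagger$ and $O(\poly(\log m))$ extra gates, and run phase estimation of $W$ on $|\Phi_0\rangle$ with $t = \Theta(\xi^{-1}\sqrt m)$ queries, obtaining the eigenphase of $W$ to resolution $O(\xi/\sqrt m)$. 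The test accepts iff the absolute value of the estimated phase exceeds a threshold $\tau = \Theta(1/\sqrt m)$. The total gate count is $O(\xi^{-1}\sqrt m (p + \poly(\log m)))$, matching the claim.

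Correctness is analyzed via Jordan's lemma: $\Pi_I$ and $\Pi_A$ jointly decompose the space into $2$-dimensional invariant blocks parametrized by angles $\theta_i$, on which $W$ acts as rotation by $\pm 2\theta_i$ and on which the canonical $\Pi_I$-leg $|\alpha_i\rangle$ satisfies $\|\Pi_A|\alpha_i\rangle\|^2 = \sin^2\theta_i$. Writing $|\Phi_0\rangle = \sum_i c_i |\alpha_i\rangle$, phase estimation samples $\pm 2\theta_i$ with probability $|c_i|^2/2$ each (up to the $\xi$ resolution error). In the NO case, $\sum_i |c_i|^2\sin^2\theta_i = \|\Pi_A|\Phi_0\rangle\|^2 \le \varphi$, so Markov's inequality gives acceptance probability at most $\varphi/\sin^2(\tau/4) + \xi \le 3\varphi m + \xi$. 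In the YES case, I would compare $|\Phi_0\rangle$ to the fictitious witness $|\Phi^\star\rangle = |j^\star\rangle|\psi\rangle|0\rangle$ (with $j^\star$ the YES index), which satisfies $\|\Pi_A|\Phi^\star\rangle\|^2 \ge 1 - \varepsilon$ and $\Pi_I|\Phi^\star\rangle = m^{-1/2}|\Phi_0\rangle$; propagating both facts through the Jordan decomposition forces $|\Phi_0\rangle$ to place weight at least $(1-\varepsilon)^2/4$ on blocks with $|\sin\theta_i| = \Omega(1/\sqrt m)$, which are then detected by phase estimation above the threshold.

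The main obstacle is exactly this last step of the YES-case analysis. I must rule out the possibility that the entire $\ge 1-\varepsilon$ acceptance weight of $|\Phi^\star\rangle$ hides on Jordan blocks with $\theta_i \ll 1/\sqrt m$, where phase estimation of resolution $O(\xi/\sqrt m)$ cannot separate them from $\theta_i = 0$. The resolution is a Cauchy--Schwarz / weighted-average bookkeeping on the two quadratic forms $\sum_i |a_i|^2 = 1/m$ (from $\|\Pi_I|\Phi^\star\rangle\|^2 = 1/m$) and $\sum_i |a_i\sin\theta_i + b_i\cos\theta_i|^2 \ge 1-\varepsilon$, where $|\Phi^\star\rangle = \sum_i (a_i|\alpha_i\rangle + b_i|\beta_i\rangle)$ is the Jordan expansion. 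The factor $(1-\varepsilon)^2$ emerges from Cauchy--Schwarz between these two sums; the remaining $1/4$ comes from the two-fold $\pm\theta_i$ symmetry of phase estimation together with a normalization factor in Jordan's lemma.
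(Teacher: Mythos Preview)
Your approach is essentially the same as the paper's: two projectors whose product restricted to the range of one equals $\Lambda=\tfrac{1}{m}\sum_j\Lambda_j$, followed by phase estimation of the associated product of reflections as in \cite{nagaj2009fast}. The paper's construction is a bit leaner---since the $\Lambda_j$ are already projectors it uses the controlled reflection $I-2\sum_j\Lambda_{j+1}\otimes|j\rangle\langle j|$ directly, with no outcome ancilla $B$ and no purification of $\rho$; the analysis is carried out with the mixed state $\rho$ throughout, via the identity $\Delta\Pi\Delta=\Lambda\otimes|0\rangle\langle 0|$.

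The YES case you flag as ``the main obstacle'' is precisely what the paper outsources: it quotes from \cite{harrow2017sequential} the spectral bound $\tr(P_{\ge\lambda}\rho)\ge(\max_j\tr(\Lambda_j\rho)-m\lambda)^2$, where $P_{\ge\lambda}$ is the spectral projector of $\Lambda$ onto eigenvalues $\ge\lambda$. Choosing the threshold $\lambda=(1-\varepsilon)/(2m)$ then gives $(1-\varepsilon)^2/4$ immediately. Your witness $|\Phi^\star\rangle$ and Cauchy--Schwarz bookkeeping can be made to recover this inequality (indeed your Jordan angles satisfy $\sin^2\theta_i=$ eigenvalues of $\Lambda$, so the two statements coincide), but your attribution of the factor $1/4$ to the $\pm\theta_i$ symmetry of phase estimation is off: both signs are accepted, so no probability is lost there. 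The $1/4$ arises entirely from placing the acceptance threshold at half of $(1-\varepsilon)/m$ in the cited inequality.
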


The dependence on $m$ is also tight, as one can easily embed Grover search into this problem.

\hd{Gibbs sampler with quantum inputs.} To work with the quantum input model, as our main technical contribution, we construct the first quantum Gibbs sampler of low-rank Hamiltonians when given Oracles \ref{ora:1} and \ref{ora:2}:

\begin{theorem}[informal; see \thm{low-rank-Gibbs}] \label{thm:low-rank-Gibbs-intro}
Assume the $n\times n$ matrix $K=K^+-K^-$ and $K^+, K^-$ are PSD matrices with rank at most $r_K$ and $\Tr[K^+]+\Tr[K^-]\leq B$. Given quantum oracles that prepare copies of $\rho^+=K^+ / \tr(K^+)$, $\rho^- = K^-/ \tr(K^-)$ and estimates of $\tr(K^+)$, $\tr(K^-)$, there is a quantum Gibbs sampler that prepares the Gibbs state $\rho_G=\exp(-K)/\tr(\exp(-K))$ to precision $\epsilon$ in trace distance, using $\poly(\log n, r_K, B, \epsilon^{-1})$ quantum gates.
\end{theorem}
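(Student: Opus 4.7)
Plan: I would construct the Gibbs sampler by combining quantum principal component analysis (qPCA) with the standard Poulin--Wocjan Gibbs-sampling recipe, and then exploit the low-rank structure of $K$ to keep the overhead polynomial in $B$ rather than exponential.

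First, I would build a Hamiltonian simulator for $K$ from the quantum inputs alone. \ora{1} delivers (estimates of) $\tr(K^\pm)$ and \ora{2} supplies copies of $\rho^\pm=K^\pm/\tr(K^\pm)$, so the Lloyd--Mohseni--Rebentrost density-matrix-exponentiation trick applies: consuming a fresh copy of $\rho^\pm$ through a controlled SWAP on an auxiliary register implements $e^{-i\rho^\pm \tau}$ to precision $\delta$ in $\tilde{O}(\tau^{2}/\delta)$ copies. Rescaling the time variable by $\tr(K^\pm)\le B$ turns this into a simulator for $e^{-iK^\pm t}$ at cost $\poly(B,t,1/\delta)$, and a second-order Trotter--Suzuki formula composes the two simulators into one for $e^{-iKt}$ with $K=K^+-K^-$ at the same polynomial cost.

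Second, I would use the low rank of $K$ to decompose the target Gibbs state into two easy pieces. Because $\rank(K)\le 2r_K$ and $e^{-K}$ acts as the identity on $(\mathrm{supp}\,K)^\perp$,
\[
\rho_G \;=\; p_0\,\frac{I_{P^\perp}}{n-\rank(K)} \;+\; p_1\,\frac{e^{-K|_P}}{\tr(e^{-K|_P})},
\]
where $P$ projects onto $\mathrm{supp}(K)$, $p_0=(n-\rank(K))/\tr(e^{-K})$ and $p_1=\tr(e^{-K|_P})/\tr(e^{-K})$. The null-space branch is essentially the maximally mixed state: start from $I/n$ via the EPR half-state $n^{-1/2}\sum_j\ket{j}\ket{j}$ (cost $O(\log n)$), run phase estimation with the simulator from the first step, and reject the at most $2r_K$ non-zero eigenvalue outcomes (success probability $\Omega(1)$). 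For the support branch I would use qPCA on $\rho^+$ and $\rho^-$ to recover explicit state-preparation unitaries for a basis of $\mathrm{supp}(K)$ together with the matrix entries of $K|_P$ in that basis, classically diagonalise the resulting $\le 2r_K$-dimensional Hamiltonian, and finally prepare its Gibbs state as a weighted superposition of the basis-preparator unitaries. A classical coin flip with bias $p_1$ (itself estimated from the partition-function data collected along the way) mixes the two branches, and $\epsilon$-closeness in trace distance is obtained by propagating the qPCA, Trotter, phase-estimation, and subspace-tomography error budgets in the standard way.

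The main obstacle I expect is controlling the overhead of the support branch: naively applying a Poulin--Wocjan post-selection on the full $n$-dimensional space would cost $e^{\Theta(B)}$, violating the $\poly(B)$ guarantee. The resolution is that the support branch lives on an effectively $O(r_K)$-dimensional space, where Gibbs sampling reduces to a classical diagonalisation of a $2r_K\times 2r_K$ matrix plus a $\poly(r_K,\log n)$-gate state preparation; the only quantum work is to extract $K|_P$ in a preparable basis, which qPCA accomplishes at cost $\poly(r_K,B,1/\epsilon)$. Adding this to the $O(\log n)$-cost null-space branch and the probabilistic mixing yields the claimed $\poly(\log n,r_K,B,1/\epsilon)$ total gate count.
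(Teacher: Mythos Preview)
Your overall architecture---density-matrix exponentiation from $\rho^\pm$ to simulate $e^{iKt}$, then the decomposition $\rho_G \approx p_0\,\rho_{\text{ker}} + p_1\,\rho_{\text{supp}}$ with $\rho_{\text{ker}}$ obtained from $I/n$ by phase-estimating $K$ and rejecting the $\le 2r_K$ large outcomes---matches the paper's proof exactly. The divergence, and the gap, is in how you propose to build $\rho_{\text{supp}}$.

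You describe using qPCA on $\rho^\pm$ to ``recover explicit state-preparation unitaries for a basis of $\mathrm{supp}(K)$'', then computing the matrix of $K|_P$ in that basis, classically diagonalising, and preparing the Gibbs state as a ``weighted superposition of the basis-preparator unitaries''. None of these steps comes for free. Phase estimation (whether of $K$ or of $\rho^\pm$) lets you \emph{sample} eigenvectors tagged with eigenvalue estimates; it does not hand you a unitary $U_i$ with $U_i\ket{0}=\ket{b_i}$. Without such unitaries you cannot form coherent linear combinations $\sum_j c_{kj}\ket{b_j}$, and even with them an LCU-type construction would be needed and would have to be analysed. Computing off-diagonal entries $\bra{b_i}K\ket{b_j}$ from copies of $\rho^\pm$ is likewise nontrivial (and unnecessary: if you phase-estimate $K$ itself, the basis you land in already diagonalises $K$, so there is nothing left to ``classically diagonalise'').

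The paper sidesteps all of this with a rejection-sampling scheme that stays entirely inside the sampling paradigm. Starting from $\rho^{\text{sgn}}$ with $\text{sgn}=\pm$ chosen with probability $\tr(K^\pm)/(\tr K^+ + \tr K^-)$, one phase-estimates $K$ to obtain the mixture $\propto\sum_i(\lambda_i^+ + \lambda_i^-)\ket{v_i}\bra{v_i}\otimes\ket{\lambda_i}\bra{\lambda_i}$, projects onto $|\lambda_i|\ge\delta$, and then accepts with probability $\frac{\delta}{\lambda_i^+ + \lambda_i^-}\cdot\frac{e^{-\lambda_i}}{Z_{\text{supp}}}$. The subtle point is that this acceptance probability requires knowing $\lambda_i^+ + \lambda_i^-$, not just $\lambda_i = \lambda_i^+ - \lambda_i^-$; the paper obtains it by running a \emph{second} phase estimation, this time treating the intermediate state $\bar\varrho$ itself as the Hamiltonian (density-matrix exponentiation again). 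Because $\lambda_i^+ + \lambda_i^- \ge |\lambda_i| \ge \delta$ and $e^{-\lambda_i} \le Z_{\text{supp}}$, the acceptance probability is at most $1$, and the overall success probability is $\Omega(\delta/B)$---polynomial, not exponential, in $B$. This is the mechanism that kills the $e^{\Theta(B)}$ overhead you correctly worried about, and it does so without any tomography, explicit basis extraction, or coherent recombination.
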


Our quantum Gibbs sampler has a poly-logarithmic dependence on $n$ and polynomial dependence on the maximum rank of the input matrices, while in the plain input model the dependence of $n$ is $\Theta(\sqrt{n})$~ \cite{poulin2009sampling,chowdhury2016quantum}.
Our construction deviates significantly from \cite{poulin2009sampling,chowdhury2016quantum}.
Because of the existence of copies of $\rho^+$ and $\rho^-$, we rely on efficient Hamiltonian simulation techniques developed in quantum principle component analysis (PCA)~\cite{lloyd2013quantum} and its follow-up work in~\cite{kimmel2017hamiltonian}. As a result, we can also get rid of the sparsity assumption which is crucial for evoking results about efficient Hamiltonian simulation into the Gibbs sampling used in~\cite{poulin2009sampling,chowdhury2016quantum}.

\subsection{Application: Efficient learnability of quantum states}\label{sec:learn_intro}

\hd{Problem description.} Given many realizations of an experiment producing a quantum state with density matrix $\rho$, learning an approximate description of $\rho$ is a fundamental task in quantum information and experimental physics. It refers to \emph{quantum state tomography}, which has been widely used to identify quantum systems.
However, to tomograph an $\ell$-qubit state $\rho$ (with dimension $n=2^\ell$) , the optimal procedure~\cite{OW16,HHJWY16} requires $n^2$ number of copies of $\rho$, which is impractical already for relatively small $\ell$.

An interesting alternative is to find a description of the unknown quantum state $\rho$ which approximates $\Tr[\rho E_{i}]$ up to error $\eps$ for a specific collection of POVM elements $E_1,\ldots,E_m$, where $I  \succeq E_i \succeq 0$ and $E_i\in \mathbb{C}^{n\times n}, \forall i \in [m]$. This is an old problem, dating back at least to the work of Jaynes on statistical mechanics in the 50ies. Jaynes' principle \cite{jaynes1957information} (also known as the principle of maximum entropy) gives a general form for the solution of the problem above. It shows that there is always a state of the form
\begin{equation}\label{gibbsjaynes}
\frac{ \exp \left ( \sum_i \lambda_i E_i  \right) }{  \tr \left( \exp \left( \sum_i \lambda_i E_i  \right)\right)},
\end{equation}
which has the same expectation values on the $E_i$'s as the original state $\rho$, where the $\lambda_i$'s are real numbers. In words, there is always a Gibbs state with Hamiltonian given by a linear combination of the $E_i$'s which gives the same expectation values as the state described by $\rho$. Therefore one can solve the learning problem by finding the right $\lambda_i$'s (or finding a quantum circuit creating the state in Eq. (\ref{gibbsjaynes})).

\hd{Applying quantum SDP solvers.} By formulating the learning problem in terms of the SDP feasibility problem (with each $A_i$ replaced by $E_i$) where one looks for a trace unit PSD $\sigma$ matching the measurement statistics, i.e., $\Tr(\sigma E_i) \approx \Tr(\rho E_i), \forall i \in [m]$, we observe that our quantum SDP solvers actually provides a solution to the learning problem with associated speed-ups on $m$ and $n$.

In fact, our algorithm also outputs each of the $\lambda_i$'s (one can show that $\poly(\log(mn))/\varepsilon^2$ non-zero of them suffices for a solution with error $\varepsilon$), as well as a circuit description of the Gibbs state in Eq. (\ref{gibbsjaynes}) achieving the same expectation values as $\rho$ up to error $\varepsilon$.
(This is mainly because the similarity between the matrix multiplicative update method and Jaynes' principle. Compare (\ref{gibbsjaynes}) and \algo{matrixMW}.)
In this sense our result can be seen as an \emph{algorithmically} version of Jaynes' principle. We note that a similar idea was adopted by~\cite{LRS15} in learning quantum states, although for a totally different purpose (namely proving lower bounds on the size of SDP approximations to constraint satisfaction problems).

It is worthwhile noting that our quantum SDP solvers when applied in this context will output a description of the state $\rho$ in the form of Eq. (\ref{gibbsjaynes}) which has the same expectation values as $\rho$ on measurements $E_1,\ldots,E_m$ up to error $\epsilon$. This is slightly different from directly outputting estimates of $\Tr(E_i \rho)$ for each $i \in [m]$, which by itself will take $\Omega(m)$ time.

\hd{Relevance of quantum input model.} More importantly, we argue that our quantum input model is \emph{relevant} in this setting for low-rank measurements $E_i$'s.
Since all $E_i \succeq 0$ by definition, we can consider the following (slightly simplified version of) oracles:

\hd{\ora{1} for traces of $E_{i}$:} A unitary $O_{\Tr}$ such that for any $i\in\range{m}$, $O_{\Tr}|i\>|0\>=|i\>|\Tr[E_{i}]\>$.

\hd{\ora{2} for preparing $E_{i}$:} A unitary $O$ such that for any $i\in\range{m}$,
$O|i\>\<i|\otimes |0\>\<0|O^{\dagger}=|i\>\<i|\otimes |\psi_i\>\<\psi_i|$,
where $|\psi_i\>\<\psi_i|$ is any purification of $E_{i}/\Tr[E_{i}]$.

We now show how one can implement this oracle in the case where each $E_i$ is a low rank projector and we have an efficient (with $\poly \log(n)$ many gates) implementation of  the measurement. Let the rank of $E_i$'s bounded by $r$ and suppose the measurement operators $E_{i}$'s are of the form
\begin{equation}
E_i = V_{i} P_{i} V_{i}^{\cal y}
\end{equation}
for polynomial (in $\log(n)$) time circuits $V_{i}$, and projectors $P_{i}$ of the form
\begin{equation}
P_i := \sum_{i = 1}^{r_i} \ket{i}\bra{i}
\end{equation}
with $\ket{i}$ the computational basis and $r_i \leq r$. Then for \ora{1} we just need to output the $r_i$'s. \ora{2} can be implemented efficiently (in time $r \poly \log(n)$)  by first creating a maximally entangled state between the subspace spanned by $P_{i}$ and a purification and applying $V_i$ to one half of it. In more detail, consider the following purification of $E_i/ \tr(E_i)$:
\begin{equation}
\ket{\psi_i} := \frac{1}{\sqrt{r_i}}  \sum_{i=1}^{r_i} (V_i \otimes I) \ket{i, i}
\end{equation}
This can be constructed first by preparing the state $\frac{1}{\sqrt{r_i}}  \sum_{i=1}^{r_i}\ket{i, i}$ in time $r_i$ and then applying $V_i \otimes I$ to it (which can be done in time $\poly \log(n)$).

\hd{Efficient learning for low rank measurements.}  By applying our SDP solver in the quantum input model, we obtain that

\begin{theorem}[informal; see \cor{efficient-shadow}] \label{thm:learn}
For any $\epsilon>0$, there is a quantum procedure that outputs a description of the state $\rho$ in the form of Eq. (\ref{gibbsjaynes}) (namely the $\lambda_i$'s parameters) using at most $\poly(\log m,\log n,r,\epsilon^{-1})$ copies of $\rho$ and at most $\sqrt{m}\cdot\poly(\log m,\log n,r,\epsilon^{-1})$ quantum gates and queries to \ora{1} and \ora{2}.
\end{theorem}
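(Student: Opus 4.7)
\begin{pr_sk}
The plan is to cast the learning task as an instance of the SDP approximate feasibility problem (\ref{eqn:SDP}) and invoke \cor{SDP-feasibility-testing-quantum-intro}, with the (unknown) expectation values $\tr(E_{i}\rho)$ playing the role of the $a_{i}$'s.

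First I would set $A_{i}=E_{i}$, $A_{m+i}=-E_{i}$, $a_{i}=\tr(E_{i}\rho)+\epsilon/2$, and $a_{m+i}=-\tr(E_{i}\rho)+\epsilon/2$ for $i\in\range{m}$, so that any $\sigma\in\mathcal{S}_{0}$ automatically satisfies $|\tr(E_{i}\sigma)-\tr(E_{i}\rho)|\leq\epsilon$. Jaynes' principle, together with the trivial fact that $\sigma=\rho$ is feasible, guarantees $\mathcal{S}_{0}\neq\emptyset$ so the solver will never output ``fail''. Because each $E_{i}=V_{i}P_{i}V_{i}^{\dagger}$ is PSD with rank at most $r$ and $E_{i}\preceq I$, the trace-norm parameter of the quantum input model is $B=\max_{i}\tr(E_{i})\leq r$.

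Next I would implement the three quantum input oracles. \ora{1} returns $\tr(E_{i})=r_{i}$, which is directly available from the classical description of $P_{i}$. \ora{2} prepares the purification $|\psi_{i}\>=\frac{1}{\sqrt{r_{i}}}\sum_{k=1}^{r_{i}}(V_{i}\otimes I)|k,k\>$ of $E_{i}/\tr(E_{i})$ in $r\cdot\poly\log(n)$ gates, as detailed right before the theorem. For \ora{3}, which is genuinely new since the $a_{i}$'s are not given classically, I would run a SWAP test between $|\psi_{i}\>$ and $\rho$ and boost its precision via amplitude estimation, extracting $a_{i}=\tr(E_{i})\cdot\tr\!\bigl((E_{i}/\tr(E_{i}))\rho\bigr)$ to additive precision $\epsilon/4$ with success probability $1-1/\poly(m)$ at a cost of $\tilde{O}(r/\epsilon)$ copies of $\rho$ and $\tilde{O}(r/\epsilon)$ queries to \ora{2} per call.

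With these oracles in hand I would invoke \cor{SDP-feasibility-testing-quantum-intro}. Since $B\leq r$, it produces a feasible $\sigma$ using $(\sqrt{m}+\poly(r))\cdot\poly(\log m,\log n,\epsilon^{-1})$ gates and queries to \ora{1} and \ora{2}. The MMWU framework of \append{SDP} maintains $\sigma$ as the Gibbs state of $-\eta\sum_{s\leq t}A_{i_{s}}$, where $i_{s}$ is the violator identified at step $s$; hence after $T=O(\log m/\epsilon^{2})$ iterations the parameters $\lambda_{i}$ of (\ref{gibbsjaynes}) can be read off directly from the algorithm's trajectory with no additional work.

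The main technical obstacle, in my view, is decoupling the sample count of $\rho$ from the $\sqrt{m}$ gate count. A naive implementation would fire a fresh batch of $\tilde{O}(r/\epsilon)$ copies of $\rho$ inside each of the $\sqrt{m}$ projectors invoked by the fast quantum OR lemma (\lem{fast-quantum-OR-cite}), giving an undesirable $\sqrt{m}\cdot\poly$ sample bound. To recover the claimed $\poly(\log m,\log n,r,\epsilon^{-1})$ bound, I would prepare, once per MMWU iteration, a polylogarithmic-size ancillary quantum register that coherently encodes the SWAP-test-plus-amplitude-estimation circuit against $\rho$, and then reuse this register inside the $\sqrt{m}$ projector applications of the OR test instead of consuming fresh copies of $\rho$ each time. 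Summing over the $T=O(\log m/\epsilon^{2})$ iterations gives the claimed polylogarithmic sample complexity, while leaving the $\sqrt{m}\cdot\poly$ gate complexity from \thm{SDP-feasibility-testing-intro} unaffected; the remaining bookkeeping is then a routine resource count.
\end{pr_sk}
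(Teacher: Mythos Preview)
Your high-level plan (cast learning as SDP feasibility, run the MMWU solver, apply the fast quantum OR lemma for violation detection) matches the paper's. The difference is in how you handle the unknown right-hand sides $a_i=\tr(E_i\rho)$. You first try to implement \ora{3} by estimating each $a_i$ from copies of $\rho$, then correctly spot that this would cost $\sqrt{m}$ batches of copies inside the OR lemma, and finally propose to ``reuse'' a register encoding the estimation circuit. The paper sidesteps this detour entirely: it never implements \ora{3}. Instead, in \lem{oracle-implementation-learning} the projector $\Lambda_j$ in the OR lemma is designed to act directly on $\tilde\rho=\rho^{\otimes C}\otimes\sigma^{\otimes C}\otimes\proj{0}^a$ and to test $|\Tr[E_j\rho]-\Tr[E_j\sigma]|\geq\epsilon$ by running SWAP tests against both $\rho$ and $\sigma$ inside the same circuit. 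Since the OR lemma consumes only \emph{one} copy of $\tilde\rho$ (the $\sqrt{m}$ reflections are unitary and do not destroy it), the sample cost of $\rho$ per iteration is just $C=\poly(\log m,r,\epsilon^{-1})$. Your ``reuse the register'' fix is essentially this same idea, but stated less precisely; note in particular that your amplitude-estimation step (to get $\tilde O(r/\epsilon)$ copies) would require \emph{coherent} preparation of $\rho$, which you do not have---the paper uses plain Chernoff on repeated SWAP tests, which is all that copies of $\rho$ afford.

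Two small corrections: the number of MMWU rounds is $T=O(\log n/\epsilon^2)$, not $O(\log m/\epsilon^2)$ (the multiplicative-weights regret scales with the dimension of the matrix variable). And the output $\lambda_i$'s come from the $E_{i^{(t)}}$'s chosen along the trajectory with coefficients $\pm\epsilon/4$, so the Gibbs form is recovered exactly as you say, but with at most $T=O(\log n/\epsilon^2)$ nonzero terms.
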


Let us briefly sketch how our SDP solver applies to this setting. Note first that we do not aim to estimate $\Tr(E_i\rho)$ for each $i \in [m]$, which helps us circumvent the $\Omega(m)$ lower bound. What we really want is to generate a state $\tilde{\rho}$ such that $\Tr(E_i\tilde{\rho})\approx \Tr(E_i \rho)$ for each $i$. Our SDP solver will maintain and update a description of $\tilde{\rho}$ per iteration. In each iteration, given copies of $\tilde{\rho}$ and the actual unknown state $\rho$, we want to know whether $\Tr(E_i\tilde{\rho})\approx \Tr(E_i \rho)$ $\forall i\in [m]$ or there is at least a violation $i^*$. To that end, we design for each $i$ a projection for the following procedure: (1) perform multiple independent SWAP tests between $E_i/\Tr[E_i]$ (from \ora{2}) and $\rho$, $\tilde{\rho}$ respectively; (2) accept when the statistics of both SWAP tests (one with $\rho$, the other with $\tilde{\rho}$) are close. Hence, one can apply our fast quantum OR lemma on these projections to find such $i^*$ if it exists.

Note that both the sample complexity and the gate complexity of the above procedure have a poly-log dependence on $n$ (i.e., the dimension of the quantum state to learn).

\hd{Shadow tomography problem.} In a sequence of works~\cite{aaronson2007learnability, aaronson2017quantum}, Aaronson asked whether one can predict information about a dimension-$n$ quantum state with poly-log(n) many copies. In Ref.~\cite{aaronson2007learnability}, he showed that a linear number of copies is sufficient to predict the outcomes of "most" measurements according to some (arbitrary) distribution over a class of measurements. Very recently, in Ref.~\cite{aaronson2017quantum},  he referred the following problem as the "shadow tomography" problem: for any $n$-dimensional state $\rho$ and two-outcome measurements $E_{1},\ldots,E_{m}$, estimate  $\Tr[\rho E_{i}]$ up to error $\eps,$ $\forall i\in\range{m}$. He has further designed a quantum procedure for the shadow tomography problem with $\tilde{O}(\ell\cdot\log^{4}m/\eps^{5})$~\footnote{Here $\tilde{O}$ hides factors that are polynomial in $\log\log m$, $\log \log n$, and $\log 1/\epsilon$.} copies of $\rho$.

Noting that the shadow tomography problem is essentially the same problem considered by Jaynes~\cite{jaynes1957information}, one can apply Jaynes' principle and its algorithmic version we discussed before. Although this can be used to give a version of the result of Ref.~\cite{aaronson2017quantum}, Aaronson obtained his result~\cite{aaronson2017quantum} through a different route, based on a post-selection argument. A drawback of this approach is that its gate complexity is high, scaling linearly  in $m$ and as $n^{O(\log\log n)}$ (for fixed error).

Our \thm{learn} can be applied here to improve the time complexity. It gives a quantum procedure with a \emph{square-root} dependence on $m$ and $n^{O(1)}$ dependence on $n$ for arbitrary $E_i$'s.

When we assume $r$ is small, say $r=O(\poly\log n)$, the gate complexity of the entire procedure becomes $\tilde{O}(\sqrt{m}\poly \log (n))$. This gives a class of measurement (namely any set of low-rank measurements which can be efficiently implemented) for which the learning problem is efficient both in the number of samples and the computational complexity. This solves an open problem proposed in Ref.~\cite{aaronson2007learnability}

Although we have not worked out an explicit bound of the sample complexity of our procedure, the authors of~\cite{vAG18} followed our approach with more sophisticated techniques and obtained a sample complexity of $\tilde{O}(\ell\cdot\log^{4}m/\eps^{4})$, improving on the bound from~\cite{aaronson2017quantum}. We also note that very recently, Aaronson et al. claimed the same sample complexity (i.e., $\tilde{O}(\ell\cdot\log^{4}m/\eps^{4})$) in~\cite{ACHN18}.

\subsection{Open questions} This work leaves several natural open questions for future work. For example:
\begin{itemize}
\item Are there more examples of interesting SDPs where our form of input is meaningful? We have shown the example of learning quantum states. Intuitively, we are looking for SDP instances where the constraints are much "simpler" than the solution space. Is there any such example in the context of big data and/or  machine learning?
\item Our work has identified one setting where Gibbs sampling has a poly-log dependence on the dimension? Is there any other setting for the same purpose?
\item For any reasonable quantum input setting, what is the effect of potential noises on quantum inputs in practice?
\item  Can we improve further on other parameters (e.g., the dependence on $m$ and $1/\eps$)? In particular, is it possible to improve the error dependence to $\poly\log(1/\eps)$? This probably implies that we have
to consider a quantum version of the interior point method.
\item Are there other classes of measurements for which the quantum learning problem can solved in a computationally efficient way beyond the low-rank measurements we consider in this work? We note that most measurements of interest are not low rank (e.g. local measurements) and therefore the practical applicability of the present result is limited.
\end{itemize}

\hd{Organization of the appendices.}
We will formulate the SDP feasibility problem and prove the correctness of the basic framework in \append{SDP}.
Our implementation of the fast quantum OR lemma is given in~\append{fast}.
We describe our main results the constructions of quantum SDP solvers in the plain input model and the quantum input model in~\append{SDP-plain}, ~\append{SDP-efficient}, respectively.
The application to learning quantum states is illustrated in~\append{learn}.
In \append{low-rank_sampling} (with full details in \append{gibbs}) we demonstrate how to sample from the Gibbs state of low-rank Hamiltonians.

\hd{Acknowledgements.}
We thank Scott Aaronson, Joran van Apeldoorn, Andr{\'a}s Gily{\'e}n, Cupjin Huang, and anonymous reviewers for helpful discussions. We are also grateful to Joran van Apeldoorn and Andr{\'a}s Gily{\'e}n for sharing a working draft of~\cite{vAG18} with us. FB was supported by NSF. CYL and AK are supported by the Department of Defense. TL is supported by NSF CCF-1526380.  XW is supported by the U.S. Department of Energy, Office of Science, Office of Advanced Scientific Computing Research, Quantum Algorithms Teams program. XW is also supported by NSF grants CCF-1755800 and CCF-1816695.



\appendix

\section{Feasibility of SDPs} \label{append:SDP}
In this section, we formulate the feasibility problem of SDPs. It is a standard fact that one can use binary search to reduce any optimization problem to a feasibility one.
The high-level idea is to first guess a candidate value for the objective function, and add that as a constraint to the optimization problem. It converts the optimization problem into a feasibility problem.
One can then use binary search on the candidate value to find a good approximation to the optimal one.

\begin{definition}[Feasibility] \label{defn:feasibility}
Given an $\epsilon>0$, $m$ real numbers $a_{1},\ldots,a_{m}\in\R$, and Hermitian $n\times n$ matrices $A_{1},\ldots,A_{m}$ where $-I\preceq A_{i}\preceq I, \forall\,j\in\range{m}$, define the convex region $\mathcal{S}_{\epsilon}$ as all $X$ such that
\begin{align}
\tr(A_{i} X) &\leq a_{i}+\epsilon\quad\forall\,i\in\range{m}; \label{eqn:SDP-1} \\
X&\succeq 0; \label{eqn:SDP-2} \\
\Tr[X]&=1. \label{eqn:SDP-3}
\end{align}
For approximate feasibility testing, it is required that:
\begin{itemize}
\item If $\mathcal{S}_{0}=\emptyset$, output fail;
\item If $\mathcal{S}_{\epsilon}\neq\emptyset$, output an $X\in\mathcal{S}_{\epsilon}$.
\end{itemize}
\end{definition}

\hd{Zero-sum game approach for SDPs.} We adopt the zero-sum game approach to solve SDPs. Note that it is different from~\cite{brandao2016quantum, vanApeldoorn2017quantum} which follow the primal-dual approach of \cite{arora2007combinatorial} to solve SDPs. Instead of leveraging the dual program, we rely on the following oracle:
\begin{oracle}[Search for violation] \label{ora:violation}
Input a density matrix $X$, output an $i\in\range{m}$ such that Eq. \eqn{SDP-1} is violated. If no such $i$ exists, output "FEASIBLE".
\end{oracle}
This oracle helps establish a game view to solve any SDP feasibility problem. Imagine Player 1 who wants to provide a feasible $X \in \mathcal{S}_{\epsilon}$. Player 2, on the other side, wants to find any violation of any proposed $X$. (This is exactly the function of \ora{violation}.)
If the original problem is feasible, there exists a feasible point $X_0$ (provided by Player 1) such that there is no violation of $X_0$ that can be found by Player 2 (i.e., \ora{violation}). This actually refers to an \emph{equilibrium} point of the zero-sum game, which can be approximated by the matrix multiplicative weight update method~\cite{v008a006}.

This game view of solving the SDP feasibility problem has appeared in the classical literature (e.g.,~\cite{Hazan}) and has already been used in solving semidefinite programs in the context of quantum complexity theory (e.g.,~\cite{Wu10, gutoski2012parallel}). We observe that many techniques to quantize Arora-Kale's primal-dual approach \cite{arora2007combinatorial} for solving SDPs in Refs.~\cite{brandao2016quantum, vanApeldoorn2017quantum} readily extends to the zero-sum game approach, e.g., using quantum Gibbs samplers to generate candidate solution states.

The main difference, however, lies in the way one make use of the matrix multiplicative weight update method~\cite{kale2007efficient}, which is a meta algorithm behind both the Arora-Kale's primal-dual approach~\cite{arora2007combinatorial} and the game view approach (e.g.,~\cite{Hazan}).
As we have elaborated in~\sec{tech}, there are a few advantages of adopting this game view approach.

\hd{Master algorithm.}
We present a master algorithm that solves the SDP feasibility problem with the help of \ora{violation}. It should be understood that the master algorithm is \emph{not} the final quantum algorithm, where a few steps will be replaced by their quantum counterparts.
However, the master algorithm helps demonstrate the correctness of the algorithm and the number of oracle queries.

Our algorithm heavily relies on the matrix multiplicative weight method given in \algo{matrixMW}.

\begin{algorithm}[htbp]
\textbf{Initialization:} Fix a $\delta\leq 1/2$. Initialize the weight matrix $W^{(1)}=I_{n}$\;
\For{$t=1,2,\ldots,T$}{
	Set the density matrix $\rho^{(t)}=\frac{W^{(t)}}{\Tr[W^{(t)}]}$\;
	Observe the gain matrix $M^{(t)}$\;
	Define the new weight matrix: $W^{(t+1)}=\exp[-\delta\sum_{\tau=1}^{t}M^{(\tau)}]$\;
	}
\caption{Matrix multiplicative weights algorithm (Figure 3.1 of \cite{kale2007efficient}).}
\label{algo:matrixMW}
\end{algorithm}

\begin{proposition}[Corollary 4 of \cite{kale2007efficient}]\label{prop:matrixMW}
Assume that for all $t\in\range{T}$, either $M^{(t)}\preceq 0$ or $M^{(t)}\succeq 0$. Then \algo{matrixMW} guarantees that after $T$ rounds, for any density matrix $\rho$, we have
\begin{align}
\hspace{-2mm}(1-\delta)\sum_{t\colon M^{(t)}\preceq 0} \tr(M^{(t)} \rho^{(t)}) +(1+\delta)\sum_{t\colon M^{(t)}\succeq 0} \tr(M^{(t)}  \rho^{(t)}) \geq \sum_{t=1}^{T} \tr(M^{(t)} \rho)-\frac{\ln n}{\delta}.
\end{align}
\end{proposition}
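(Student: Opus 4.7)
My plan is to follow the canonical potential-function argument for matrix multiplicative weights: track the scalar potential $\Phi_t := \tr(W^{(t)})$, noting $\Phi_1 = n$ and $\rho^{(t)} = W^{(t)}/\Phi_t$. I will derive an upper and a lower bound on $\Phi_{T+1}$ and then rearrange to recover the claim; the two bounds use, respectively, Golden--Thompson with a scalar convexity estimate, and the Gibbs variational principle.

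For the upper bound, unfold one step: $W^{(t+1)} = \exp(-\delta M^{(t)} - \delta \sum_{\tau<t} M^{(\tau)})$. Golden--Thompson gives $\Phi_{t+1} \leq \tr(e^{-\delta M^{(t)}} W^{(t)})$. Under the implicit normalization $\|M^{(t)}\|\leq 1$, the scalar secant-line bound for the convex function $x \mapsto e^{-\delta x}$ on $[0,1]$ and on $[-1,0]$, combined with spectral calculus, yields
\[e^{-\delta M^{(t)}} \preceq I - (1-e^{-\delta})\, M^{(t)}\ \text{if}\ M^{(t)} \succeq 0, \quad e^{-\delta M^{(t)}} \preceq I - (e^{\delta}-1)\, M^{(t)}\ \text{if}\ M^{(t)} \preceq 0.\]
Dividing by $\Phi_t$ gives the one-step estimate $\Phi_{t+1}/\Phi_t \leq 1 - c_t \tr(\rho^{(t)} M^{(t)})$, with $c_t = 1-e^{-\delta}$ in the PSD case and $c_t = e^{\delta}-1$ in the NSD case. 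Applying $1-y \leq e^{-y}$ (valid for all real $y$) and telescoping over $t$ yields
\[\Phi_{T+1} \leq n \exp\bigl(-(1-e^{-\delta})\, P - (e^{\delta}-1)\, N\bigr),\]
where $P$ and $N$ denote the partial sums from the proposition.

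For the lower bound, the Gibbs variational principle $\ln \tr e^A \geq \tr(A\rho) + S(\rho)$, applied to $A = -\delta \sum_t M^{(t)}$ and any density matrix $\rho$, gives $\Phi_{T+1} \geq \exp\bigl(-\delta \sum_t \tr(\rho M^{(t)})\bigr)$, since the von Neumann entropy $S(\rho) \geq 0$. Taking logarithms of both bounds and rearranging yields
\[\sum_t \tr(M^{(t)} \rho) \;\geq\; \tfrac{1-e^{-\delta}}{\delta}\, P \;+\; \tfrac{e^{\delta}-1}{\delta}\, N \;-\; \tfrac{\ln n}{\delta}.\]

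To recover the exact constants stated, it remains to replace $\tfrac{1-e^{-\delta}}{\delta}$ and $\tfrac{e^{\delta}-1}{\delta}$ by $1 \mp \delta$, using the elementary estimates $1-e^{-\delta} \geq \delta(1-\delta)$ and $e^{\delta}-1 \leq \delta(1+\delta)$ valid for $\delta \in [0,\tfrac{1}{2}]$, in combination with the signs $P \geq 0$ and $N \leq 0$ (which follow automatically since $\rho^{(t)} \succeq 0$ forces $\tr(\rho^{(t)} M^{(t)})$ to inherit the sign of $M^{(t)}$). The main bookkeeping hazard is exactly here: the PSD and NSD cases produce asymmetric prefactors, and aligning them to the symmetric $(1\mp\delta)$ pair requires pairing each relaxation with the correct sign of $\tr(\rho^{(t)} M^{(t)})$ so that the final inequality still points in the claimed direction.
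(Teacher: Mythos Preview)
Your potential-function argument is the standard one and is carried out correctly through the displayed inequality
\[
\sum_{t} \tr(M^{(t)}\rho) \;\ge\; \frac{1-e^{-\delta}}{\delta}\,P \;+\; \frac{e^{\delta}-1}{\delta}\,N \;-\; \frac{\ln n}{\delta}.
\]
The problem is the last paragraph: what you actually obtain after the $(1\mp\delta)$ relaxation is
\[
\sum_{t}\tr(M^{(t)}\rho) \;\ge\; (1-\delta)\,P + (1+\delta)\,N - \frac{\ln n}{\delta},
\]
which is \emph{not} the stated proposition. Compared to the target $(1+\delta)P+(1-\delta)N \ge \sum_t\tr(M^{(t)}\rho)-\tfrac{\ln n}{\delta}$, both the direction of the inequality and the placement of the $(1\pm\delta)$ factors on $P$ versus $N$ are swapped. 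This is not recoverable by bookkeeping: a one-line sanity check shows the proposition, read literally against the update $W^{(t+1)}=\exp[-\delta\sum_\tau M^{(\tau)}]$, is false (take $n=2$, $M^{(t)}=|0\rangle\langle 0|$ for all $t$, $\rho=|0\rangle\langle 0|$; then $P$ stays bounded while the right side grows like $T$).

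The underlying issue is a sign-convention mismatch in the paper: Algorithm~1 is written with a minus sign (loss convention), but the proposition---and the way it is used in the proof of the master theorem---matches Kale's \emph{gain} convention $W^{(t+1)}=\exp[+\delta\sum_\tau M^{(\tau)}]$ (note the text even calls $M^{(t)}$ the ``gain matrix''). If you rerun your Golden--Thompson/secant/variational argument with that sign, the roles of $1-e^{-\delta}$ and $e^{\delta}-1$ swap, the inequality flips, and your final relaxation lands exactly on $(1+\delta)P+(1-\delta)N \ge \sum_t\tr(M^{(t)}\rho)-\tfrac{\ln n}{\delta}$. So the fix is to state explicitly which update rule you are analyzing and redo the computation with the plus sign; as written, your proof establishes the dual (loss) bound rather than the claimed one.
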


We use \algo{matrixMW} and \prop{matrixMW} to test the feasibility of SDPs.
\begin{theorem}[Master Algorithm]\label{thm:SDP-master-1}
Assume we are given \ora{violation}. Then for any $\epsilon>0$, feasibility of the SDP in \eqn{SDP-1}, \eqn{SDP-2}, and \eqn{SDP-3} can be tested by \algo{matrixMW-gain} with at most $\frac{16\ln n}{\epsilon^{2}}$ queries to the oracle.
\end{theorem}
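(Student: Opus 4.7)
The plan is to run \algo{matrixMW} with carefully chosen gain matrices built from the output of \ora{violation}, and then invoke \prop{matrixMW} against any hypothetical feasible point to bound the number of iterations. Concretely, at iteration $t$ I would feed the current $\rho^{(t)}$ to \ora{violation}. If the oracle reports ``FEASIBLE'', then $\rho^{(t)} \in \mathcal{S}_{\epsilon}$ and we output it. Otherwise the oracle returns an index $i_t$ with $\tr(A_{i_t}\rho^{(t)}) > a_{i_t}+\epsilon$, from which I build a gain matrix $M^{(t)}$ which is an affine rescaling of $A_{i_t}$ (of the form $M^{(t)} = \tfrac{1}{2}(I + a_{i_t} I - A_{i_t})$ or similar). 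The point of the rescaling is twofold: using $-I\preceq A_{i_t}\preceq I$ and $|a_{i_t}|\le 1$, the matrix $M^{(t)}$ is PSD (so the sign hypothesis of \prop{matrixMW} holds), and since constant shifts of $M^{(t)}$ do not change the differences $\tr(M^{(t)}\rho)-\tr(M^{(t)}\sigma)$, the violation still yields a gap $\tr(M^{(t)}(X-\rho^{(t)})) \ge \epsilon/2$ for every $X\in\mathcal{S}_0$.

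I would then set the learning rate $\delta = \epsilon/4$ and the horizon $T = 16\ln n/\epsilon^{2}$, and argue by contradiction. Assume $\mathcal{S}_0\neq\emptyset$ and that the algorithm has not halted in any of the first $T$ iterations. Pick any $X\in\mathcal{S}_0$ and apply \prop{matrixMW} with $\rho = X$. By PSD-ness of all $M^{(t)}$, the left-hand side of the bound becomes $(1+\delta)\sum_t \tr(M^{(t)}\rho^{(t)})$, while the right-hand side is $\sum_t \tr(M^{(t)}X) - (\ln n)/\delta$. Using the per-round gap $\tr(M^{(t)}X) - \tr(M^{(t)}\rho^{(t)}) \ge \epsilon/2$ and the trivial bound $\tr(M^{(t)}\rho^{(t)}) \le 1$ from $\|M^{(t)}\|\le 1$, the two sides combine to give an inequality of the form $\delta T + (\epsilon/2)T \le (\ln n)/\delta$ (up to constants absorbed by the $\le 1$ bound). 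Plugging in $\delta = \epsilon/4$ yields $T \le 16\ln n/\epsilon^{2}$, contradicting our assumption that no FEASIBLE response occurred. Hence either the algorithm halts early with a certificate $\rho^{(t)}\in\mathcal{S}_{\epsilon}$, or the entire run of $T$ rounds is a certificate that $\mathcal{S}_0=\emptyset$, in which case we output ``fail''.

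In either outcome, the total number of oracle queries is at most $T = 16\ln n/\epsilon^{2}$, as claimed. The main obstacle, and the step that deserves the most care, is the bookkeeping of constants in the gain matrix: one must choose the additive shift and multiplicative scaling of $A_{i_t}-a_{i_t}I$ so that simultaneously (i) $M^{(t)}\succeq 0$, (ii) $\|M^{(t)}\|\le 1$ so that the regret term $(1+\delta)\sum_t\tr(M^{(t)}\rho^{(t)})$ does not blow up, and (iii) the $\epsilon$-sized margin between $\tr(M^{(t)}\rho^{(t)})$ and $\tr(M^{(t)}X)$ survives the rescaling with no loss worse than a factor of two. Getting the advertised constant $16$ then amounts to plugging $\delta = \epsilon/4$ into the combined inequality; any looser choice of shift would still give an $O(\ln n/\epsilon^{2})$ bound but with a worse constant.
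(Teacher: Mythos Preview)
Your proposal is essentially the same approach as the paper's proof: feed the gain matrices from the violated constraints into \prop{matrixMW} with $\delta=\epsilon/4$, then derive a contradiction against any $X\in\mathcal{S}_0$. Two small corrections are worth noting. First, the paper takes the simpler gain matrix $M^{(t)}=\tfrac{1}{2}(I-A_{i_t})$ rather than $\tfrac{1}{2}(I+a_{i_t}I-A_{i_t})$; this already satisfies your desiderata (i)--(iii) directly from $-I\preceq A_{i_t}\preceq I$, and avoids having to justify $|a_{i_t}|\le 1$ (which is not assumed). Second, your combined inequality should read $(\epsilon/2-\delta)T<(\ln n)/\delta$ rather than $\delta T+(\epsilon/2)T\le(\ln n)/\delta$; with $\delta=\epsilon/4$ this still gives the advertised $T<16\ln n/\epsilon^2$, and the strict inequality (coming from the oracle's strict violation) is what makes the contradiction go through.
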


\begin{algorithm}[htbp]
Initialize the weight matrix $W^{(1)}=I_{n}$, and $T=\frac{16\ln n}{\epsilon^{2}}$\;
\For{$t=1,2,\ldots,T$}{
	Prepare the Gibbs state $\rho^{(t)}=\frac{W^{(t)}}{\Tr[W^{(t)}]}$\; \label{line:Gibbs}
	Find a $j^{(t)}\in\{1,2,\ldots,m\}$ such that $\tr(A_{j^{(t)}}  \rho^{(t)}) >a_{j^{(t)}}+\epsilon$ by \ora{violation}. Take $M^{(t)}=\frac{1}{2}(I_{n}-A_{j^{(t)}})$ if such $j^{(t)}$ can be found; otherwise, claim that $\mathcal{S}_{\epsilon}\neq\emptyset$, output $\rho^{(t)}$ as a feasible solution, and terminate the algorithm\; \label{line:observe-gain-matrix}
	Define the new weight matrix: $W^{(t+1)}=\exp[-\frac{\epsilon}{2}\sum_{\tau=1}^{t}M^{(\tau)}]$\;
	}
Claim that $\mathcal{S}_{0}=\emptyset$ and terminate the algorithm\;
\caption{Matrix multiplicative weights algorithm for testing the feasibility of SDPs.}
\label{algo:matrixMW-gain}
\end{algorithm}

\begin{proof}[Proof of \thm{SDP-master-1}]
For all $j\in\range{m}$, denote $M_{j}=\frac{1}{2}(I_{n}-A_{j})$; note that $0\preceq M_{j}\preceq I\ \forall j\in\range{m}$. In round $t$, after computing the density matrix $\rho^{(t)}$, equivalently speaking, \ora{violation} checks whether there exists a $j\in\range{m}$ such that $\tr(M_{j} \rho^{(t)}) <\frac{1}{2}-\frac{a_{j}+\epsilon}{2}$. If not, then $\tr(M_{j} \rho^{(t)}) \geq\frac{1}{2}-\frac{a_{j}+\epsilon}{2}\ \forall j\in\range{m}$, $\tr(A_{j}  \rho^{(t)}) \leq a_{j}+\epsilon \ \forall j\in\range{m}$, and hence $\rho^{(t)}\in\mathcal{S}_{\epsilon}$.

Otherwise, the oracle outputs an $M_{j^{(t)}}\in \{M_{j}\}_{j=1}^{m}$ such that $\tr(M_{j^{(t)}} \rho^{(t)}) <\frac{1}{2}-\frac{a_{j^{(t)}}+\epsilon}{2}$. After $T=\frac{16\ln n}{\epsilon^{2}}$ iterations, by \prop{matrixMW} (taking $\delta=\epsilon/4$ therein), this matrix multiplicative weights algorithm promises that for any density matrix $\rho$, we have
\begin{align}\label{eqn:feasibility-matrixMW}
\Big(1+\frac{\epsilon}{4}\Big)\sum_{t=1}^{T}\tr(M_{j^{(t)}} \rho^{(t)}) \geq\sum_{t=1}^{T} \tr(M_{j^{(t)}} \rho)-\frac{4\ln n}{\epsilon}.
\end{align}
If $\mathcal{S}_{0}\neq\emptyset$, there exists a $\rho^{*}\in\mathcal{S}_{0}$ such that $\tr(M_{j^{(t)}} \rho^{*}) \geq\frac{1}{2}-\frac{a_{j^{(t)}}}{2}$ for all $t\in\range{T}$. On the other hand, $\tr(M_{j^{(t)}} \rho^{(t)})<\frac{1}{2}-\frac{a_{j^{(t)}}+\epsilon}{2}$ for all $t\in\range{T}$. Plugging these two inequalities into \eqn{feasibility-matrixMW}, we have
\begin{align}
\Big(1+\frac{\epsilon}{4}\Big)\sum_{t=1}^{T}\Big(\frac{1}{2}-\frac{a_{j^{(t)}}+\epsilon}{2}\Big)> \sum_{t=1}^{T}\Big(\frac{1}{2}-\frac{a_{j^{(t)}}}{2}\Big)-\frac{4\ln n}{\epsilon},
\end{align}
which is equivalent to
\begin{align}\label{eqn:feasibility-matrixMW-1}
\frac{16\ln n}{\epsilon^{2}}>\frac{3+\epsilon}{2}T+\frac{1}{2}\sum_{t=1}^{T}a_{j^{(t)}}.
\end{align}
Furthermore, since $\frac{1}{2}-\frac{a_{j^{(t)}}}{2}\leq\tr(M_{j^{(t)}} \rho^{*})\leq 1$, we have $a_{j^{(t)}}\geq -1$\ for all $t\in\range{T}$. Plugging this into \eqn{feasibility-matrixMW-1}, we have $\frac{16\ln n}{\epsilon^{2}}>(1+\frac{\epsilon}{2})T$, and hence
\begin{align}
T<\frac{16\ln n}{\epsilon^{2}(1+\epsilon/2)}<\frac{16\ln n}{\epsilon^{2}},
\end{align}
contradiction! Therefore, if $\tr(M_{j^{(t)}} \rho^{(t)})<\frac{1}{2}-\frac{a_{j^{(t)}}+\epsilon}{2}$ happens for at least $\frac{16\ln n}{\epsilon^{2}}$ times, it must be the case that $\mathcal{S}_{0}=\emptyset$.
\end{proof}


\section{Fast quantum OR lemma}\label{append:fast}
To use our master algorithm (\algo{matrixMW-gain}), a key step is to implement \ora{violation} that finds a violated constraint in the SDP. This is basically to search among $m$ measurements, which motivates us to use the quantum OR lemma from \cite{harrow2017sequential}.

\begin{lemma}[Corollary 11 of \cite{harrow2017sequential}]\label{lem:quantum-OR}
Let $\Lambda_{1},\ldots,\Lambda_{m}$ be projectors, and fix parameters $0<\epsilon\leq 1/2$, $0<\delta<1/4m$. Let $\rho$ be a state such that either $\exists\,j\in\range{m}$ such that $\Tr[\rho\Lambda_{j}]\geq 1-\epsilon$, or $\frac{1}{m}\sum_{j=1}^{m}\Tr[\rho\Lambda_{j}]\leq\delta$. Then there is a test that uses one copy of $\rho$ and: in the former case, accepts with probability at least $(1-\epsilon)^{2}/7$; in the latter case, accepts with probability at most $4\delta m$.
\end{lemma}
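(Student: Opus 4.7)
The test I would use is to measure the binary projective POVMs $\{\Lambda_i, I - \Lambda_i\}$ sequentially in a uniformly random order on the single copy of $\rho$, accepting if at least one step returns the $\Lambda$ outcome. Concretely, sample $\pi \in S_m$ uniformly, apply $\{\Lambda_{\pi(1)}, I - \Lambda_{\pi(1)}\}, \ldots, \{\Lambda_{\pi(m)}, I - \Lambda_{\pi(m)}\}$ in that order, and accept iff some step returns ``$\Lambda$''. This consumes only one copy of $\rho$, as required.

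For the NO case, I would first establish the general inequality $\Pr[\text{some measurement accepts}] \leq \sum_i \Tr[\rho \Lambda_i]$ for any fixed ordering of sequential binary projective measurements. This follows by induction: writing $R_j = I - \Lambda_j$, the probability of first acceptance at step $i+1$ is $\Tr[\Lambda_{i+1} R_i \cdots R_1 \rho R_1 \cdots R_i]$, which is bounded by $\Tr[\rho \Lambda_{i+1}]$ since each $R_j$ is a contraction. Averaging over $\pi$ and invoking the hypothesis $\frac{1}{m}\sum_i \Tr[\rho \Lambda_i] \leq \delta$ yields an acceptance probability of at most $\delta m$, within the claimed bound of $4 \delta m$.

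For the YES case, let $j^* \in \range{m}$ satisfy $\Tr[\rho \Lambda_{j^*}] \geq 1 - \epsilon$, and condition on the position $k = \pi^{-1}(j^*)$ of $j^*$ in the random order. The first $k-1$ measurements are then a uniformly random size-$(k-1)$ subset of $\range{m} \setminus \{j^*\}$, whose expected total acceptance on $\rho$ is at most $(k-1)\delta < 1/4$ by hypothesis. By the NO-case inequality applied to this sub-sequence and the gentle measurement lemma propagated along the rejection trajectory, the post-rejection state conditioned on all $k-1$ of them rejecting is close to $\rho$ in trace distance, so $\Lambda_{j^*}$ accepts at step $k$ with probability close to $1 - \epsilon$. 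Averaging over $k$ and combining with the probability that $\Lambda_{j^*}$ is reached without an earlier acceptance gives the lower bound $(1-\epsilon)^2 / 7$.

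The main obstacle is the propagation of trace-distance error along a random sequence of rejecting projective measurements: a naive step-by-step application of gentle measurement loses factors that are too large, and the random permutation $\pi$ is essential to make the earlier measurements behave well on average rather than adversarially. Obtaining the explicit constant $(1-\epsilon)^2/7$ likely requires either a careful operator inequality bounding the joint fidelity of the full rejection trajectory with $\rho$, or a coherent implementation of the test via a control register in superposition, which avoids per-step disturbance bookkeeping at the price of an amplitude-level calculation.
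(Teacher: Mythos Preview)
Your sequential-measurement proposal is a genuinely different route from the paper's. The result cited here (Corollary~11 of Harrow--Lin--Montanaro) does not measure the $\Lambda_i$ one by one; it forms the average $\Lambda=\frac1m\sum_i\Lambda_i$, block-encodes it via projectors $\Pi=\sum_i\Lambda_{i+1}\otimes(Q\proj{i}Q^\dagger)$ and $\Delta=I\otimes\proj{0}$, and runs phase estimation on the rotation $(I-2\Pi)(I-2\Delta)$ to estimate eigenvalues of $\Lambda$ on $\rho$, accepting when the estimate exceeds a threshold $\lambda=(1-\varepsilon)/(2m)$. The analysis rests on the single operator inequality $\tr(P_{\ge\lambda}\rho)\ge[\max_i\tr(\Lambda_i\rho)-m\lambda]^2$, which delivers both the YES and NO bounds directly and with the stated constants, without any disturbance bookkeeping.

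Your NO-case argument contains a concrete error. The step
\[
\Tr\bigl[\Lambda_{i+1}\,R_i\cdots R_1\,\rho\,R_1\cdots R_i\bigr]\ \le\ \Tr[\rho\,\Lambda_{i+1}]
\]
is false: conjugation by a projector is a contraction in operator norm but does not decrease expectation against a fixed PSD observable. Take $\rho=\proj{+}$, $\Lambda_1=\proj{0}$, $\Lambda_2=\proj{-}$; then $\Tr[\rho\Lambda_2]=0$ while $\Tr[\Lambda_2 R_1\rho R_1]=\tfrac14$. The right tool is the quantum union bound (Sen/Gao), which produces exactly the factor~$4$ in $4\delta m$ that your argument claims to avoid. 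Your YES-case sketch has a further problem beyond the one you flag: you invoke ``$(k-1)\delta<1/4$ by hypothesis'' for the first $k-1$ measurements, but in the YES case there is \emph{no} hypothesis bounding the other $\Tr[\rho\Lambda_i]$ --- they could all equal~$1$ --- so the gentle-measurement step has nothing to bite on. The eigenvalue-estimation approach sidesteps both difficulties.
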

\noindent
However, the focus of \lem{quantum-OR} was on the single copy of $\rho$ and its proof in~\cite{harrow2017sequential} leads to a poor gate complexity.  As a result, we prove the "fast" quantum OR lemma below (\lem{fast-quantum-OR-cite}). This new version basically follows the analysis of the original quantum OR lemma; however, the projections are implemented with a quadratic speed-up in $m$ by the fast amplification technique in \cite{nagaj2009fast}. This speed-up enables us to decouple the cost of $\sqrt{m}\cdot\sqrt{n}$ in \cite{brandao2016quantum,vanApeldoorn2017quantum} to $(\sqrt{m}+\sqrt{n})$ (see \append{SDP-plain} and \append{SDP-efficient} for more details); in particular, it leads to the optimal bound for solving SDPs when other parameters are constants.

\begin{lemma}\label{lem:fast-quantum-OR-cite}
Let $\Lambda_{1},\ldots,\Lambda_{m}$ be projections, and fix parameters $0<\varepsilon\leq 1/2$ and $\varphi$. Let $\rho$ be a state such that either $\exists\,i\in\range{m}$ such that $\Tr[\rho\Lambda_{i}]\geq 1-\varepsilon$, or $\frac{1}{m}\sum_{j=1}^{m}\Tr[\rho\Lambda_{j}]\leq\varphi$. Then there is a test that uses one copy of $\rho$ and: in the former case, accepts with probability at least $(1-\varepsilon)^{2}/4 - \xi$; in the latter case, accepts with probability at most $3\varphi m + \xi$; here $\xi$ satisfies $\xi>0$ and $(1-\varepsilon)^{2}/4 - \xi>3\varphi m + \xi$. Furthermore, as long as the controlled reflection $\operatorname{ctrl}-(I-2\sum_{i = 0}^{m-1}\Lambda_{i+1}\otimes\proj{i})$ can be performed in at most $p$ operations, this test requires only $O(\xi^{-1}\sqrt{m}(p+\poly(\log m)))$ operations to complete.
\end{lemma}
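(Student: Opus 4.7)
The plan is to follow the structure of the original quantum OR lemma~\cite{harrow2017sequential}, but implement its key amplification step in $O(\sqrt{m})$ gates rather than $O(m)$ by combining it with a Szegedy-style walk and the fast amplification technique of~\cite{nagaj2009fast}. I would introduce a $\lceil\log_2 m\rceil$-qubit ancilla register prepared in the uniform superposition $|+\rangle:=m^{-1/2}\sum_{i=0}^{m-1}|i\rangle$, and consider the large projection $\tilde\Lambda:=\sum_{i=0}^{m-1}\Lambda_{i+1}\otimes|i\rangle\langle i|$, whose controlled reflection $I-2\tilde\Lambda$ is exactly the primitive that the lemma assumes costs $p$ operations. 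The basic identity
\begin{equation*}
\Tr\!\big[\tilde\Lambda\,(\rho\otimes|+\rangle\langle+|)\big] \;=\; \tfrac{1}{m}\sum_{i=1}^{m}\Tr[\rho\Lambda_i]
\end{equation*}
gives ``acceptance'' probability $\geq (1-\varepsilon)/m$ in the YES case and $\leq\varphi$ in the NO case; the entire task is to amplify this $\Theta(1/m)$ gap quickly.

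To amplify, I would run the walk operator $W:=(I-2\tilde\Lambda)\,R_+$, where $R_+:=I\otimes(2|+\rangle\langle+|-I)$ is implementable in $\poly(\log m)$ gates via Hadamards and a reflection about $|0\rangle$. On each Jordan two-dimensional invariant subspace attached to an eigencomponent $|\phi\rangle$ appearing in a purification of $\rho$, $W$ acts as a rotation by an angle $2\theta_\phi$ with $\sin^2\theta_\phi=\langle\phi|\otimes\langle+|\,\tilde\Lambda\,|\phi\rangle\otimes|+\rangle$. I would then apply $W^K$ for $K$ sampled uniformly at random from $\{0,1,\ldots,K_{\max}-1\}$ with $K_{\max}=\Theta(\xi^{-1}\sqrt{m})$, and finally measure $\tilde\Lambda$. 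The random-$K$ trick circumvents the over-rotation problem intrinsic to mixed-state amplitude amplification: for $\theta_\phi\gtrsim 1/K_{\max}$, the expectation $\mathbb{E}_K[\sin^2((2K+1)\theta_\phi)]$ concentrates at $\tfrac12\pm O(\xi)$, while for small $\theta_\phi$ the bound $\sin^2((2K+1)\theta_\phi)\leq((2K+1)\theta_\phi)^2$ keeps the soundness linear in $\theta_\phi^2$. Decomposing $\rho=\sum_k p_k|\phi_k\rangle\langle\phi_k|$ and using the YES-case hypothesis that some $i^\ast$ has $\sum_k p_k\langle\phi_k|\Lambda_{i^\ast}|\phi_k\rangle\geq 1-\varepsilon$, one obtains a factor $(1-\varepsilon)/2$ from the random-$K$ amplification step and an additional $(1-\varepsilon)/2$ from the overlap of the initial state $\rho\otimes|+\rangle\langle+|$ with the accepting Jordan subspaces, yielding completeness $(1-\varepsilon)^2/4-\xi$. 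In the NO case, $\sum_k p_k \sin^2\theta_{\phi_k}\leq\varphi$, so summing $((2K+1)\theta_{\phi_k})^2$ and averaging the random-$K$ choice gives the $3\varphi m+\xi$ upper bound.

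The gate count is immediate from this construction: $K_{\max}=O(\xi^{-1}\sqrt{m})$ applications of $W$, each costing $p+\poly(\log m)$, together with $\poly(\log m)$ overhead for ancilla preparation and the final measurement, totaling $O(\xi^{-1}\sqrt{m}(p+\poly(\log m)))$ as claimed. The main obstacle is the mixed-state analysis needed to extract the specific $(1-\varepsilon)^2/4$ constant rather than a weaker linear-in-$(1-\varepsilon)$ bound: one must track the Jordan-subspace decomposition of $\rho\otimes|+\rangle\langle+|$ under $W$ and control how the random-$K$ expectation interacts simultaneously with components whose angles $\theta_{\phi_k}$ may differ drastically. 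Here the analysis in~\cite{harrow2017sequential} supplies the template, with the sequential $O(m)$ measurement block being replaced by the fast $O(\sqrt{m})$ walk described above.
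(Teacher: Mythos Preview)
Your setup---the ancilla register in uniform superposition, the walk $W=(I-2\tilde\Lambda)R_+$, and the Jordan-block decomposition---matches the paper exactly, and the gate count follows immediately once the right primitive is identified. The gap is in the choice of primitive: random-$K$ amplitude amplification followed by measuring $\tilde\Lambda$ does not yield the stated bounds. For completeness, on each ``good'' Jordan block (angle $\theta_j\gtrsim\sqrt{\lambda}$) the average $\mathbb{E}_K[\sin^2((2K{+}1)\theta_j)]$ saturates at $\tfrac12$, so after multiplying by the weight $\tr(P_{\ge\lambda}\rho)\ge(1-\varepsilon)^2/4$ you get at best $(1-\varepsilon)^2/8$, not $(1-\varepsilon)^2/4$. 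More seriously, for soundness your own bound $\sin^2((2K{+}1)\theta)\le(2K{+}1)^2\theta^2$ averages to $\tfrac{4}{3}K_{\max}^2\theta_j^2\approx\tfrac{4}{3}K_{\max}^2\mu_j$, and summing against $\sum_j q_j\mu_j\le\varphi$ gives $O(K_{\max}^2\varphi)=O(m\varphi\,\xi^{-2})$. Thus as you increase $K_{\max}$ to shrink $\xi$, the soundness \emph{worsens}, and you cannot reach $3m\varphi+\xi$ for general $\varphi$ satisfying the lemma's hypothesis. Your informal factorization into two factors of $(1-\varepsilon)/2$ does not correspond to what the algorithm actually does.

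The paper avoids this by \emph{not} measuring $\tilde\Lambda$ after amplification. Instead it runs phase estimation of $W$ on $\rho\otimes|0\rangle\langle 0|$ to precision $(b-a)/2$ and failure probability $\xi$, where $a=\arccos\sqrt{\lambda}$, $b=\arccos\sqrt{0.8\lambda}$ with $\lambda=(1-\varepsilon)/(2m)$, and accepts iff the measured phase has magnitude at most $(a+b)/2$. By Jordan's lemma, an eigenvector of $\Lambda$ with eigenvalue $\cos^2\phi$ corresponds to walk eigenvalues $\pm\phi$, so phase estimation directly reads off the Jordan angle. On the $P_{\ge\lambda}$ part of $\rho$ (weight $\ge(1-\varepsilon)^2/4$ by the lemma implicit in \cite{harrow2017sequential}) it accepts with probability $\ge 1-\xi$; on the $P_{<0.8\lambda}$ part it accepts with probability $\le\xi$, and the remaining weight is at most $\varphi/(0.8\lambda)\le 3m\varphi$. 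This is why the error enters additively as $\pm\xi$ rather than multiplicatively as $\xi^{-2}$. The cost is $O(\xi^{-1}(b-a)^{-1})=O(\xi^{-1}\sqrt{m})$ controlled applications of $W$, each costing $p+\poly(\log m)$.
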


\begin{proof}
Similar to \cite{harrow2017sequential}, we will reduce the task of distinguishing the two cases to estimating the eigenvalues of
\begin{equation}
\Lambda:=\frac{1}{m}\sum_{i=1}^{m}\Lambda_{i},
\end{equation}
the average of these POVM operators.
Write $P_{\geq\lambda}$ for the projector onto $\text{span}\{\ket{\lambda'}: \Lambda \ket{\lambda'} = \lambda' \ket{\lambda'},\lambda' \geq\lambda\}$. Then the following was shown in \cite{harrow2017sequential}:
\begin{lemma}[{\cite[implicit in proof of Corollary 11]{harrow2017sequential}}]
For any state $\rho$ and $\lambda \le \max_i \tr (\Lambda_i \rho)/m$,
\begin{equation}
\tr(P_{\ge \lambda}\rho) \ge [\max_i \tr (\Lambda_i \rho)-m\lambda]^2.
\end{equation}
\end{lemma}
Choose $\lambda = (1-\varepsilon)/(2m)$. Then we want to distinguish between the following two cases:
\begin{enumerate}
\item $\tr(P_{\ge \lambda } \rho) \geq (1-\varepsilon - m\lambda)^2 = (1-\varepsilon)^2/4$; \label{case:OR-case-1}
\item $\tr(\Lambda \rho) \leq\varphi$. This implies $\tr(P_{\ge 0.8\lambda}\rho) \le \varphi/(0.8\lambda) \le 3m\varphi$. \label{case:OR-case-2}
\end{enumerate}
We can explicitly decompose $\Lambda$ as follows (see also \cite[Section 2]{harrow2017sequential}): Let $Q$ be the quantum Fourier transform on $\Z_{m}$, and define the projectors $\Pi = \sum_{i = 0}^{m-1} \Lambda_{i+1} \otimes (Q\proj{i}Q^\dagger)$, $\Delta = I \otimes \proj{0}$. Then
\begin{align}
\Delta \Pi \Delta = \frac{1}{m} \sum_{i=1}^{m} \Lambda_i \otimes \proj{0} = \Lambda \otimes \proj{0}.
\end{align}
where $\proj{0}$ in the above equation is shorthand for $\proj{0}^{\ell}$ for $\ell = \lceil \log m \rceil$.

Let $a = \arccos(\sqrt{\lambda})$ and $b = \arccos (\sqrt{0.8\lambda})$. Consider the following algorithm, essentially based on the fast amplification algorithm of \cite{nagaj2009fast}:

\begin{algorithm}[H]
\begin{enumerate}
\item    Create the state $\rho \otimes \proj{0}^{\otimes \ell}$.
\item    Perform phase estimation of the rotation $(I - 2\Pi)(I - 2\Delta)$ on the state, with precision $(b-a)/2$ and error probability $\xi$. Let the measured eigenvalue be $\phi$.
\item   Accept iff $|\phi| \le (a+b)/2$.
\end{enumerate}
\caption{The fast amplification algorithm in \cite{nagaj2009fast}.}
\label{algo:fast-amplification}
\end{algorithm}

The following lemma in \cite{nagaj2009fast} follows from a direct application of Jordan's lemma:
\begin{lemma}\cite[Section 2.1]{nagaj2009fast}
If $\ket{\psi} \otimes \ket{0}^{\otimes \ell}$ is an eigenvector of $\Delta\Pi\Delta$ with eigenvalue $\cos^2\phi$, then
\begin{equation}
\ket{\psi} \otimes \ket{0}^{\otimes \ell} = \frac{1}{\sqrt{2}} (\ket{\phi} + \ket{-\phi})
\end{equation}
where $\ket{\phi}$ and $\ket{-\phi}$ are some eigenvectors of $(I - 2\Pi)(I - 2\Delta)$ with eigenvalues $\phi$ and $-\phi$, respectively.
 \end{lemma}

In Case 1, we have $\tr(P_{\ge \lambda}\rho) \ge  (1-\varepsilon)^2/4$, and therefore Algorithm \ref{algo:fast-amplification} accepts with probability at least  $(1-\varepsilon)^2/4-\xi$. In Case 2, we have $\tr(P_{\ge 0.8\lambda}\rho) \le 3m\varphi$, and therefore Algorithm \ref{algo:fast-amplification} accepts with probability at most $3m\varphi + \xi$.

Algorithm \ref{algo:fast-amplification} requires applying the controlled version of the Grover iterate $(I - 2\Pi)(I - 2\Delta)$ $O(((b-a)\xi)^{-1}) = O(\sqrt{m}\xi^{-1})$ times. Furthermore, the controlled reflection ctrl-$(I - 2\Delta)$ is implementable by $O(\log m)$ gates since $\Delta = I \otimes \proj{0}^{\otimes \lceil \log m \rceil}$, and the controlled reflection ctrl-$(I - 2\Pi)$ is implementable using $O(p+\poly(\log m))$ gates by assumption.
\end{proof}

\begin{remark}
The gate complexity in \lem{fast-quantum-OR-cite} is optimal in $\sqrt{m}$, i.e., there exists projections $\Lambda_{1},\ldots,\Lambda_{m}$ and a state $\rho$ such that distinguishing whether $\exists\,i\in\range{m}$ $\Tr[\rho\Lambda_{i}]\geq 2/3$ or $\frac{1}{m}\sum_{j=1}^{m}\Tr[\rho\Lambda_{j}]\leq 1/8m$ requires at least $\Omega(\sqrt{m})$ gates. In particular, assume that $\Lambda_{i}=|i\>\<i|$ for all $i\in\range{m}$ and $\rho=|k\>\<k|$ where $k\in\range{m+1}$. Then to distinguish whether $\exists\,i\in\range{m}$ $\Tr[\rho\Lambda_{i}]\geq 2/3$ or $\frac{1}{m}\sum_{j=1}^{m}\Tr[\rho\Lambda_{j}]\leq 1/8m$, it is equivalent to searching whether $k\in\range{m}$ or not; deciding this requires at least $\Omega(\sqrt{m})$ gates due to the hardness of Grover search \cite{bennett1997strengths}.
\end{remark}


\section{Quantum SDP solver in the plain model}\label{append:SDP-plain}
Before we get into the quantum SDP solver in the plain model, we first modularize the cost of two important blocks as follows.

\begin{definition}[trace estimation]\label{defn:plain-trace}
Assume that we have an $s$-sparse $n\times n$ Hermitian matrix $H$ with $\|H\|\leq\Gamma$ and a density matrix $\rho$. Then we define $\mathcal{S}_{\tr}(s,\Gamma,\epsilon)$ and $\mathcal{T}_{\tr}(s,\Gamma,\epsilon)$ as the sample complexity of $\rho$ and the time complexity of using the plain model (\ora{plain}) of $H$ and two-qubit gates, respectively, such that one can compute $\tr[H\rho]$ with additive error $\epsilon$ with success probability at least $2/3$.
\end{definition}

\begin{definition}[Gibbs sampling]\label{defn:plain-Gibbs}
Assume that we have an $s$-sparse $n\times n$ Hermitian matrix $H$ with $\|H\|\leq\Gamma$. Then we define $\mathcal{T}_{\text{Gibbs}}(s,\Gamma,\epsilon)$ as the complexity of preparing the Gibbs state $\frac{e^{-H}}{\Tr[e^{-H}]}$ with additive error $\epsilon$ using the plain model (\ora{plain}) of $H$ and two-qubit gates.
\end{definition}

As a subsequence of \lem{fast-quantum-OR-cite}, \defn{plain-trace}, and \defn{plain-Gibbs}, we prove the following theorem under the plain model:

\begin{theorem}\label{thm:SDP-feasibility-testing-plain}
Assume we are given \ora{plain}. Furthermore, assume that $A_{j}$ is $s$-sparse for all $j\in\range{m}$. Then for any $\epsilon>0$, feasibility of the SDP in \eqn{SDP-1}, \eqn{SDP-2}, and \eqn{SDP-3} can be tested by \algo{efficient-SDP-plain} with success probability at least 0.96 and $\frac{s}{\epsilon^{4}}\tilde{O}\big(\mathcal{S}_{\tr}\big(\frac{s}{\epsilon^{2}},\frac{1}{\epsilon},\epsilon\big)\mathcal{T}_{\text{Gibbs}}\big(\frac{s}{\epsilon^{2}},\frac{1}{\epsilon},\epsilon\big)+\sqrt{m}\mathcal{T}_{\tr}\big(\frac{s}{\epsilon^{2}},\frac{1}{\epsilon},\epsilon\big)\big)$ quantum gates and queries to \ora{plain}.
\end{theorem}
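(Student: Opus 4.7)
The plan is to instantiate \algo{matrixMW-gain} from \thm{SDP-master-1} with two quantum subroutines, Gibbs state preparation and violation search, and then accumulate the per-iteration cost. \thm{SDP-master-1} already tells us the outer loop runs $T = \lceil 16\ln n/\epsilon^2 \rceil = \tilde{O}(1/\epsilon^2)$ times; the remaining work is to implement each iteration efficiently and to drive the per-iteration failure probability down to $O(1/T)$ so that a union bound gives the claimed $\geq 0.96$ overall success.

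For the Gibbs step, observe that in iteration $t$ the target state $\rho^{(t)}$ is the Gibbs state of $H^{(t)} := \frac{\epsilon}{2}\sum_{\tau=1}^{t-1}\frac{1}{2}(I-A_{j^{(\tau)}})$. Because each $A_{j^{(\tau)}}$ is $s$-sparse with $\|A_{j^{(\tau)}}\|\leq 1$, the matrix $H^{(t)}$ is at most $sT = \tilde{O}(s/\epsilon^2)$-sparse with $\|H^{(t)}\| = \tilde{O}(1/\epsilon)$, and a sparse-access oracle to its entries can be simulated from $O(T)$ queries to \ora{plain} together with the classically stored indices $j^{(1)},\dots,j^{(t-1)}$. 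Hence preparing a single copy of $\rho^{(t)}$ to sufficient accuracy costs $\mathcal{T}_{\text{Gibbs}}(\tilde{O}(s/\epsilon^2),\tilde{O}(1/\epsilon),\epsilon)$, with the three arguments explaining the matching arguments appearing inside the theorem statement.

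For the violation step I will implement \ora{violation} via \lem{fast-quantum-OR-cite}. For each $i\in[m]$, let $\Lambda_i$ be the projection obtained from a coherent trace-estimation circuit that estimates $\tr(A_i\rho^{(t)})$ to precision $\epsilon/8$ with failure probability $O(1/\poly(m))$ and thresholds the answer register against $a_i+3\epsilon/4$; applying one such projection costs $\mathcal{T}_{\tr}$ gates and internally consumes $\mathcal{S}_{\tr}$ copies of $\rho^{(t)}$, both with arguments $(\tilde{O}(s/\epsilon^2),\tilde{O}(1/\epsilon),\epsilon)$ (the Hamiltonian-simulation parameters inherited from the Gibbs sampler). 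By construction, if constraint $i^*$ is $\epsilon$-violated then $\tr(\Lambda_{i^*}\rho^{(t)})\geq 1-O(1/\poly(m))$, while if no constraint is violated then $\tr(\Lambda_i\rho^{(t)})=O(1/\poly(m))$ uniformly in $i$, so the hypotheses of \lem{fast-quantum-OR-cite} are met with $\varphi=O(1/(mT))$ and a suitably small $\xi$. The lemma decides between the two cases using one external copy of $\rho^{(t)}$ and $\tilde{O}(\sqrt{m}\,\mathcal{T}_{\tr})$ gates, the controlled reflection about $\sum_i\Lambda_i\otimes|i\rangle\langle i|$ being realized by an $|i\rangle$-controlled copy of the trace-estimation circuit. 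A final $O(\log m)$-round binary search over $[m]$, using fresh Gibbs states and OR tests restricted to sub-intervals, promotes this decision procedure into the violating index $j^{(t)}$ required by \algo{matrixMW-gain}.

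Summing gives per-iteration cost $\mathcal{S}_{\tr}\cdot\mathcal{T}_{\text{Gibbs}}$ (for the Gibbs samples consumed inside the trace-estimator used to build each $\Lambda_i$) plus $\sqrt{m}\,\mathcal{T}_{\tr}$ (for the outer fast OR test). Multiplying by the $T=\tilde{O}(1/\epsilon^2)$ iterations and by an extra $\tilde{O}(s/\epsilon^2)$ overhead needed to tighten per-iteration error to $O(1/T)$ and to suppress all subroutine failure probabilities below $1/\poly(mT)$ yields the claimed $\frac{s}{\epsilon^4}\tilde{O}(\mathcal{S}_{\tr}\mathcal{T}_{\text{Gibbs}}+\sqrt{m}\mathcal{T}_{\tr})$ bound. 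The main obstacle is the triple error accounting—approximate Gibbs states feeding approximate trace estimates feeding an approximate OR decision—together with verifying that the regret guarantee of \prop{matrixMW} survives these approximations uniformly across all $T$ rounds. The conceptual crux that enables the improvement over prior $\sqrt{mn}$-type bounds is that \lem{fast-quantum-OR-cite} consumes only one external copy of $\rho^{(t)}$ per invocation, so that the $\sqrt{m}$ Grover-like factor attaches only to the cheap $\mathcal{T}_{\tr}$ rather than to the much more expensive $\mathcal{T}_{\text{Gibbs}}$.
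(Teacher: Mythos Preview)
Your overall approach matches the paper's: instantiate the master algorithm of \thm{SDP-master-1}, prepare Gibbs states of the accumulated Hamiltonian $H^{(t)}$ (whose sparsity is $\tilde{O}(s/\epsilon^2)$ and norm $\tilde{O}(1/\epsilon)$), and search for a violated constraint via \lem{fast-quantum-OR-cite} followed by a $\log m$-depth binary search. The key insight you articulate---that the $\sqrt{m}$ factor from the OR lemma multiplies only $\mathcal{T}_{\tr}$ because the lemma consumes a single tensor-power copy of $\rho^{(t)}$---is exactly the paper's point.

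There is, however, a genuine accounting error. You attribute the extra $\tilde{O}(s/\epsilon^2)$ prefactor to ``tighten per-iteration error to $O(1/T)$ and to suppress all subroutine failure probabilities below $1/\poly(mT)$''. This cannot be right: success-probability amplification by repetition and majority vote costs only $O(\log(T\,\poly(m)))$, which is absorbed into the $\tilde{O}$. The actual source of the $\tilde{O}(s/\epsilon^2)$ factor is the cost of \emph{simulating} the sparse-access oracle for $H^{(t)}=\frac{\epsilon}{2}\sum_{\tau<t}M^{(\tau)}$ from \ora{plain} for the individual $A_j$'s. You state this overhead as $O(T)$ queries per simulated query, but in fact one call to a sparse oracle for a sum of $t$ matrices, each $s$-sparse, requires $\Theta(st)$ calls to the constituent oracles (one must enumerate all nonzeros in the requested row of each summand to locate the $l$-th nonzero of the sum); the paper cites Appendix~D of \cite{vanApeldoorn2017quantum} for the tight bound $\Theta(sT)=\tilde{O}(s/\epsilon^2)$. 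So your final complexity is correct, but the factor $s/\epsilon^2$ arises for a different reason than you give, and your stated $O(T)$ oracle-simulation cost is too small by a factor of $s$.
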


\begin{algorithm}[htbp]
Initialize the weight matrix $W^{(1)}=I_{n}$, and $T=\frac{16\ln n}{\epsilon^{2}}$\;
\For{$t=1,2,\ldots,T$}{
	Prepare $\log m\cdot\mathcal{S}_{\tr}(\frac{s\log n}{\epsilon^{2}},\frac{\log n}{\epsilon},\epsilon)$ samples of Gibbs state $\rho^{(t)}=\frac{W^{(t)}}{\Tr[W^{(t)}]}$ by \defn{plain-Gibbs}\; \label{line:Gibbs-efficient-plain}
    Using these $\log m\cdot\mathcal{S}_{\tr}(\frac{s\log n}{\epsilon^{2}},\frac{\log n}{\epsilon},\epsilon)$ copies of $\rho^{(t)}$, search for a $j^{(t)}\in\range{m}$ such that $\Tr[A_{j^{(t)}} \rho^{(t)}]>a_{j^{(t)}}+\epsilon$ by \lem{fast-quantum-OR-cite} (for each $j$, we use \defn{plain-trace} to compute $\Tr[A_{j}\rho]$). If such $j^{(t)}$ is found, take $M^{(t)}=\frac{1}{2}(I_{n}-A_{j^{(t)}})$; otherwise, claim that $\mathcal{S}_{\epsilon}\neq\emptyset$ (the SDP is feasible)\; \label{line:observe-gain-matrix-efficient-plain}
	Define the new weight matrix: $W^{(t+1)}=\exp[-\frac{\epsilon}{4}\sum_{\tau=1}^{t}M^{(\tau)}]$\;
	}
Claim that $\mathcal{S}_{0}=\emptyset$ and terminate the algorithm.
\caption{Efficiently testing the feasibility of SDPs: Plain model.}
\label{algo:efficient-SDP-plain}
\end{algorithm}

\begin{proof}[Proof of \thm{SDP-feasibility-testing-plain}]
The correctness of \algo{efficient-SDP-plain} is automatically established by \thm{SDP-master-1}; it suffices to analyze the gate cost of \algo{efficient-SDP-plain}.

In Line \ref{line:Gibbs-efficient-plain} of \algo{efficient-SDP-plain}, we apply \defn{plain-Gibbs} to compute the Gibbs state $\rho^{(t)}$. In round $t$, because $t\leq\frac{16\ln n}{\epsilon^{2}}$, $\frac{\epsilon}{4}\sum_{\tau=1}^{t}M^{(\tau)}$ has sparsity at most $s'\leq t\cdot s=O(\frac{s\log n}{\epsilon^{2}})$, and $\|\frac{\epsilon}{4}\sum_{\tau=1}^{t}M^{(\tau)}\|\leq \frac{\epsilon}{4}\cdot t=O(\frac{\log n}{\epsilon})$. As a result, $\mathcal{T}_{\text{Gibbs}}(\frac{s\log n}{\epsilon^{2}},\frac{\log n}{\epsilon},\epsilon)$ quantum gates and queries to \ora{plain} suffice to prepare a copy of the Gibbs state $\rho^{(t)}$. In addition, since to query an element of $\frac{\epsilon}{4}\sum_{\tau=1}^{t}M^{(\tau)}$ we need to query each of the $A_{j^{(\tau)}}$, we have an overhead of $s\cdot\frac{16\ln n}{\epsilon^{2}}$ for constructing \ora{plain} for $\frac{\epsilon}{4}\sum_{\tau=1}^{t}M^{(\tau)}$ (in particular, Appendix D of the full version of \cite{vanApeldoorn2017quantum} showed that this overhead $\Theta(\frac{s\ln n}{\epsilon^{2}})$ is necessary and sufficient for constructing the plain oracle for $\frac{\epsilon}{4}\sum_{\tau=1}^{t}M^{(\tau)}$). In total, Line \ref{line:Gibbs-efficient-plain} of \algo{efficient-SDP-plain} costs
\begin{align}
\frac{16s\ln n}{\epsilon^{2}}\cdot\log m\cdot\mathcal{S}_{\tr}\Big(\frac{s\log n}{\epsilon^{2}},\frac{\log n}{\epsilon},\epsilon\Big)\cdot \mathcal{T}_{\text{Gibbs}}\Big(\frac{s\log n}{\epsilon^{2}},\frac{\log n}{\epsilon},\epsilon\Big)
\end{align}
quantum gates and queries to \ora{plain}.

Next, using these $\log m\cdot\mathcal{S}_{\tr}(\frac{s\log n}{\epsilon^{2}},\frac{\log n}{\epsilon},\epsilon)$ copies of $\rho^{(t)}$, we apply \defn{plain-trace} for $O(\log m)$ times to create two-outcome POVMs $M_j$ for any $j\in\range{m}$ such that $M_{j}$ decides whether $\tr(A_{j}\rho)-a_{j}>\epsilon$ with success probability boosted to $1-O(1/m)$. The gate complexity of each $M_{j}$ is $\mathcal{T}_{\tr}(\frac{s\log n}{\epsilon^{2}},\frac{\log n}{\epsilon},\epsilon)$ by \defn{plain-trace}. Furthermore, because \ora{plain} is reversible, we can assume an explicit decomposition $M_{j} \otimes \proj{0}^{\otimes a} = P \Lambda_{j} P$, for some integer $a$, $P = I \otimes \proj{0}^{\otimes a}$, and some orthogonal projector $\Lambda_{j}$. Let $\widetilde{\rho} = \rho^{\otimes C}\otimes\proj{0}^a$ where $C=\log m\cdot\mathcal{S}_{\tr}(\frac{s\log n}{\epsilon^{2}},\frac{\log n}{\epsilon},\epsilon)$ for a large enough constant in the big-$O$. We therefore need to decide between the cases
\begin{enumerate}
\item $\Tr[\Lambda_j\widetilde{\rho}]\geq 1-\frac{0.01}{m}$ for some $j\in\range{m}$; or
\item $\Tr[\Lambda_j\widetilde{\rho}]\leq \frac{0.01}{m}$ for all $j\in\range{m}$.
\end{enumerate}
This corresponds to the two cases of \lem{fast-quantum-OR-cite}, where $\varepsilon=\varphi=\frac{0.01}{m}$. Because each $M_{j}$ can be implemented with $\mathcal{T}_{\tr}(\frac{s\log n}{\epsilon^{2}},\frac{\log n}{\epsilon},\epsilon)$ two-qubit gates, the total gate complexity of implementing the reflection $I-2\sum_{j=0}^{m-1}\Lambda_{j}\otimes|j\>\<j|$ is also $\mathcal{T}_{\tr}(\frac{s\log n}{\epsilon^{2}},\frac{\log n}{\epsilon},\epsilon)$. As a result, the total cost of applying \lem{fast-quantum-OR-cite} is $\tilde{O}\big(\sqrt{m}\mathcal{T}_{\tr}(\frac{s\log n}{\epsilon^{2}},\frac{\log n}{\epsilon},\epsilon)\big)$.

In \lem{fast-quantum-OR-cite}, we choose $\xi=\frac{1}{3}(\frac{(1-\varepsilon)^{2}}{4}-3m\varphi)$ -- this is a positive constant. We can thus tell the two cases apart with constant probability. Then, we repeat the call of \lem{fast-quantum-OR-cite} for $L=\Theta(\log\frac{\log n}{\epsilon^{2}})$ times and accept if and only if \lem{fast-quantum-OR-cite} accepts for at least $\frac{L}{2}\cdot(\frac{(1-\epsilon)^{2}}{4}+3m\varphi)$ times. By Chernoff's bound, this can enhance the success probability to at least $1-\frac{\epsilon^{2}}{400\ln n}$.

In all, we have a quantum algorithm that determines whether there exists a $j\in\range{m}$ such that $\tr[A_{j} \rho]\geq a_{j}+\epsilon$ with success probability at least $1-\frac{\epsilon^{2}}{400\ln n}$, using $\tilde{O}\big(\sqrt{m}\mathcal{T}_{\tr}(\frac{s\log n}{\epsilon^{2}},\frac{\log n}{\epsilon},\epsilon)\big)$ quantum gates and queries to \ora{plain}. To find this $j$, we apply binary search on $j\in\{1,2,\ldots,m\}$, i.e., apply the algorithm to $j\in\{1,\ldots,\lfloor m/2\rfloor\}$ and $j\in\{\lceil m/2\rceil,\ldots,m\}$ respectively, and if the output is yes then call the algorithm recursively. This gives an extra $\poly(\log m)$ overhead on the queries to \ora{plain}, which is still $\tilde{O}\big(\sqrt{m}\mathcal{T}_{\tr}(\frac{s\log n}{\epsilon^{2}},\frac{\log n}{\epsilon},\epsilon)\big)$. In addition, similar to the analysis of Line \ref{line:Gibbs-efficient-plain}, there is an overhead of $s\cdot\frac{16\ln n}{\epsilon^{2}}$ for constructing \ora{plain} of the Gibbs state using \ora{plain} of each of the $A_{j^{(\tau)}}$. Therefore, the total cost of executing Line \ref{line:observe-gain-matrix-efficient-plain} of \algo{efficient-SDP-plain} is
\begin{align}
\frac{s}{\epsilon^{2}}\tilde{O}\Big(\sqrt{m}\mathcal{T}_{\tr}\Big(\frac{s\log n}{\epsilon^{2}},\frac{\log n}{\epsilon},\epsilon\Big)\Big).
\end{align}

Because \algo{efficient-SDP-plain} has at most $\frac{16\ln n}{\epsilon^{2}}$ iterations, with success probability at least $1-\frac{16\ln n}{\epsilon^{2}}\cdot\frac{\epsilon^{2}}{400\ln n}=0.96$ \algo{efficient-SDP-plain} works correctly, and its execution takes
\begin{align}
&\frac{16s\ln n}{\epsilon^{4}}\cdot\Big(\log m\cdot\mathcal{S}_{\tr}\Big(\frac{s\log n}{\epsilon^{2}},\frac{\log n}{\epsilon},\epsilon\Big)\cdot \mathcal{T}_{\text{Gibbs}}\Big(\frac{s\log n}{\epsilon^{2}},\frac{\log n}{\epsilon},\epsilon\Big)+\tilde{O}\Big(\sqrt{m}\mathcal{T}_{\tr}\big(\frac{s\log n}{\epsilon^{2}},\frac{\log n}{\epsilon},\epsilon\big)\Big)\Big) \nonumber \\
&\quad=\frac{s}{\epsilon^{4}}\tilde{O}\Big(\mathcal{S}_{\tr}\Big(\frac{s}{\epsilon^{2}},\frac{1}{\epsilon},\epsilon\Big)\mathcal{T}_{\text{Gibbs}}\Big(\frac{s}{\epsilon^{2}},\frac{1}{\epsilon},\epsilon\Big)+\sqrt{m}\mathcal{T}_{\tr}\Big(\frac{s}{\epsilon^{2}},\frac{1}{\epsilon},\epsilon\Big)\Big).
\end{align}
two-qubit gates and queries to \ora{plain}.
\end{proof}

To be more explicit, the complexities of $\mathcal{S}_{\tr}$, $\mathcal{T}_{\tr}$, and $\mathcal{T}_{\text{Gibbs}}$ are given in previous literatures:

\begin{lemma}[Lemma 12, \cite{brandao2016quantum}]\label{lem:BS-trace}
Given an $s$-sparse $n\times n$ Hermitian matrix $H$ with $\|H\|\leq 1$ and a density matrix $\rho$, with probability larger than $1-p_{e}$, one can compute $\tr[H\rho]$ with additive error $\epsilon$ in time $O(s\epsilon^{-2}\log^{4}(ns/p_{e}\epsilon))$ using $O(\epsilon^{-2}\log(1/p_{e}))$ copies of $\rho$. In other words, $\mathcal{S}_{\tr}(s,1,\epsilon)=O(1/\epsilon^{2})$ and $\mathcal{T}_{\tr}(s,1,\epsilon)=O(s/\epsilon^{2})$.
\end{lemma}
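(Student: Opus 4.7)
The plan is to reduce estimating $\tr(H\rho)$ to averaging the outcome of a one-bit Hadamard test in which a controlled version of $e^{-iH\tau}$ is applied to copies of $\rho$, with the unitary realized by sparse Hamiltonian simulation. Concretely, for a small parameter $\tau>0$ to be fixed later, I would invoke a standard $s$-sparse Hamiltonian simulation subroutine (e.g.\ the Taylor-series LCU method, exploiting $\|H\|\leq 1$) to implement controlled-$e^{-iH\tau}$ to operator-norm error $\epsilon'$ in $\tilde{O}(s\tau\cdot\mathrm{polylog}(ns/\epsilon'))$ gates. Inserting an $S$ gate on the control register before the final Hadamard of the test gives a single $\pm 1$-valued bit whose expectation is $\tr(\sin(H\tau)\rho)$, and since $\|H\tau\|\leq\tau$ the Taylor bound $\|\sin(H\tau)-H\tau\|\leq\tau^{3}/6$ yields $\tr(\sin(H\tau)\rho)=\tau\tr(H\rho)+R$ with $|R|\leq\tau^{3}/6$.

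Averaging the Hadamard-test bit over $N$ independent copies of $\rho$ and dividing the empirical mean by $\tau$ produces an estimator $\widehat{X}$ whose total additive error is bounded by $\tau^{2}/6+1/(\tau\sqrt{N})+\epsilon'/\tau$, where the last term accounts for the Hamiltonian-simulation inaccuracy. Choosing $\tau$ to be a small constant, pushing the resulting constant-order Taylor bias below $\epsilon/3$ (either by running the Hadamard test inside a $\mathrm{polylog}(1/\epsilon)$-bit Kitaev phase estimation of $e^{-iH\tau}$, or by a Richardson-type linear combination of outputs at times $\tau$ and $\tau/2$), together with $N=\Theta(\epsilon^{-2}\log(1/p_{e}))$ and $\epsilon'=\Theta(\epsilon)$, pushes every error term below $\epsilon/3$ with probability at least $1-p_{e}$ by Hoeffding's inequality (or a median-of-means boost). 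Multiplying the $N$ samples by the per-copy cost $\tilde{O}(s\cdot\mathrm{polylog}(ns/(p_{e}\epsilon)))$ matches the stated total gate count $O(s\epsilon^{-2}\log^{4}(ns/(p_{e}\epsilon)))$ and copy count $O(\epsilon^{-2}\log(1/p_{e}))$.

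The main obstacle is the joint calibration of $\tau$, the simulation precision $\epsilon'$, and the sample size $N$: keeping $\tau$ at an $O(1)$ constant rather than tying it to $\sqrt{\epsilon}$ is what prevents the per-copy simulation cost from blowing up as $\tilde{O}(s/\sqrt{\epsilon})$, and is the key technical subtlety needed to achieve the claimed $\tilde{O}(s/\epsilon^{2})$ time instead of the naive $\tilde{O}(s/\epsilon^{3})$. A secondary concern is ensuring that the sparse Hamiltonian simulation used here interfaces cleanly with the plain oracle \ora{plain}, which is handled by the standard graph-colouring decomposition of the sparse access pattern, incurring only $\mathrm{polylog}(ns/\epsilon)$ overhead per step.
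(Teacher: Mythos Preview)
First, note that the paper does not prove this lemma: it is quoted directly from \cite{brandao2016quantum} as an input to \cor{SDP-feasibility-testing-plain}, so there is no in-paper argument to compare against. Your outline follows the same overall architecture as that reference --- a one-bit Hadamard test driven by sparse Hamiltonian simulation of $e^{-iH\tau}$, averaged over $N=O(\epsilon^{-2}\log(1/p_e))$ copies of $\rho$ --- so at the level of strategy you are on the right track.

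There is, however, a genuine gap in your bias-removal step. With $\tau=\Theta(1)$ the systematic error $\|\sin(H\tau)-H\tau\|/\tau=\Theta(\tau^{2})$ is a \emph{constant}, and neither of your two patches closes it within the stated budget. A $t$-bit phase estimation of $e^{-iH\tau}$ needs $\Theta(2^{t})$ controlled simulations, so taking $t=\Theta(\log(1/\epsilon))$ bits (which is what is required to push the eigenvalue-estimate bias below $\epsilon$) already costs $\Theta(1/\epsilon)$ calls to $e^{-iH\tau}$ per copy of $\rho$; multiplied by $N=\Theta(1/\epsilon^{2})$ copies this gives $\tilde O(s/\epsilon^{3})$, not $\tilde O(s/\epsilon^{2})$. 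Likewise, a two-point Richardson combination at $\tau$ and $\tau/2$ cancels only the leading $\tau^{2}$ term and leaves an $O(\tau^{4})$ bias that is still $\Theta(1)$ for constant $\tau$; driving it to $O(\epsilon)$ needs $\Theta(\log(1/\epsilon))$ nodes, and you have not controlled the resulting coefficient blow-up in the variance. One clean way to repair the argument is to use the absolutely convergent expansion $x=\arcsin(\sin x)=\sum_{j\geq 0} a_{j}(\sin x)^{2j+1}$ on $[-1,1]$, which after power-reduction becomes $x\approx\sum_{k\leq K} c_{k}\sin(kx)$ with $K=O(\log(1/\epsilon))$ and $\sum_{k}|c_{k}|\leq \pi/2$: sampling $k$ with probability proportional to $|c_{k}|$ and running a single Hadamard test with $e^{-ikH}$ yields an $O(\epsilon)$-biased, $O(1)$-variance, $\{\pm 1\}$-valued estimator at per-copy simulation cost $\tilde O(sK)=\tilde O(s)$, which recovers the claimed $O(s\epsilon^{-2}\cdot\poly\log)$ time and $O(\epsilon^{-2}\log(1/p_{e}))$ copies.
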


\begin{lemma}[\cite{poulin2009sampling}]\label{lem:PW-Gibbs}
Given an $s'$-sparse $n\times n$ Hermitian matrix $H$ with $\|H\|\leq \beta$ for some $\beta>0$, one can prepare the Gibbs state $\frac{e^{-H}}{\Tr[e^{-H}]}$ with additive error $\epsilon$ using $\tilde{O}(\frac{\sqrt{\dim(H)}\beta s'}{\epsilon})$ calls to \ora{plain} of $H$ and two-qubit gates. In other words, $\mathcal{T}_{\text{Gibbs}}(s,\Gamma,\epsilon)=\tilde{O}(s\Gamma\sqrt{n}/\epsilon)$.
\end{lemma}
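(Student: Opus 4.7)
The plan is to instantiate the Poulin--Wocjan Gibbs-sampling recipe using sparse Hamiltonian simulation as the underlying Hamiltonian oracle. Recall that the recipe prepares a purification of $e^{-H}/\Tr(e^{-H})$ by applying quantum phase estimation to one half of a maximally entangled state, performing a controlled rotation indexed by the estimated eigenvalue, uncomputing phase estimation, and then boosting the resulting small postselection probability by amplitude amplification.

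Concretely I would proceed as follows. (i) Prepare the maximally entangled state $|\Phi\rangle = \frac{1}{\sqrt{n}}\sum_{i=1}^{n}|i\rangle|i\rangle$, whose reduced density matrix on either register is $I/n$. (ii) Apply phase estimation for $H$ on the first register to additive precision $\delta$; the only nontrivial primitive is simulating $e^{-iHt}$ for $t=O(1/\delta)$, which can be done with $\tilde{O}(s'\beta/\delta)$ queries to $\mathcal{P}_A$ using standard sparse Hamiltonian simulation (e.g.\ quantum walks or LCU-based methods). (iii) Controlled on the eigenvalue register, rotate a fresh ancilla to place amplitude proportional to $e^{-\tilde{E}_i/2}$ on $|1\rangle$, choosing the proportionality constant in terms of $\beta$ so that this amplitude is always at most one for every $\tilde{E}_i \in [-\beta,\beta]$. (iv) Uncompute the phase-estimation register. (v) Postselect the ancilla on $|1\rangle$; the marginal on the two data registers is then a purification of the target Gibbs state. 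A direct calculation shows the success probability is $\Omega(1/n)$ (after the appropriate $\beta$-dependent shift/normalization, which leaves the Gibbs state invariant), so $O(\sqrt{n})$ rounds of amplitude amplification suffice to yield it with constant probability.

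Setting $\delta = \tilde{\Theta}(\epsilon/\beta)$ ensures both the eigenvalue-estimation error and the resulting rotation-amplitude error stay within the $\epsilon$ trace-distance budget. Multiplying the $\tilde{O}(s'\beta/\delta)$ cost of one phase-estimation invocation by the $O(\sqrt{n})$ amplification factor gives the claimed $\tilde{O}(\sqrt{n}\,s'\beta/\epsilon)$ total gate complexity. The main obstacle I expect is error propagation: phase estimation produces an incoherent mixture over approximate eigenvalues $\tilde{E}_i$, which then feeds into a nonlinear rotation and is reflected about inside the amplitude-amplification loop, and one must argue that the cumulative trace-distance error between the produced state and the ideal Gibbs state remains $O(\epsilon)$. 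This is standard but tedious; it can be controlled by choosing $\delta$ a polylogarithmic factor smaller than $\epsilon/\beta$, which is absorbed into the $\tilde{O}(\cdot)$ notation.
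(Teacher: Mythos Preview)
The paper does not prove this lemma itself---it is quoted directly from Poulin and Wocjan---and your sketch is exactly their recipe (phase estimation on one half of a maximally entangled state, eigenvalue-controlled rotation, uncomputation, amplitude amplification), so the approaches coincide. Two small fixes: the $\Omega(1/n)$ postselection probability requires shifting $H$ so that its minimum eigenvalue is zero rather than shifting by $\beta$ (a blind $\beta$-shift only guarantees success probability $\Omega(e^{-2\beta})$, which would not yield the $\sqrt{n}$ amplification bound), and phase-estimation precision $\delta=\Theta(\epsilon)$ already suffices---with your choice $\delta=\tilde\Theta(\epsilon/\beta)$ the product $\tilde O(s'\beta/\delta)\cdot O(\sqrt{n})$ is $\tilde O(\sqrt{n}\,s'\beta^{2}/\epsilon)$, an extra factor of $\beta$ that your final sentence silently drops.
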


As a consequence of \thm{SDP-feasibility-testing-plain}, \lem{BS-trace}, and \lem{PW-Gibbs}, we have the following complexity result for solving SDPs under the plain model:

\begin{corollary}\label{cor:SDP-feasibility-testing-plain}
Assume we are given \ora{plain}. Furthermore, assume that $A_{j}$ is $s$-sparse for all $j\in\range{m}$. Then for any $\epsilon >0$, feasibility of the SDP in \eqn{SDP-1}, \eqn{SDP-2}, and \eqn{SDP-3} can be tested by \algo{efficient-SDP-plain} with success probability at least 0.96 and $\tilde{O}(s^{2}(\frac{\sqrt{m}}{\epsilon^{10}}+\frac{\sqrt{n}}{\epsilon^{12}}))$ quantum gates and queries to \ora{plain}.
\end{corollary}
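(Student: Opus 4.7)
The plan is to derive Corollary \ref{cor:SDP-feasibility-testing-plain} as a direct consequence of Theorem \ref{thm:SDP-feasibility-testing-plain} by instantiating the two abstract cost functions $\mathcal{S}_{\tr}$, $\mathcal{T}_{\tr}$, and $\mathcal{T}_{\text{Gibbs}}$ with the concrete bounds from Lemma \ref{lem:BS-trace} (trace estimation via Brand\~ao--Svore) and Lemma \ref{lem:PW-Gibbs} (Gibbs sampling via Poulin--Wocjan), and then simplifying the resulting expression.

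The first step is a routine rescaling: Lemma \ref{lem:BS-trace} is stated for matrices of norm at most $1$, whereas Theorem \ref{thm:SDP-feasibility-testing-plain} requires evaluating $\mathcal{S}_{\tr}$ and $\mathcal{T}_{\tr}$ at the parameters $(s/\epsilon^2, 1/\epsilon, \epsilon)$. I would note that if $\|H\|\leq \Gamma$, then writing $H = \Gamma H'$ with $\|H'\|\leq 1$ reduces estimating $\tr[H\rho]$ to additive error $\epsilon$ to estimating $\tr[H'\rho]$ to additive error $\epsilon/\Gamma$. Applied with $\Gamma = 1/\epsilon$, this yields $\mathcal{S}_{\tr}(s/\epsilon^2, 1/\epsilon, \epsilon) = O(1/\epsilon^4)$ and $\mathcal{T}_{\tr}(s/\epsilon^2, 1/\epsilon, \epsilon) = \tilde{O}(s/\epsilon^6)$ (absorbing polylogarithmic factors in $\tilde{O}$). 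Next, Lemma \ref{lem:PW-Gibbs} gives $\mathcal{T}_{\text{Gibbs}}(s/\epsilon^2, 1/\epsilon, \epsilon) = \tilde{O}((s/\epsilon^2)\cdot(1/\epsilon)\cdot\sqrt{n}/\epsilon) = \tilde{O}(s\sqrt{n}/\epsilon^4)$.

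The final step is to substitute these into the bound of Theorem \ref{thm:SDP-feasibility-testing-plain}:
\begin{align*}
\frac{s}{\epsilon^4}\tilde{O}\Big(\mathcal{S}_{\tr}\cdot\mathcal{T}_{\text{Gibbs}} + \sqrt{m}\,\mathcal{T}_{\tr}\Big)
&= \frac{s}{\epsilon^4}\tilde{O}\Big(\tfrac{1}{\epsilon^4}\cdot\tfrac{s\sqrt{n}}{\epsilon^4} + \sqrt{m}\cdot\tfrac{s}{\epsilon^6}\Big) \\
&= \tilde{O}\Big(\tfrac{s^2\sqrt{n}}{\epsilon^{12}} + \tfrac{s^2\sqrt{m}}{\epsilon^{10}}\Big)
= \tilde{O}\Big(s^2\Big(\tfrac{\sqrt{m}}{\epsilon^{10}} + \tfrac{\sqrt{n}}{\epsilon^{12}}\Big)\Big),
\end{align*}
which is exactly the claimed bound. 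The success probability bound of at least $0.96$ is inherited verbatim from Theorem \ref{thm:SDP-feasibility-testing-plain}.

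There is no serious obstacle in this proof: it is essentially a bookkeeping exercise. The only point requiring mild care is the rescaling from $\|H\|\leq 1$ to $\|H\|\leq \Gamma = 1/\epsilon$ in the trace-estimation bound, which introduces a factor of $\Gamma^2 = 1/\epsilon^2$ and is ultimately responsible for the $\epsilon^{-10}$ and $\epsilon^{-12}$ scalings. I would also remark briefly that the polylogarithmic overhead from reconstructing the plain oracle for $\tfrac{\epsilon}{4}\sum_\tau M^{(\tau)}$ out of the oracles for individual $A_j$'s (an overhead of $\tilde{O}(s/\epsilon^2)$ already accounted for inside the $\tilde{O}$ of Theorem \ref{thm:SDP-feasibility-testing-plain}) does not change the leading order of the final bound.
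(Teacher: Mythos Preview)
Your proposal is correct and follows essentially the same approach as the paper: rescale the trace-estimation lemma from norm $1$ to norm $\Gamma=1/\epsilon$, plug the resulting bounds together with the Poulin--Wocjan Gibbs sampler into Theorem~\ref{thm:SDP-feasibility-testing-plain}, and simplify. The paper's proof is terser but the logic and the arithmetic match yours line for line.
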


\begin{proof}
Note that $\mathcal{S}_{\tr}(s,\Gamma,\epsilon)=\mathcal{S}_{\tr}(s,1,\frac{\epsilon}{\Gamma})$ and $\mathcal{T}_{\tr}(s,\Gamma,\epsilon)=\mathcal{T}_{\tr}(s,1,\frac{\epsilon}{\Gamma})$ by renormalizing the Hamiltonian $H$ to $H/\Gamma$. As a result, plugging \lem{BS-trace} and \lem{PW-Gibbs} into \thm{SDP-feasibility-testing-plain}, the complexity of solving the SDP becomes
\begin{align}
\frac{s}{\epsilon^{4}}\cdot\tilde{O}\Big(\frac{1}{\epsilon^{4}}\cdot\frac{s\sqrt{n}}{\epsilon^{4}}+\frac{s\sqrt{m}}{\epsilon^{6}}\Big)=\tilde{O}\Big(s^{2}\Big(\frac{\sqrt{m}}{\epsilon^{10}}+\frac{\sqrt{n}}{\epsilon^{12}}\Big)\Big).
\end{align}
\end{proof}

\begin{remark}
The $(\sqrt{m}+\sqrt{n})$ dependence is optimal compared to \cite{brandao2016quantum,vanApeldoorn2017quantum}.
\end{remark}

\begin{remark}
Using more elaborated techniques and analyses, Ref. \cite{vAG18} improved the complexity of \cor{SDP-feasibility-testing-plain} to $\tilde{O}(s(\frac{\sqrt{m}}{\epsilon^{4}}+\frac{\sqrt{n}}{\epsilon^{5}}))$.
\end{remark}


\section{Quantum SDP solver with quantum inputs} \label{append:SDP-efficient}
In this section, we illustrate our quantum SDP solver in the quantum input model. To that end, we first provide a precise formulation of the quantum input model, and then demonstrate how to implement \ora{violation} in such scenario and how the actual quantum algorithm works.

\subsection{The quantum input model}
As mentioned in the introduction, we would like to equip the quantum SDP solver with some extra power beyond only accessing the entries of the input matrices (i.e., $A_j$, $j=1,\ldots,m$, each of $n\times n$ size). We imagine the setting where these $A_j$s are nice so that the following oracles, representing various means to access $A_j$s, can be efficiently implemented.

\begin{oracle}[Oracle for traces of $A_{j}$] \label{ora:trace}
A quantum oracle (unitary), denoted $O_{\Tr}$ (and its inverse $O_{\Tr}^{\dagger}$), such that for any $j\in\range{m}$,
\begin{align}
O_{\Tr}|j\>|0\>|0\>=|j\>|\Tr[A_{j}^{+}]\>|\Tr[A_{j}^{-}]\>,
\end{align}
where $A_{j}^{+}$ and $A_{j}^{-}$ are two PSD matrices such that $A_{j}=A_{j}^{+}-A_{j}^{-}$ (the real values $\Tr[A_{j}^{+}]$ and $\Tr[A_{j}^{-}]$ are encoded into their binary representations).
\end{oracle}

\begin{oracle}[Oracle for preparing $A_{j}$] \label{ora:prep}
A quantum oracle (unitary), denoted $O$ (and its inverse $O^\dagger$), which acts on $\C^{m}\otimes(\C^{n}\otimes\C^{n}) \otimes (\C^{n}\otimes\C^{n})$ such that for any $j\in\range{m}$,
\begin{align}
O|j\>\<j|\otimes |0\>\<0|\otimes |0\>\<0|O^{\dagger}=|j\>\<j|\otimes |\psi_{j}^{+}\>\<\psi_{j}^{+}| \otimes |\psi_{j}^{-}\>\<\psi_{j}^{-}|,
\end{align}
where $|\psi_{j}^{+}\>, |\psi_{j}^{-}\> \in \C^{n}\otimes\C^{n}$ are any purifications of $\frac{A_{j}^{+}}{\Tr[A_{j}^{+}]}, \frac{A_{j}^{-}}{\Tr[A_{j}^{-}]}$, respectively.\footnote{By tracing out the extra space, one can easily obtain states $A_{j}^{+}/\Tr[A_{j}^{+}], A_{j}^{-}/\Tr[A_{j}^{-}]$.}
\end{oracle}

\begin{oracle}[Oracle for $a_{j}$]\label{ora:a}
A quantum oracle (unitary), denoted $O_{a}$ (and its inverse $O_{a}^\dagger$), such that for any $j\in\range{m}$,
\begin{align}
O_{a}|j\>\<j|\otimes |0\>\<0|O_{a}^{\dagger}=|j\>\<j|\otimes |a_{j}\>\<a_{j}|,
\end{align}
where the real value $a_{j}$ is encoded into its binary representation.
\end{oracle}

Similar to \append{SDP-plain}, we also modularize the cost of two important blocks as follows.

\begin{definition}[trace estimation]\label{defn:quantum-trace}
Assume that $\Tr(A_{j}^{+})+\Tr(A_{j}^{-})\leq B$ for some bound $B$ for all $j\in\range{m}$. Then we define $\mathcal{S}_{\tr}(B,\epsilon)$ and $\mathcal{T}_{\tr}(B,\epsilon)$ as the sample complexity of a state $\rho\in\C^{n\times n}$ and the gate complexity of using the quantum input oracles (\ora{trace}, \ora{prep}, \ora{a}), and two-qubit gates, respectively, such that there exists a quantum algorithm which distinguishes with success probability at least $1-O(1/m)$ whether for a fixed $j\in\range{m}$, $\tr(A_{j} \rho)>a_{j}+\epsilon$ or $\tr(A_{j} \rho)\leq a_{j}$.
\end{definition}

\begin{definition}[Gibbs sampling]\label{defn:quantum-Gibbs}
Assume that $K = K^+ - K^-$, where $K^{\pm}=\sum_{j \in S} c_j A_j^{\pm}$,  $S\subseteq\range{m}$ and $|S|\leq\Phi$, $c_j>0$, and $A_j^{\pm}$ refers to either $A_j^{+}$ or $A_j^{-}$ for all $j\in\range{m}$. Moreover, assume that $\tr(K^+) + \tr(K^-) \leq B_{K}$ for some bound $B_{K}$, and that $K^+$, $K^-$ have rank at most $r_{K}$. Then we define $\mathcal{T}_{\text{Gibbs}}(r_{K},\Phi,B_{K},\epsilon)$ as the complexity of preparing the Gibbs state $\rho_{G}=\exp(-K)/\tr(\exp(-K))$ to $\epsilon$ precision in trace distance using \ora{trace}, \ora{prep}, \ora{a}, and two-qubit gates.
\end{definition}

\subsection{Implementation of \ora{violation} -- searching a violated constraint}
Using \ora{trace}, \ora{prep}, and \ora{a}, \ora{violation} can be implemented by the following lemma, using our fast quantum OR lemma (\lem{fast-quantum-OR-cite}):
\begin{lemma}\label{lem:oracle-implementation}
Given $\epsilon,\delta\in (0,1)$. Assume we have \ora{trace}, \ora{prep}, \ora{a}, and $(\log 1/\delta)\cdot\tilde{O}(\mathcal{S}_{\tr}(B,\epsilon))$ copies of a state $\rho$. Assume either $\exists\,j\in\range{m}$ such that $\tr(A_{j} \rho)\geq a_{j}+\epsilon$, or $\tr(A_{j} \rho)\leq a_{j}$ for all $j\in\range{m}$. Then there is an algorithm that in the former case, finds such a $j$; and in the latter case, returns "FEASIBLE". This algorithm has success probability $1-\delta$ and uses in total $\log 1/\delta\cdot\tilde{O}(\sqrt{m}\mathcal{T}_{\tr}(B,\epsilon))$ quantum gates and queries to \ora{trace}, \ora{prep}, and \ora{a}.
\end{lemma}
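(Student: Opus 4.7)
The plan is to mimic the structure of the proof of \thm{SDP-feasibility-testing-plain}, replacing the plain-model trace primitive by its quantum-input-model analogue from \defn{quantum-trace}. First, for each $j \in \range{m}$ we invoke \defn{quantum-trace} to obtain a two-outcome POVM $M_j$ that uses $\mathcal{S}_{\tr}(B,\epsilon)$ copies of $\rho$ together with $\mathcal{T}_{\tr}(B,\epsilon)$ gates and queries to \ora{trace}, \ora{prep}, and \ora{a}, and that distinguishes the two sides of the promise with success probability at least $1-O(1/m)$. Since all three oracles are reversible unitaries, a standard Stinespring dilation gives an orthogonal projector $\Lambda_j$ and an ancilla projector $P = I \otimes |0\>\<0|^{\otimes a}$ with $M_j \otimes |0\>\<0|^{\otimes a} = P\,\Lambda_j\,P$. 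Crucially, because the oracles are coherently indexed by $j$, the controlled reflection $\text{ctrl-}(I - 2\sum_{i=0}^{m-1}\Lambda_{i+1}\otimes|i\>\<i|)$ can be performed in $p = \mathcal{T}_{\tr}(B,\epsilon)$ operations, which is precisely the interface required by \lem{fast-quantum-OR-cite}.

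Setting $\widetilde{\rho} := \rho^{\otimes C}\otimes|0\>\<0|^{\otimes a}$ with $C = \Theta(\mathcal{S}_{\tr}(B,\epsilon))$, the two sides of the promise translate directly into the two regimes of \lem{fast-quantum-OR-cite}. If some $j^{*}$ satisfies $\tr(A_{j^{*}}\rho) \geq a_{j^{*}} + \epsilon$, then $\tr(\Lambda_{j^{*}}\widetilde{\rho}) \geq 1 - O(1/m)$; and if $\tr(A_j \rho) \leq a_j$ for every $j$, then $\tr(\Lambda_j \widetilde{\rho}) \leq O(1/m)$ for every $j$, whence $\frac{1}{m}\sum_j \tr(\Lambda_j \widetilde{\rho}) \leq O(1/m)$. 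Choosing $\varepsilon = \varphi = c/m$ for a sufficiently small constant $c$ leaves a positive constant gap between $(1-\varepsilon)^2/4$ and $3\varphi m$, so a single invocation of \lem{fast-quantum-OR-cite} with constant $\xi$ separates the two cases with constant success probability at cost $\tilde{O}(\sqrt{m}\,\mathcal{T}_{\tr}(B,\epsilon))$ in both samples and gates.

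To actually identify a violating index (rather than merely decide existence), we apply the standard binary-search-to-decision reduction: recursively restrict the fast OR test to halves of $\range{m}$ until a single index is isolated. The aggregate cost telescopes to $\tilde{O}(\sqrt{m}\,\mathcal{T}_{\tr}(B,\epsilon))$ because $\sum_{k=0}^{\log m}\sqrt{m/2^k} = O(\sqrt{m})$, with the extra $\log m$ factor absorbed into $\tilde{O}(\cdot)$. Finally, boosting the constant per-call success probability up to $1-\delta$ is handled by repeating the entire procedure $O(\log 1/\delta)$ times and taking a majority vote; a standard Chernoff bound then yields the $\log(1/\delta)$ factor that appears in both the sample and gate complexities of the statement.

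The main subtlety, rather than a deep obstacle, is ensuring that the controlled reflection over all $\Lambda_j$'s can be implemented coherently in $j$ so that \lem{fast-quantum-OR-cite} is applicable; this is exactly the feature that the $j$-indexed form of \ora{trace}, \ora{prep}, and \ora{a} is designed to provide. The rest is bookkeeping: carefully tracking the $O(1/m)$ soundness error inherited from each $M_j$, the constant gap required by the fast OR lemma, and the $\log m$ overhead from binary search, so that everything collapses into the claimed $\log(1/\delta)\cdot\tilde{O}(\sqrt{m}\,\mathcal{T}_{\tr}(B,\epsilon))$ bound.
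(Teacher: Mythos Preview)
Your approach is essentially the paper's: build projectors $\Lambda_j$ from the trace-estimation primitive, apply the fast quantum OR lemma to $\widetilde\rho=\rho^{\otimes C}\otimes\proj{0}^{\otimes a}$, then binary-search and amplify. Two points deserve tightening.

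First, the order of amplification and binary search as you describe it does not work. You run binary search with a constant-success-probability OR test at each of the $\log m$ levels, and only afterwards ``repeat the entire procedure $O(\log 1/\delta)$ times and take a majority vote.'' But a binary search whose $\log m$ levels each succeed only with constant probability succeeds overall with probability $c^{\log m}=m^{-\Theta(1)}$, not a constant; a final $O(\log 1/\delta)$-fold repetition with majority vote cannot recover success $1-\delta$ from that (and ``majority vote'' over returned indices is not well-defined anyway). The paper fixes this by amplifying \emph{each} OR decision to success $1-\delta/\log m$ (via $\Theta(\log(\log m/\delta))$ repetitions and Chernoff) \emph{before} binary search, then union-bounding over the $\log m$ levels. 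This costs only an extra $\poly\log m$ factor, absorbed into~$\tilde O(\cdot)$, so your stated complexity is unaffected once the order is corrected.

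Second, a minor slip: a single invocation of \lem{fast-quantum-OR-cite} consumes one copy of $\widetilde\rho$, hence $C=\Theta(\mathcal{S}_{\tr}(B,\epsilon))$ copies of $\rho$, not $\tilde O(\sqrt m\,\mathcal{T}_{\tr}(B,\epsilon))$ as you wrote ``in both samples and gates.'' The $\sqrt m$ factor belongs only to the gate count. This does not change the final bounds in the lemma statement.
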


\begin{proof}
First, we use \defn{quantum-trace} to create two-outcome POVMs $M_j$, acting on $\rho$, $|\psi_{j}^{+}\>\<\psi_{j}^{+}|$, and $|\psi_{j}^{-}\>\<\psi_{j}^{-}|$ with $C=\mathcal{S}_{\tr}(B,\epsilon)$ copies, such that $M_{j}$ decides with probability $1-O(1/\poly(m))$ whether $\tr(A_{j}\rho)-a_{j}>\epsilon$.

Because we are given purifications of all $A_{j}^{+}$ and $A_{j}^{-}$ in \ora{prep}, for all $j\in\{1,\ldots,m\}$ we can assume an explicit decomposition $M_{j} \otimes \proj{0}^{\otimes a} = P \Lambda_{j} P$, for some integer $a$, $P = I \otimes \proj{0}^{\otimes a}$, and some orthogonal projector $\Lambda_{j}$. Let $\widetilde{\rho} = \rho^{\otimes C} \otimes (|\psi_{j}^{+}\>\<\psi_{j}^{+}|)^{\otimes C} \otimes (|\psi_{j}^{-}\>\<\psi_{j}^{-}|)^{\otimes C} \otimes \proj{0}^a$. We therefore need to decide between the cases
\begin{enumerate}
\item $\Tr[\Lambda_j\widetilde{\rho}] \ge 1-O(1/\poly(m)) $ for some $j$; or
\item $\Tr[\Lambda_j\widetilde{\rho}] \le O(1/\poly(m)) $ for all $j$.
\end{enumerate}
This corresponds to the two cases of \lem{fast-quantum-OR-cite}, where both $\varepsilon$ and $\delta$ are $O(1/\poly(m))$. To implement the the projection $I-2\sum_{j = 1}^{m}\Lambda_{j}\otimes\proj{j}$ in \lem{fast-quantum-OR-cite}, we use \ora{prep} to obtain purifications $|\psi_{j}^{+}\>\<\psi_{j}^{+}|$ and $|\psi_{j}^{-}\>\<\psi_{j}^{-}|$ of $\frac{A_{j}^{+}}{\Tr[A_{j}^{+}]}$ and $\frac{A_{j}^{-}}{\Tr[A_{j}^{-}]}$, and apply the reflection with respect to $|\psi_{j}^{+}\>$ and $|\psi_{j}^{-}\>$; note that we can obtain the numbers $\Tr[A_{j}^{+}]$ and $\Tr[A_{j}^{-}]$ in superposition by \ora{trace}. Including the controlling ancilla $|j\>\<j|$, the $p$ in \lem{fast-quantum-OR-cite} is at most $O(\log m)$.

In \lem{fast-quantum-OR-cite}, choose $\xi=\frac{1}{3}(\frac{(1-\varepsilon)^{2}}{4}-3m\varphi)$ -- this is a positive constant. We can thus tell the two cases apart with constant probability, using $\mathcal{S}_{\tr}(B,\epsilon)$ samples of $\rho$ and $\tilde{O}(\sqrt{m})\cdot\mathcal{T}_{\tr}(B,\epsilon)$ other operations. Then, we repeat the call of \lem{fast-quantum-OR-cite} for $L=\Theta(\log\delta^{-1})$ times and accept if and only if \lem{fast-quantum-OR-cite} accepts for at least $\frac{L}{2}\cdot(\frac{(1-\epsilon)^{2}}{4}+3m\varphi)$ times. By Chernoff's bound, this enhances the success probability to at least $1-\delta$.

In all, we have a quantum algorithm that determines whether there exists a $j\in\range{m}$ such that $\tr(A_{j} \rho)\geq a_{j}+\epsilon$ (or $\tr(A_{j} \rho)\leq a_{j}$ for all $j\in\range{m}$) with success probability at least $1-\delta$, using $\log 1/\delta\cdot\tilde{O}(\sqrt{m}\mathcal{T}_{\tr}(B,\epsilon))$ quantum gates and queries to \ora{trace}, \ora{prep}, and \ora{a}. To find this $j$, we take $\delta\leftarrow\delta/\log m$, and apply binary search on $j\in\{1,2,\ldots,m\}$, i.e., apply the algorithm to $j\in\{1,\ldots,\lfloor m/2\rfloor\}$ and $j\in\{\lceil m/2\rceil,\ldots,m\}$ respectively, and if the output is yes then call the algorithm recursively. This gives an extra $\poly(\log m)$ overhead on both sample complexity and gate complexity, which are still $(\log 1/\delta)\cdot\tilde{O}(\mathcal{S}_{\tr}(B,\epsilon))$ and $\log 1/\delta\cdot\tilde{O}(\sqrt{m}\mathcal{T}_{\tr}(B,\epsilon))$, respectively.
\end{proof}

\subsection{Quantum SDP solvers with quantum inputs}
We now instantiate \algo{matrixMW-gain} to the fully quantum version (\algo{efficient-SDP}). A key difference is that we use \defn{quantum-Gibbs} to generate (many copies) of the Gibbs state $\rho^{(t)}$ and rely on \lem{oracle-implementation} to implement \ora{violation}.
At a high-level, the correctness of \algo{efficient-SDP} still roughly comes from \thm{SDP-master-1}, as well as \lem{oracle-implementation}. However, its gate complexity will be efficient because of the help of \ora{trace}, \ora{prep}, and \ora{a}.

\begin{theorem}\label{thm:SDP-feasibility-testing}
Assume we are given \ora{trace}, \ora{prep}, and \ora{a}. Furthermore, assume $\Tr[A_{j}^{+}]+\Tr[A_{j}^{-}]\leq B$ for some bound $B$, and $A_{j}$ have rank at most $r$ for all $j\in\range{m}$. Then for any $\epsilon>0$, feasibility of the SDP in \eqn{SDP-1}, \eqn{SDP-2}, and \eqn{SDP-3} can be tested by \algo{efficient-SDP} with success probability at least 0.96 and at most $\frac{1}{\epsilon^{2}}\tilde{O}\big(\mathcal{S}_{\tr}(B,\epsilon)\mathcal{T}_{\text{Gibbs}}\big(\frac{r}{\epsilon^{2}},\frac{1}{\epsilon^{2}},\frac{B}{\epsilon},\epsilon\big)+\sqrt{m}\mathcal{T}_{\tr}(B,\epsilon)\big)$ quantum gates and queries to \ora{trace}, \ora{prep}, and \ora{a}.
\end{theorem}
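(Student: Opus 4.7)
The strategy is to turn \algo{matrixMW-gain} into a quantum algorithm by replacing its two non-trivial steps -- Gibbs state preparation (Line~\ref{line:Gibbs}) and the search for a violated constraint (Line~\ref{line:observe-gain-matrix}) -- with quantum subroutines already developed in this paper. Since the MMW update rule is unchanged, correctness is inherited verbatim from \thm{SDP-master-1}, so the analysis reduces to (i) fitting the $t$-th Gibbs Hamiltonian into the input format of \defn{quantum-Gibbs} and reading off the parameters, and (ii) collating the $T=16\ln n/\epsilon^{2}$ rounds with an appropriate union bound.

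For the Gibbs step at round $t$, the target state is $\rho^{(t)}\propto\exp\bigl(-\tfrac{\epsilon}{4}\sum_{\tau\leq t-1}M^{(\tau)}\bigr)$ with $M^{(\tau)}=\tfrac{1}{2}(I-A_{j^{(\tau)}})$. The identity shift cancels in the normalization, leaving an exponent proportional to $\sum_{\tau}A_{j^{(\tau)}}$. Writing each $A_{j^{(\tau)}}=A_{j^{(\tau)}}^{+}-A_{j^{(\tau)}}^{-}$ splits the corresponding $K$ as $K^{+}-K^{-}$ with the two halves equal to $\tfrac{\epsilon}{8}\sum_{\tau}A_{j^{(\tau)}}^{-}$ and $\tfrac{\epsilon}{8}\sum_{\tau}A_{j^{(\tau)}}^{+}$ (up to the sign choice allowed by \defn{quantum-Gibbs}). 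Collapsing repeated $\tau$'s, the sparsity is $\Phi\leq t\leq T=\tilde{O}(1/\epsilon^{2})$; each $A_{j^{(\tau)}}^{\pm}$ has rank at most $r$ so $\mathrm{rank}(K^{\pm})\leq t r=\tilde{O}(r/\epsilon^{2})$; and $\Tr(K^{+})+\Tr(K^{-})=\tfrac{\epsilon}{8}\sum_{\tau}(\Tr A_{j^{(\tau)}}^{+}+\Tr A_{j^{(\tau)}}^{-})\leq\tfrac{\epsilon}{8}\cdot t\cdot B=\tilde{O}(B/\epsilon)$. Thus one copy of $\rho^{(t)}$ costs $\mathcal{T}_{\text{Gibbs}}\bigl(\tfrac{r}{\epsilon^{2}},\tfrac{1}{\epsilon^{2}},\tfrac{B}{\epsilon},\epsilon\bigr)$ gates and queries.

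For the violation step I would apply \lem{oracle-implementation} with confidence parameter $\delta=\Theta(\epsilon^{2}/\log n)$: each invocation consumes $\tilde{O}(\mathcal{S}_{\tr}(B,\epsilon))$ copies of $\rho^{(t)}$ (supplied by repeated Gibbs sampling) and $\tilde{O}(\sqrt{m}\,\mathcal{T}_{\tr}(B,\epsilon))$ extra gates and queries to \ora{trace}, \ora{prep}, \ora{a}, and returns either a valid $j^{(t)}$ or the verdict ``FEASIBLE'' with failure probability at most $\delta$. The round-$t$ cost is therefore $\tilde{O}\bigl(\mathcal{S}_{\tr}(B,\epsilon)\,\mathcal{T}_{\text{Gibbs}}(\tfrac{r}{\epsilon^{2}},\tfrac{1}{\epsilon^{2}},\tfrac{B}{\epsilon},\epsilon)+\sqrt{m}\,\mathcal{T}_{\tr}(B,\epsilon)\bigr)$. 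Multiplying by $T=\tilde{O}(1/\epsilon^{2})$ rounds and union-bounding at $T\delta\leq 0.04$ yields the stated overall complexity and the $\geq 0.96$ success probability.

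The main point to get right is the precision accounting. The Gibbs sampler produces $\rho^{(t)}$ only to trace distance $\epsilon$, and the quantum trace-estimation primitive inside \lem{oracle-implementation} only separates the cases $\Tr(A_{j}\rho)>a_{j}+\epsilon$ and $\Tr(A_{j}\rho)\leq a_{j}$; both slacks must fit inside the margin exploited by \thm{SDP-master-1}. I would resolve this by shrinking each internal precision by a small constant factor, which only alters constants absorbed into $\tilde{O}$. A secondary concern is that the exponent in the MMW update carries a negative coefficient on $A_{j^{(\tau)}}$; this is handled by invoking the Gibbs sampler with the roles of $A_{j^{(\tau)}}^{+}$ and $A_{j^{(\tau)}}^{-}$ interchanged, which is explicitly permitted by the flexibility built into \defn{quantum-Gibbs}.
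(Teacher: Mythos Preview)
Your proposal is correct and follows essentially the same approach as the paper's own proof: correctness is inherited from \thm{SDP-master-1}, the Gibbs Hamiltonian at round $t$ is decomposed into $K^{+}-K^{-}$ with the same rank, sparsity, and trace bounds $\tilde{O}(r/\epsilon^{2})$, $\tilde{O}(1/\epsilon^{2})$, $\tilde{O}(B/\epsilon)$, the violation search is done via \lem{oracle-implementation} with $\delta=\Theta(\epsilon^{2}/\log n)$, and the union bound over $T=16\ln n/\epsilon^{2}$ rounds gives the $0.96$ success probability. Your remarks on precision slack and the sign swap between $A_{j}^{+}$ and $A_{j}^{-}$ match what the paper handles in its post-proof remarks.
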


\begin{algorithm}[htbp]
Initialize the weight matrix $W^{(1)}=I_{n}$, and $T=\frac{16\ln n}{\epsilon^{2}}$\;
\For{$t=1,2,\ldots,T$}{
	Prepare $\tilde{O}(\mathcal{S}_{\tr}(B,\epsilon))$ samples of the Gibbs state $\rho^{(t)}=\frac{W^{(t)}}{\Tr[W^{(t)}]}$ by \defn{quantum-Gibbs}\; \label{line:Gibbs-efficient}
    Using these $\tilde{O}(\mathcal{S}_{\tr}(B,\epsilon))$ copies of $\rho^{(t)}$, search for a $j^{(t)}\in\range{m}$ such that $\tr(A_{j^{(t)}} \rho^{(t)})>a_{j^{(t)}}+\epsilon$ by \lem{oracle-implementation} with $\delta=\frac{\epsilon^{2}}{400\ln n}$. Take $M^{(t)}=\frac{1}{2}(I_{n}-A_{j^{(t)}})$ if such $j^{(t)}$ is found; otherwise, claim that $\mathcal{S}_{\epsilon}\neq\emptyset$ (the SDP is feasible)\; \label{line:observe-gain-matrix-efficient}
	Define the new weight matrix: $W^{(t+1)}=\exp[-\frac{\epsilon}{2}\sum_{\tau=1}^{t}M^{(\tau)}]$\; \label{line:Gibbs-efficient-weight}
	}
Claim that $\mathcal{S}_{0}=\emptyset$ and terminate the algorithm.
\caption{Efficiently testing the feasibility of SDPs: Quantum input model.}
\label{algo:efficient-SDP}
\end{algorithm}

\begin{proof}[Proof of \thm{SDP-feasibility-testing}]
The correctness of \algo{efficient-SDP} is automatically established by \thm{SDP-master-1}; it suffices to analyze the gate cost of \algo{efficient-SDP}.

In Line \ref{line:Gibbs-efficient} of \algo{efficient-SDP} we apply \defn{quantum-Gibbs} to compute the Gibbs state $\rho^{(t)}$. In round $t$, because $M_{j}=\frac{1}{2}[I_{n}-(A_{j}^{+}-A_{j}^{-})]=\frac{1}{2}I_{n}+\frac{1}{2}A_{j}^{-}-\frac{1}{2}A_{j}^{+}\ \forall\,j\in\range{m}$, we take $K_{t}^{+}=\frac{\epsilon}{2}\sum_{\tau=1}^{t}\frac{1}{2}A_{j^{(\tau)}}^{+}$ and $K_{t}^{-}=\frac{\epsilon}{2}\sum_{\tau=1}^{t}\frac{1}{2}A_{j^{(\tau)}}^{-}$. Because $t\leq\frac{16\ln n}{\epsilon^{2}}$, $K_{t}^{+}$, $K_{t}^{-}$ have rank at most $t\cdot r=O(\log n\cdot r/\epsilon^{2})$, and $\Tr[K_{t}^{+}]$, $\Tr[K_{t}^{-}]$ are at most $\frac{\epsilon t}{4}\cdot B=O(\log n\cdot B/\epsilon)$, \defn{quantum-Gibbs} guarantees that
\begin{align}
\mathcal{T}_{\text{Gibbs}}\Big(\frac{r\log n}{\epsilon^{2}},\frac{16\ln n}{\epsilon^{2}},\frac{B\log n}{\epsilon},\epsilon\Big)=\tilde{O}\Big(\mathcal{T}_{\text{Gibbs}}\Big(\frac{r}{\epsilon^{2}},\frac{1}{\epsilon^{2}},\frac{B}{\epsilon},\epsilon\Big)\Big)
\end{align}
quantum gates and queries to \ora{trace}, \ora{prep}, and \ora{a} suffice to prepare the Gibbs state $\rho^{(t)}$. Because there are at most $\frac{16\ln n}{\epsilon^{2}}$ iterations and in each iteration $\rho^{(t)}$ is prepared for $\tilde{O}(\mathcal{S}_{\tr}(B,\epsilon))$ copies, in total the gate cost for Gibbs state preparation is
\begin{align}\label{eqn:cost-quantum-1}
\frac{16\ln n}{\epsilon^{2}}\cdot\tilde{O}\Big(\mathcal{S}_{\tr}(B,\epsilon)\mathcal{T}_{\text{Gibbs}}\Big(\frac{r}{\epsilon^{2}},\frac{1}{\epsilon^{2}},\frac{B}{\epsilon},\epsilon\Big)\Big).
\end{align}

Furthermore, by \lem{oracle-implementation}, Line \ref{line:observe-gain-matrix-efficient} finds a $j^{(t)}\in\range{m}$ such that $\tr(A_{j^{(t)}} \rho^{(t)})>a_{j^{(t)}}+\epsilon$ with success probability at least $1-\frac{\epsilon^{2}}{400\ln n}$, using $\tilde{O}(\sqrt{m}\mathcal{T}_{\tr}(B,\epsilon))$ quantum gates and queries to \ora{trace}, \ora{prep}, and \ora{a}. Because \algo{efficient-SDP} has at most $\frac{16\ln n}{\epsilon^{2}}$ iterations, with success probability at least $1-\frac{16\ln n}{\epsilon^{2}}\cdot\frac{\epsilon^{2}}{400\ln n}=0.96$ we can assume that \lem{oracle-implementation} works correctly, and the total cost of running Line \ref{line:observe-gain-matrix-efficient} is
\begin{align}\label{eqn:cost-quantum-2}
\frac{16\ln n}{\epsilon^{2}}\cdot\tilde{O}\big(\sqrt{m}\mathcal{T}_{\tr}(B,\epsilon)\big).
\end{align}

In total, by \eqn{cost-quantum-1} and \eqn{cost-quantum-2}, the gate complexity of executing \algo{efficient-SDP} is
\begin{align}
&\frac{16\ln n}{\epsilon^{2}}\cdot\tilde{O}\Big(\mathcal{S}_{\tr}(B,\epsilon)\mathcal{T}_{\text{Gibbs}}\Big(\frac{r}{\epsilon^{2}},\frac{1}{\epsilon^{2}},\frac{B}{\epsilon},\epsilon\Big)\Big)+\frac{16\ln n}{\epsilon^{2}}\cdot\tilde{O}\big(\sqrt{m}\mathcal{T}_{\tr}(B,\epsilon)\big) \nonumber \\
&\quad=\frac{1}{\epsilon^{2}}\tilde{O}\Big(\mathcal{S}_{\tr}(B,\epsilon)\mathcal{T}_{\text{Gibbs}}\Big(\frac{r}{\epsilon^{2}},\frac{1}{\epsilon^{2}},\frac{B}{\epsilon},\epsilon\Big)+\sqrt{m}\mathcal{T}_{\tr}(B,\epsilon)\Big).
\end{align}
\end{proof}

To be more explicit, in later sections we prove that:
\begin{itemize}
\item \lem{oracle-implementation-SWAP}: $\mathcal{S}_{\tr}(B,\epsilon)=\mathcal{T}_{\tr}(B,\epsilon)=O(B^{2}\log m/\epsilon^{2})$.
\item \lem{gibbs_prep}: $\mathcal{T}_{\text{Gibbs}}(r_{K},\Phi,B_{K},\epsilon)=O(\Phi\cdot\poly(\log n, r_{K}, B_{K}, \epsilon^{-1}))$.
\end{itemize}

As a consequence, we have the following complexity result for solving SDPs under the quantum input model:

\begin{corollary}\label{cor:SDP-feasibility-testing-quantum}
Assume we are given \ora{trace}, \ora{prep}, and \ora{a}. Furthermore, assume $\Tr[A_{j}^{+}]+\Tr[A_{j}^{-}]\leq B$ for some bound $B$, and $A_{j}$ have rank at most $r$ for all $j\in\range{m}$. Then for any $\epsilon>0$, feasibility of the SDP in \eqn{SDP-1}, \eqn{SDP-2}, and \eqn{SDP-3} can be tested by \algo{efficient-SDP} with success probability at least 0.96 and at most $(\sqrt{m}+\poly(r))\cdot\poly(\log m,\log n,B,\epsilon^{-1})$ quantum gates and queries to \ora{trace}, \ora{prep}, and \ora{a}.
\end{corollary}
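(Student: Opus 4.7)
The plan is to derive \cor{SDP-feasibility-testing-quantum} as a direct consequence of \thm{SDP-feasibility-testing} by plugging in the explicit complexities of the two primitives that appear in the generic bound, namely trace estimation and Gibbs sampling in the quantum input model.

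First I would invoke the trace-estimation bound \lem{oracle-implementation-SWAP} (proved later in the paper), which states $\mathcal{S}_{\tr}(B,\epsilon)=\mathcal{T}_{\tr}(B,\epsilon)=O(B^{2}\log m/\epsilon^{2})$. Intuitively this is just the SWAP test between copies of $\rho$ and the purifications $|\psi_{j}^{\pm}\>$ from \ora{prep}, rescaled by the trace values $\Tr[A_{j}^{\pm}]$ read off from \ora{trace}; the $B^{2}/\epsilon^{2}$ factor is the standard variance scaling (since each $A_{j}^{\pm}$ has trace at most $B$), and the $\log m$ factor amplifies the success probability to $1-O(1/m)$ as demanded by \defn{quantum-trace}.

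Second I would invoke \lem{gibbs_prep}, which gives $\mathcal{T}_{\text{Gibbs}}(r_{K},\Phi,B_{K},\epsilon)=O(\Phi\cdot\poly(\log n,r_{K},B_{K},\epsilon^{-1}))$. In \thm{SDP-feasibility-testing} the relevant parameters are $r_{K}=r/\epsilon^{2}$ (since each of the $O(\log n/\epsilon^{2})$ iterates contributes rank at most $r$ to the Hamiltonian in \algo{efficient-SDP} Line \ref{line:Gibbs-efficient-weight}), $\Phi=1/\epsilon^{2}$, and $B_{K}=B/\epsilon$. Thus the Gibbs sampling cost collapses to $\poly(r,\log n,B,\epsilon^{-1})$, crucially with no $\sqrt{n}$ or $m$ dependence.

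Finally, substituting into the bound of \thm{SDP-feasibility-testing},
\[
\frac{1}{\epsilon^{2}}\tilde{O}\Big(\mathcal{S}_{\tr}(B,\epsilon)\,\mathcal{T}_{\text{Gibbs}}\Big(\tfrac{r}{\epsilon^{2}},\tfrac{1}{\epsilon^{2}},\tfrac{B}{\epsilon},\epsilon\Big)+\sqrt{m}\,\mathcal{T}_{\tr}(B,\epsilon)\Big),
\]
the first summand becomes $\poly(r)\cdot\poly(\log m,\log n,B,\epsilon^{-1})$ and the second becomes $\sqrt{m}\cdot\poly(\log m,B,\epsilon^{-1})$; adding these gives the claimed $(\sqrt{m}+\poly(r))\cdot\poly(\log m,\log n,B,\epsilon^{-1})$ bound. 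The $0.96$ success probability is inherited verbatim from \thm{SDP-feasibility-testing}. The genuine technical difficulty is not in this corollary, which is arithmetic, but in establishing the two black boxes it invokes — in particular \lem{gibbs_prep}, the low-rank Gibbs sampler with polylogarithmic dimension dependence built on the quantum PCA Hamiltonian-simulation ideas of \cite{lloyd2013quantum,kimmel2017hamiltonian}; once those lemmas are in hand, the corollary follows by substitution.
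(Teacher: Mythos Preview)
Your proposal is correct and follows essentially the same route as the paper: substitute the bounds $\mathcal{S}_{\tr}(B,\epsilon)=\mathcal{T}_{\tr}(B,\epsilon)=O(B^{2}\log m/\epsilon^{2})$ from \lem{oracle-implementation-SWAP} and $\mathcal{T}_{\text{Gibbs}}(r_{K},\Phi,B_{K},\epsilon)=O(\Phi\cdot\poly(\log n,r_{K},B_{K},\epsilon^{-1}))$ from \lem{gibbs_prep} into the generic bound of \thm{SDP-feasibility-testing} and simplify. Your identification of the technical content as residing in those two lemmas, not in this corollary, is accurate.
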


\begin{proof}
By \thm{SDP-feasibility-testing}, the complexity of solving the SDP is
\begin{align}\label{eqn:SDP-feasibility-testing-result}
&\frac{1}{\epsilon^{2}}\tilde{O}\Big(\frac{B^{2}\log m}{\epsilon^{2}}\cdot\frac{1}{\epsilon^{2}}\poly\Big(\log n,\frac{r}{\epsilon},\frac{B}{\epsilon},\frac{1}{\epsilon}\Big)+\sqrt{m}\cdot\frac{B^{2}\log m}{\epsilon^{2}}\Big) \nonumber \\
&\quad=(\sqrt{m}+\poly(r))\cdot\poly(\log m,\log n,B,\epsilon^{-1}).
\end{align}
\end{proof}

\begin{remark}
When we use \defn{quantum-Gibbs} to prepare the Gibbs state $\rho^{(t)}$ in Line \ref{line:Gibbs-efficient} of \algo{efficient-SDP}, we have $W^{(t)}=\exp[-\frac{\epsilon t}{4}I_{n}+K_{t}^{+}-K_{t}^{-}]$ by Line \ref{line:Gibbs-efficient-weight} which actually has an extra $-\frac{\epsilon t}{4}I_{n}$ term. However, for any constant $c\in\R$ and Hermitian matrix $H$ we have
\begin{align}
\frac{e^{cI-H}}{\Tr[e^{cI-H}]}=\frac{e^{c}e^{-H}}{\Tr[e^{c}e^{-H}]}=\frac{e^{-H}}{\Tr[e^{-H}]},
\end{align}
hence this $-\frac{\epsilon t}{4}I_{n}$ term does not change $\rho^{(t)}$.
\end{remark}

\begin{remark}
In \cor{SDP-feasibility-testing-quantum}, the only restriction on the decomposition $A_{j}=A_{j}^{+}-A_{j}^{-}$ for all $j\in\range{m}$ is that $\Tr[A_{j}^{+}]+\Tr[A_{j}^{-}]\leq B$. If we assume this decomposition to be the eigen-decomposition, i.e., $A_{j}^{+}$ represents the subspace spanned by the eigenvectors of $A_{j}$ with positive eigenvalues, and $A_{j}^{-}$ represents the subspace spanned by the eigenvectors of $A_{j}$ with negative eigenvalues, then by the low-rank assumption and $-I\preceq A_{j}\preceq I$, $\Tr[A_{j}^{+}]+\Tr[A_{j}^{-}]\leq r$. In this case, \cor{SDP-feasibility-testing-quantum} takes at most $\sqrt{m}\cdot\poly(\log m,\log n,r,\epsilon^{-1})$ quantum gates and queries to \ora{trace}, \ora{prep}, and \ora{a}.
\end{remark}

\begin{remark}
The $\sqrt{m}$ dependence is optimal compared to \thm{SDP-feasibility-lower} proved later.
\end{remark}

\begin{remark}
Using more elaborated techniques and analyses, Ref. \cite{vAG18} explicitly computed the degrees of the parameters in \eqn{SDP-feasibility-testing-result} and improved the complexity of \cor{SDP-feasibility-testing-quantum} to $\tilde{O}(\frac{B\sqrt{m}}{\epsilon^{4}}+\frac{B^{3.5}}{\epsilon^{7.5}})$ (the rank $r$ is implicitly contained in $B$ and hence this complexity is independent of $r$).
\end{remark}

\subsection{Trace estimation}
In this subsection, we prove:
\begin{lemma}\label{lem:oracle-implementation-SWAP}
Assume we are given \ora{trace}, \ora{prep}, \ora{a}, and $O(B^{2}\log m/\epsilon^{2})$ copies of a state $\rho\in\C^{n\times n}$, where $\Tr[A_{j}^{+}]+\Tr[A_{j}^{-}]\leq B$ for some bound $B$ for all $j\in\range{m}$. Then for any $\epsilon>0$, \algo{SWAP-test} distinguishes whether $\tr(A_{j} \rho)>a_{j}+\epsilon$ or $\tr(A_{j} \rho)\leq a_{j}$ with success probability at least $1-O(1/\poly(m))$. In other words, $\mathcal{S}_{\tr}(B,\epsilon)=\mathcal{T}_{\tr}(B,\epsilon)=O(B^{2}\log m/\epsilon^{2})$.
\end{lemma}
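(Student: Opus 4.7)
The plan is to estimate $\tr(A_j\rho)$ directly via SWAP tests and then compare with $a_j$. Using \ora{prep}, we can prepare purifications $\ket{\psi_j^+}, \ket{\psi_j^-}\in\C^n\otimes\C^n$ whose reduced states on the first register are $A_j^+/\Tr[A_j^+]$ and $A_j^-/\Tr[A_j^-]$, respectively. Performing the SWAP test between $\rho$ and the first register of $\ket{\psi_j^\pm}$ yields acceptance probability
\begin{equation*}
p_j^{\pm} \;=\; \tfrac12 + \tfrac12\,\tr\!\Big(\rho \cdot \tfrac{A_j^{\pm}}{\Tr[A_j^{\pm}]}\Big).
\end{equation*}
Therefore $\Tr[A_j^\pm]\cdot(2p_j^\pm-1)=\tr(A_j^\pm\rho)$, and subtracting the two estimates gives $\tr(A_j\rho)$. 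Crucially, \ora{trace} provides the classical scaling factors $\Tr[A_j^+]$ and $\Tr[A_j^-]$, and \ora{a} provides $a_j$, so the final comparison can be carried out coherently.

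To obtain an additive-$\epsilon/2$ estimate of $\tr(A_j\rho)$, it suffices to estimate each $p_j^\pm$ to additive error $\eta=\Theta(\epsilon/B)$, since $\Tr[A_j^+]+\Tr[A_j^-]\le B$ bounds the amplification of error in the subtraction. By a Hoeffding/Chernoff bound, $k=O(\log(1/\delta)/\eta^2)=O(B^2\log(1/\delta)/\epsilon^2)$ independent SWAP tests suffice to estimate $p_j^\pm$ within $\eta$ with failure probability at most $\delta$. Setting $\delta=1/\poly(m)$ yields the claimed sample complexity $O(B^2\log m/\epsilon^2)$; each SWAP test uses $O(\log n)$ additional gates plus one query each to \ora{trace}, \ora{prep}, \ora{a}, so the gate complexity matches the sample complexity up to logarithmic factors. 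Once an estimate $\widehat{\tau}$ of $\tr(A_j\rho)$ of precision $\epsilon/2$ is in hand, output "$>a_j+\epsilon$" iff $\widehat{\tau}>a_j+\epsilon/2$; this correctly distinguishes the gap-promise cases $\tr(A_j\rho)>a_j+\epsilon$ versus $\tr(A_j\rho)\le a_j$.

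The only nontrivial point is that $\ket{\psi_j^\pm}$ lives in $\C^n\otimes\C^n$, so one has to perform the SWAP on the first (physical) register and simply discard the purifying register; since SWAP on $\rho\otimes\sigma$ only depends on the reduced states, the accept probability formula above still holds. A second minor point is that we need fresh copies of $\rho$ and of $\ket{\psi_j^\pm}$ for each of the $k$ rounds of the SWAP test, but copies of the purifications can be generated on demand by repeated calls to \ora{prep}. The main (mild) obstacle is to track the error budget across the subtraction $\tr(A_j^+\rho)-\tr(A_j^-\rho)$ when $\Tr[A_j^\pm]$ themselves are only available as binary approximations from \ora{trace}; standard truncation of these strings to $O(\log(B/\epsilon))$ bits keeps the induced error within $\epsilon/4$ and so does not affect the overall bound.
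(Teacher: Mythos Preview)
Your proposal is correct and follows essentially the same approach as the paper: estimate $\tr(A_j^\pm\rho)$ via SWAP tests between $\rho$ and the reduced state $A_j^\pm/\Tr[A_j^\pm]$ obtained from \ora{prep}, use a Chernoff bound with $O(B^2\log m/\epsilon^2)$ repetitions to get each acceptance probability within $\Theta(\epsilon/B)$, scale by $\Tr[A_j^\pm]$ from \ora{trace}, subtract, and threshold at $a_j+\epsilon/2$. Your additional remarks about swapping only on the physical register of the purification and about truncating the binary trace values are valid refinements that the paper leaves implicit.
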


\begin{algorithm}[htbp]
    Using \ora{prep}, apply the \textsf{SWAP} test on $\rho$ and $\frac{A_{j}^{+}}{\Tr[A_{j}^{+}]}$ for $\poly(\log m,\log n,B,\epsilon^{-1})$ times. Denote the frequency of getting 1 to be $\widetilde{p_{j,+}}$\; \label{line:search-variable-swap-1}
    Using \ora{prep}, apply the \textsf{SWAP} test on $\rho$ and $\frac{A_{j}^{-}}{\Tr[A_{j}^{-}]}$ for $\poly(\log m,\log n,B,\epsilon^{-1})$ times. Denote the frequency of getting 1 to be $\widetilde{p_{j,-}}$\; \label{line:search-variable-swap-2}
    Apply \ora{trace} to compute $\Tr[A_{j}^{+}]$ and $\Tr[A_{j}^{-}]$. Claim that $\tr(A_{j} \rho)>a_{j}+\epsilon$ if $\big(2\widetilde{p_{j,+}}-1\big)\Tr[A_{j}^{+}]-\big(2\widetilde{p_{j,-}}-1\big)\Tr[A_{j}^{-}]>a_{j}+\epsilon/2$, and claim that $\tr(A_{j} \rho)<a_{j}$ if $\big(2\widetilde{p_{j,+}}-1\big)\Tr[A_{j}^{+}]-\big(2\widetilde{p_{j,-}}-1\big)\Tr[A_{j}^{-}]\leq a_{j}+\epsilon/2$\; \label{line:search-variable-cost-2}
\caption{Implementation of the POVM $M_{j}$.}
\label{algo:SWAP-test}
\end{algorithm}

\begin{proof}[Proof of \lem{oracle-implementation-SWAP}]
Recall that the \textsf{SWAP} test \cite{buhrman2001quantum} on $\rho$ and $\frac{A_{j}^{+}}{\Tr[A_{j}^{+}]}$ outputs 1 with probability $\frac{1}{2}+\frac{ \tr(A_{j}^{+} \rho)}{2\Tr[A_{j}^{+}]}$, and the \textsf{SWAP} test on $\rho$ and $\frac{A_{j}^{-}}{\Tr[A_{j}^{-}]}$ outputs 1 with probability $\frac{1}{2}+\frac{\tr(A_{j}^{-}\rho)}{2\Tr[A_{j}^{-}]}$. Therefore, by Chernoff's bound and the fact that $\Tr[A_{j}^{+}],\Tr[A_{j}^{-}]\leq B$, we have
\begin{align}
\Pr\Big[\Big|\widetilde{p_{j,+}}-\Big(\frac{1}{2}+\frac{\tr(A_{j}^{+}\rho)}{2\Tr[A_{j}^{+}]}\Big)\Big|\geq\frac{\epsilon}{8\Tr[A_{j}^{+}]}\Big]&\leq \Pr\Big[\Big|\widetilde{p_{j,+}}-\Big(\frac{1}{2}+\frac{ \tr(A_{j}^{+} \rho)}{2\Tr[A_{j}^{+}]}\Big)\Big|\geq\frac{\epsilon}{8B}\Big] \\
&\leq 2e^{-\frac{O(B^{2}\log m/\epsilon^{2})\cdot\epsilon^{2}}{64B^{2}\cdot 2}} \label{eqn:SDP-efficient-Chernoff-0} \\
&\leq O\Big(\frac{1}{\poly(m)}\Big)\label{eqn:SDP-efficient-Chernoff-1}
\end{align}
for a large constant in the big-$O$ in \eqn{SDP-efficient-Chernoff-0}. Similarly,
\begin{align}
\Pr\Big[\Big|\widetilde{p_{j,-}}-\Big(\frac{1}{2}+\frac{ \tr(A_{j}^{-} \rho)}{2\Tr[A_{j}^{-}]}\Big)\Big|\geq\frac{\epsilon}{8\Tr[A_{j}^{-}]}\Big]\leq O\Big(\frac{1}{\poly(m)}\Big).\label{eqn:SDP-efficient-Chernoff-2}
\end{align}
In other words, with probability at least $1-O\big(\frac{1}{\poly(m)}\big)$,
\begin{align}
\big|\big(2\widetilde{p_{j,+}}-1\big)\Tr[A_{}^{+}]- \tr(A_{j}^{+} \rho) \big|\leq\frac{\epsilon}{4},\quad \big|\big(2\widetilde{p_{j,-}}-1\big)\Tr[A_{j}^{-}]- \tr(A_{j}^{-}  \rho) \big|\leq\frac{\epsilon}{4}.
\end{align}
Therefore, if $\tr(A_{j} \rho)=\tr(A_{j}^{+} \rho)- \tr(A_{j}^{-} \rho)>a_{j}+\epsilon$, then with probability at least $1-O\big(\frac{1}{\poly(m)}\big)$,
\begin{align}
\big(2\widetilde{p_{j,+}}-1\big)\Tr[A_{j}^{+}]-\big(2\widetilde{p_{j,-}}-1\big)\Tr[A_{j}^{-}]>a_{j}+\epsilon/2,
\end{align}
which is exactly the first part of Line \ref{line:search-variable-cost-2}. Similarly, we can use Chernoff's bound to prove that if $\tr(A_{j} \rho)\leq a_{j}$, then with probability at least $1-O\big(\frac{1}{\poly(m)}\big)$,
\begin{align}
\big(2\widetilde{p_{j,+}}-1\big)\Tr[A_{j}^{+}]-\big(2\widetilde{p_{j,-}}-1\big)\Tr[A_{j}^{-}]\leq a_{j}+\epsilon/2,
\end{align}
which is the second part of Line \ref{line:search-variable-cost-2}.

Because \algo{SWAP-test} only uses \textsf{SWAP} which only takes $O(1)$ quantum gates, in total we have $\mathcal{S}_{\tr}(B,\epsilon)=\mathcal{T}_{\tr}(B,\epsilon)=O(B^{2}\log m/\epsilon^{2})$.
\end{proof}

\subsection{Gibbs state preparation}
\noindent With the access to \ora{trace} and \ora{prep}, the following lemma shows how to prepare two normalized quantum states $K^{\pm}/\Tr[K^{\pm}]$ where $K^{\pm}=\sum_{j \in S} c_j A_j^{\pm}$, $c_j>0$ and $A_j^{\pm}$ refers to either $A_j^{+}$ or $A_j^{-}$.
\begin{lemma} \label{lem:linear_prep}
$K^{\text{sgn}}/\Tr[K^{\text{sgn}}]$ can be prepared by $|S|$ samples to \ora{trace} and one sample to \ora{prep}, for both $\text{sgn}=+$ and $\text{sgn}=-$.
\end{lemma}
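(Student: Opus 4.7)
The plan is to realise $K^{\text{sgn}}/\Tr[K^{\text{sgn}}]$ as the reduced state of a coherent LCU-style preparation. The key identity is
\begin{equation*}
\frac{K^{\text{sgn}}}{\Tr[K^{\text{sgn}}]} \;=\; \sum_{j\in S} p_{j}\,\frac{A_{j}^{\text{sgn}}}{\Tr[A_{j}^{\text{sgn}}]}, \qquad p_{j} := \frac{c_{j}\,\Tr[A_{j}^{\text{sgn}}]}{\Tr[K^{\text{sgn}}]},
\end{equation*}
where $\{p_{j}\}_{j\in S}$ is a probability distribution since $c_{j}>0$ and each $\Tr[A_{j}^{\text{sgn}}]\geq 0$. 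So it suffices to sample $j$ with distribution $p_{j}$ and then hand back $A_{j}^{\text{sgn}}/\Tr[A_{j}^{\text{sgn}}]$, and the whole job is to do this coherently using the available oracles.

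First, I would query \ora{trace} once on $|j\>|0\>|0\>$ for each of the $|S|$ indices $j\in S$ and read out the classical values $\Tr[A_{j}^{+}]$ and $\Tr[A_{j}^{-}]$. This uses exactly $|S|$ calls to \ora{trace}. Since the $c_{j}$'s are given, I can then compute $\Tr[K^{\text{sgn}}]=\sum_{j\in S} c_{j}\Tr[A_{j}^{\text{sgn}}]$ and every $p_{j}$ classically, and use standard quantum state preparation on the index register to build
\begin{equation*}
|\phi\> \;=\; \sum_{j\in S}\sqrt{p_{j}}\,|j\>.
\end{equation*}

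Next, apply \ora{prep} a single time to $|\phi\>\otimes|0\>\otimes|0\>$, obtaining by linearity
\begin{equation*}
|\Phi\> \;=\; \sum_{j\in S}\sqrt{p_{j}}\,|j\>|\psi_{j}^{+}\>|\psi_{j}^{-}\>.
\end{equation*}
Tracing out the index register and the register holding $|\psi_{j}^{-\text{sgn}}\>$ leaves $\sum_{j\in S} p_{j}\,|\psi_{j}^{\text{sgn}}\>\<\psi_{j}^{\text{sgn}}|$, and tracing out the purifying ancilla of $|\psi_{j}^{\text{sgn}}\>$ then gives $\sum_{j\in S} p_{j}\,A_{j}^{\text{sgn}}/\Tr[A_{j}^{\text{sgn}}] = K^{\text{sgn}}/\Tr[K^{\text{sgn}}]$, as required. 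The total cost is $|S|$ queries to \ora{trace} plus one query to \ora{prep}.

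There is no real obstacle here: the construction is the natural LCU preparation, the only subtlety being that the amplitudes must use $\sqrt{c_{j}\Tr[A_{j}^{\text{sgn}}]}$ rather than $\sqrt{c_{j}}$ so that the weights after partial trace match the normalised $A_{j}^{\text{sgn}}/\Tr[A_{j}^{\text{sgn}}]$. This is precisely why the $|S|$ classical queries to \ora{trace} are needed before the single coherent call to \ora{prep}.
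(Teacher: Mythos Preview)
Your proposal is correct and uses essentially the same decomposition as the paper: write $K^{\text{sgn}}/\Tr[K^{\text{sgn}}]$ as the convex combination $\sum_{j\in S} p_j\, A_j^{\text{sgn}}/\Tr[A_j^{\text{sgn}}]$, obtain the weights $p_j$ via $|S|$ queries to the trace oracle, and then use one call to the prep oracle. The only cosmetic difference is that the paper samples $j$ \emph{classically} with probability $p_j$ and then queries the prep oracle on that fixed $j$, whereas you apply the prep oracle coherently to $\sum_j \sqrt{p_j}\,|j\>$ and trace out the index register; since the $|j\>$ are orthogonal, the resulting density matrix is the same either way.
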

\begin{proof}
Consider the following protocol, where we choose all $\pm$ to be $+$ when preparing $K^{+}/\Tr[K^{+}]$, and choose all $\pm$ to be $-$ when preparing $K^{-}/\Tr[K^{-}]$:
\begin{enumerate}
\item For all $j\in S$, sample \ora{trace} to obtain $\Tr[A_j^{\pm}]$;
\item To prepare $K^{\pm}/\Tr[K^{\pm}]$, toss a coin $i\in S$ such that $\Pr[i=j]=\frac{c_{j}\Tr[A_j^{\pm}]}{\sum_{k\in S}c_{k}\Tr[A_k^{\pm}]}$, take one sample of \ora{prep} to obtain $A_{j}^{\pm}/\Tr[A_{j}^{\pm}]$, and output this state.
\end{enumerate}
By symmetry, we only consider the preparation of $K^{\pm}/\Tr[K^{\pm}]$. With probability $\frac{c_{j}\Tr[A_j^{\pm}]}{\sum_{k\in S}c_{k}\Tr[A_k^{\pm}]}$, the output state is $A_{j}^{\pm}/\Tr[A_{j}^{\pm}]$; therefore, in average the density matrix prepared is
\begin{align}
\sum_{j\in S}\frac{c_{j}\Tr[A_j^{\pm}]}{\sum_{k\in S}c_{k}\Tr[A_k^{\pm}]}\cdot \frac{A_{j}^{\pm}}{\Tr[A_{j}^{\pm}]}=\frac{\sum_{j\in S}c_{j}A_j^{\pm}}{\sum_{k\in S}c_{k}\Tr[A_k^{\pm}]}=\frac{K^{\pm}}{\Tr[K^{\pm}]}.
\end{align}
Furthermore,  Step 1 takes $|S|$ samples to \ora{trace}, and Step 2 takes one sample to \ora{prep}; this exactly matches the sample complexity claimed in \lem{linear_prep}.
\end{proof}

Combining \lem{linear_prep} and \thm{low-rank-Gibbs} leads to a lemma that generates the Gibbs state in Line \ref{line:Gibbs} of \algo{matrixMW-gain}:
\begin{lemma} \label{lem:gibbs_prep}
Suppose $K = K^+ - K^-$, where $K^{\pm}=\sum_{j \in S} c_j A_j^{\pm}$, $c_j>0$ and $A_j^{\pm}$ refers to either $A_j^{+}$ or $A_j^{-}$. Moreover, assume that $\tr(K^+) + \tr(K^-) \leq B_{K}$ for some bound $B_{K}$, and that $K^+$, $K^-$ have rank at most $r_{K}$. Then it is possible to prepare the Gibbs state $\rho_G=\exp(-K)/\tr(\exp(-K))$ to $\epsilon$ precision in trace distance, with $|S|\cdot\poly(\log n, r_{K}, B_{K}, \epsilon^{-1})$ quantum gates and queries to \ora{trace} and \ora{prep}. In other words, $\mathcal{T}_{\text{Gibbs}}(r_{K},\Phi,B_{K},\epsilon)=O(\Phi\cdot\poly(\log n, r_{K}, B_{K}, \epsilon^{-1}))$.
\end{lemma}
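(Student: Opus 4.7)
The plan is to combine \lem{linear_prep} with \thm{low-rank-Gibbs-intro} in a black-box manner. The low-rank Gibbs sampler requires two ingredients: (i) an oracle that, on input $|0\rangle$, prepares copies of the normalized density matrices $\rho^+ = K^+/\tr(K^+)$ and $\rho^- = K^-/\tr(K^-)$; and (ii) classical estimates of $\tr(K^+)$ and $\tr(K^-)$. Once these are available, \thm{low-rank-Gibbs-intro} produces $\rho_G = \exp(-K)/\tr(\exp(-K))$ to trace-distance error $\epsilon$ using $\poly(\log n, r_K, B_K, \epsilon^{-1})$ quantum gates and calls to these two ingredients.

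The first step is to use \lem{linear_prep} to supply ingredient (i): it gives a subroutine that, with $|S|$ queries to \ora{trace} and a single query to \ora{prep}, outputs a fresh sample of $K^{\mathrm{sgn}}/\tr(K^{\mathrm{sgn}})$ for $\mathrm{sgn}\in\{+,-\}$. For ingredient (ii), we compute $\tr(K^\pm) = \sum_{j \in S} c_j \tr(A_j^\pm)$ by querying \ora{trace} once for each $j \in S$ (which can be done coherently, though we only need classical bit-string estimates here); the $c_j$'s are known parameters chosen by the algorithm so no oracle is needed for them. This costs a total of $O(|S|)$ queries to \ora{trace}, performed once up-front.

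Next, we plug these implementations into the Gibbs sampler of \thm{low-rank-Gibbs-intro}. Each of the sampler's internal calls to the "preparation" oracle for $\rho^\pm$ is replaced by one invocation of the \lem{linear_prep} procedure, at a cost of $|S|$ queries to \ora{trace} plus one query to \ora{prep} per invocation. Since the sampler makes $\poly(\log n, r_K, B_K, \epsilon^{-1})$ such calls and uses $\poly(\log n, r_K, B_K, \epsilon^{-1})$ additional two-qubit gates, the overall gate and query complexity is
\[
|S| \cdot \poly(\log n, r_K, B_K, \epsilon^{-1}),
\]
which matches the claimed bound and yields $\mathcal{T}_{\text{Gibbs}}(r_K, \Phi, B_K, \epsilon) = O(\Phi \cdot \poly(\log n, r_K, B_K, \epsilon^{-1}))$ upon noting $\Phi \geq |S|$.

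The main (minor) obstacle is a clean error-accounting argument: we must ensure that the density matrix produced by \lem{linear_prep} is exactly $K^\pm/\tr(K^\pm)$, so no error is introduced at the preparation step, and that the classical estimates of $\tr(K^\pm)$ are obtained with sufficient precision (say $\epsilon/\poly(B_K)$ bits, implementable by standard binary representation manipulation) so that the hypotheses of \thm{low-rank-Gibbs-intro} are met. Both are straightforward: the preparation in \lem{linear_prep} is exact by construction, and the trace estimates can be read out to any desired precision since \ora{trace} returns binary representations. No further difficulty is expected.
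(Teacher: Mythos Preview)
Your proposal is correct and follows exactly the same approach as the paper, which simply states that \lem{gibbs_prep} is obtained by ``combining \lem{linear_prep} and \thm{low-rank-Gibbs}'' without further elaboration. Your write-up is in fact more detailed than what appears in the paper.
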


\subsection{Lower bound for quantum SDP solvers with quantum inputs}
In this section, we prove quantum lower bounds in the quantum input setting.

\begin{theorem}[Lower bound on \thm{SDP-feasibility-testing}]\label{thm:SDP-feasibility-lower}
There exists an SDP feasibility testing problem such that $B,r,\epsilon=\Theta(1)$, and solving the problem requires $\Omega(\sqrt{m})$ calls to \ora{trace} and \ora{prep}.
\end{theorem}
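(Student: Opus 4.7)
The plan is a reduction from unstructured quantum search (equivalently, computing the \textsc{Or} of $m$ bits) to SDP feasibility testing in the quantum input model, after which the $\Omega(\sqrt{m})$ lower bound of Bennett, Bernstein, Brassard and Vazirani~\cite{bennett1997strengths} immediately yields the claim. Concretely, given oracle access to an unknown string $x \in \{0,1\}^m$ via the standard bit oracle $O_x : \ket{j}\ket{b} \mapsto \ket{j}\ket{b \oplus x_j}$, I would construct an SDP instance whose quantum input oracles can each be simulated at $O(1)$ cost per call and whose feasibility status determines $\mathrm{OR}(x)$.

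Take $n = 2$, and for each $j \in \range{m}$ set
\begin{equation}
A_j := x_j\, I_2, \qquad A_j^+ := \bigl(x_j + \tfrac{1}{2}\bigr) I_2, \qquad A_j^- := \tfrac{1}{2}\, I_2, \qquad a_j := 0,
\end{equation}
and let $\epsilon := 1/2$. Then $-I \preceq A_j \preceq I$, the matrices $A_j^\pm$ are PSD of rank at most $r = 2$, and $\Tr[A_j^+] + \Tr[A_j^-] = 2 + 2 x_j \le 4 =: B$; in particular $B, r, \epsilon = \Theta(1)$. For any density matrix $X$ on $\C^2$ the $j$-th constraint reads $\Tr[A_j X] = x_j \le 1/2$, so if $x = 0^m$ every constraint is satisfied and $\mathcal{S}_0 \ne \emptyset$, whereas if any $x_j = 1$ then the corresponding constraint is violated and $\mathcal{S}_\epsilon = \emptyset$. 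Hence any correct feasibility tester must output a feasible point in the first case and ``fail'' in the second, i.e.\ it computes $\neg\,\mathrm{OR}(x)$.

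I would then show that each of \ora{trace}, \ora{prep} and \ora{a} can be implemented using only $O(1)$ queries to $O_x$. Because $A_j^\pm / \Tr[A_j^\pm] = I_2/2$ is independent of both $j$ and $x$, the purifications $\ket{\psi_j^\pm}$ may be taken to be a fixed maximally entangled state on $\C^2 \otimes \C^2$, so \ora{prep} acts as the identity on the $\ket{j}$ register and simply prepares a fixed state, using $0$ queries; likewise $a_j \equiv 0$ makes \ora{a} the identity. Only \ora{trace} depends on $x$: the map $\ket{j}\ket{0}\ket{0} \mapsto \ket{j}\ket{2 x_j + 1}\ket{1}$ can be realised by one query to $O_x$ (to write $x_j$ into an ancilla), a constant-size reversible arithmetic step, and one uncompute query, i.e.\ $O(1)$ queries per call.

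Combining these pieces, any quantum algorithm that solves the SDP feasibility problem using $Q$ total calls to \ora{trace}, \ora{prep} and \ora{a} yields a quantum algorithm for $\mathrm{OR}(x)$ using $O(Q)$ queries to $O_x$; BBBV~\cite{bennett1997strengths} then forces $Q = \Omega(\sqrt{m})$, as required. The only point needing care is verifying that each simulating circuit is genuinely unitary and reversible, which is routine bookkeeping; I do not expect a substantive technical obstacle.
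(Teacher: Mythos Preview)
Your reduction from \textsc{Or} is correct and the parameter verification is clean. The core idea---embed an unstructured search instance so that the quantum input oracles reveal the hidden bit at most one at a time, then invoke BBBV---is the same as the paper's. The difference is in \emph{where} the hidden bit is planted: the paper fixes $A_j^+$ to be a rank-one projector $\ket{i^*}\!\bra{i^*}$ (so \ora{prep} and \ora{trace} are both constant across $j$) and hides the marked index in the right-hand sides $a_{j^*}=-1/2$ versus $a_j=1/2$, forcing the algorithm to query \ora{a}; you instead fix $a_j\equiv 0$ and make the density $A_j^\pm/\Tr[A_j^\pm]$ the maximally mixed state independently of $x$, so the search bit leaks only through $\Tr[A_j^+]=2x_j+1$ and hence only through \ora{trace}. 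Your choice arguably matches the theorem's wording more directly (it speaks of \ora{trace} and \ora{prep}), and your $n=2$ avoids the incidental random $i^*$. Either way the argument is a one-line application of the Grover lower bound once the instance is set up.
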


\begin{proof}
Consider the following two instances of the SDP feasibility testing problem:
\begin{enumerate}
\item For all $j\in\range{m}$, set $A_{j}^{-}=0$. For a random $i^{*}\in\range{n}$, set $(A_{j}^{+})_{i^{*}i^{*}}=1$ for all $j\in\range{m}$. All other elements of matrices $A_{j}^{+}$ are set to zero. For a random $j^{*}\in\range{m}$, set $a_{j^{*}}=-1/2$. Set $a_{j}=1/2$ for all $j\neq j^{*}$. Set $\epsilon=1/4$.
\item For all $j\in\range{m}$, set $A_{j}^{-}=0$. For a random $i^{*}\in\range{n}$, set $(A_{j}^{+})_{i^{*}i^{*}}=1$ for all $j\in\range{m}$. All other elements of matrices $A_{j}^{+}$ are set to zero. Set $a_{j}=1/2$ for all $j\in\range{m}$. Set $\epsilon=1/4$.
\end{enumerate}
Note that the first problem is not feasible because there is no $X$ such that $X\succeq 0$ and $\Tr[A_{j^{*}}X]\leq -1/2+1/4<0$; the second problem is always feasible. For both problems, we have $B=r=1$, and the state $\frac{A_{j}^{+}}{\Tr[A_{j}^{+}]}$ is always $|i^{*}\>\<i^{*}|$ for all $j\in\range{m}$. Therefore, \ora{prep} provides no information for distinguishing between the two problems, and we should only rely on \ora{trace}. But this is equivalent to searching for the $j^{*}$ such that $a_{j^{*}}=-1/2$, and by reduction to the lower bound on Grover search it takes at least $\Omega(\sqrt{m})$ queries to \ora{trace} for distinguishing between the two problems.
\end{proof}

Combining \thm{SDP-feasibility-testing} and \thm{SDP-feasibility-lower}, we obtain the optimal bound on SDP feasibility testing using \ora{trace} and \ora{prep}, up to poly-logarithmic factors.


\section{Application: efficient learnability of quantum states} \label{append:learn}

We consider the following quantum state learning problem, also named "shadow tomography" in \cite{aaronson2017quantum}.

\begin{question}\label{ques:shadow-tomography}
Let $\rho$ be an unknown quantum state in an $n$-dimensional Hilbert space, $E_{1},\ldots,E_{m}$ be known two-outcome POVMs, and $0<\epsilon<1$. Given independent copies of $\rho$, one wants to obtain an explicit quantum circuit for a state $\sigma$ such that with probability at least $2/3$, $|\Tr[\sigma E_{i}]-\Tr[\rho E_{i}]|\leq\epsilon\ \forall\,i\in\range{m}$. What is the sample complexity (i.e., the number of required copies of $\rho$) and gate complexity (i.e., the total running time) of the best such procedure?
\end{question}

Aaronson provides a solution with the sample complexity (i.e., the number of copies of $\rho$) of $\tilde{O}\big(\log^{4}m\cdot\log n/\epsilon^{5}\big)$ in~\cite{aaronson2017quantum}. In this section we show that, for low rank matrices and small $m$, we can also make the learning process \textit{computationally efficient} while keeping a comparable sample complexity,  by using our previous result on speeding up solutions to SDPs.

\subsection{Reduction of \ques{shadow-tomography} to SDP feasibility}
We start with a simple explanation of using the solution to SDP feasibility to address \ques{shadow-tomography}. Given (many copies of) any unknown quantum state $\rho$ and two-outcome POVMs $E_{1},\ldots,E_{m}$, in order to estimate $\Tr[\rho E_{i}]$, it suffices to find a state $\sigma$ that is the solution to the following SDP feasibility problem:

\begin{align}
\Tr[\sigma E_{i}]&\leq\Tr[\rho E_{i}]+\epsilon\quad\forall\,i\in\range{m}; \label{eqn:SDP-shadow-1} \\
\Tr[\sigma E_{i}]&\geq\Tr[\rho E_{i}]-\epsilon\quad\forall\,i\in\range{m}; \label{eqn:SDP-shadow-2} \\
\Tr[\sigma]&=1; \label{eqn:SDP-shadow-3} \\
\sigma&\succeq 0. \label{eqn:SDP-shadow-4}
\end{align}

\noindent Any feasible solution $\sigma$ satisfies that $|\Tr[\sigma E_{i}]-\Tr[\rho E_{i}]|<\epsilon$ for all $i\in\range{m}$.
Thus, our quantum SDP solver will generate a description of such $\sigma$. However, we do not know $\Tr[\rho E_{i}]$, and hence the constraints of the SDP feasibility problem, in advance.
The \emph{key} observation is that our SDP solver only relies on the implementation of \ora{violation},
which does not need the knowledge of $\Tr[\rho E_{i}]$ for each $i$ explicitly.
It turns out that with the help of the fast quantum OR lemma, one only needs a few copies of $\rho$ for the implementation of \ora{violation}.

\subsection{Finding the violated constraint using $\tilde{O}(\sqrt{m})$ gates}
Similar to \append{SDP-efficient}, we assume the existence of \ora{trace} and \ora{prep} to achieve efficient quantum circuits. Specifically, for the feasibility problem \eqn{SDP-shadow-1}-\eqn{SDP-shadow-4}, we have:

\hd{\ora{trace} for traces of $E_{i}$:} A unitary $O_{\Tr}$ such that for any $i\in\range{m}$, $O_{\Tr}|i\>|0\>=|i\>|\Tr[E_{i}]\>$.

\hd{\ora{prep} for preparing $E_{i}$:} A unitary $O$ (and its inverse $O^\dagger$) acting on $\C^{m}\otimes(\C^{n} \otimes \C^{n})$ such that for any $i\in\range{m}$,
\begin{align}
O|i\>\<i|\otimes |0\>\<0|O^{\dagger}=|i\>\<i|\otimes |\psi_i\>\<\psi_i|,
\end{align}
where $|\psi_i\>\<\psi_i|$ is any purification of $\frac{E_{i}}{\Tr[E_{i}]}$.

Furthermore, we assume that the POVM operator $E_{i}$ has rank at most $r$, for all $i\in\range{m}$. Using \ora{trace} and \ora{prep}, \ora{violation} (searching for violation) can be implemented by the following lemma:
\begin{lemma}\label{lem:oracle-implementation-learning}
Given $\epsilon,\delta\in (0,1)$. Assume we have \ora{trace}, \ora{prep}, and $\poly(\log m,\log n,r,\epsilon^{-1},\log\delta^{-1})$ copies of two states $\rho,\sigma\in\C^{n}$. Assume either $\exists\,i \in [m]$ such that $|\Tr[\sigma E_{i}]-\Tr[\rho E_{i}]|\geq\epsilon$, or  $|\Tr[\sigma E_{i}]-\Tr[\rho E_{i}]|\leq\epsilon/2$ for all $i\in\range{m}$. Then there is an algorithm that in the former case, finds such an $i$; and in the latter case, returns "FEASIBLE". This algorithm has success probability $1-\delta$ and uses in total $\sqrt{m}\cdot\poly(\log m,\log n,r,\epsilon^{-1},\log\delta^{-1})$ quantum gates and queries to \ora{trace} and \ora{prep}.
\end{lemma}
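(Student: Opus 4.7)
The plan is to mirror the proof of \lem{oracle-implementation}, replacing the trace-estimation primitive that decides ``$\Tr[A_j\rho]>a_j+\epsilon$'' with one that decides ``$|\Tr[\sigma E_i]-\Tr[\rho E_i]|\geq\epsilon$.'' For each $i\in[m]$ I would build a two-outcome POVM $M_i$, acting coherently on $C=\poly(\log m,\log n,r,\epsilon^{-1})$ copies of $\rho$, $\sigma$, and fresh ancillas, that outputs ``violation'' with probability $\geq 1-1/\poly(m)$ whenever $|\Tr[\sigma E_i]-\Tr[\rho E_i]|\geq \epsilon$ and with probability $\leq 1/\poly(m)$ whenever $|\Tr[\sigma E_i]-\Tr[\rho E_i]|\leq\epsilon/2$. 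Because both \ora{prep} and the \textsf{SWAP} test are reversible, $M_i\otimes\proj{0}^{\otimes a}$ can be written in the standard form $P\Lambda_i P$ with $\Lambda_i$ an honest orthogonal projector and $P=I\otimes\proj{0}^{\otimes a}$, which is exactly the shape of input demanded by \lem{fast-quantum-OR-cite}.

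To build $M_i$ I would follow \algo{SWAP-test}. One batch of $\tilde{O}(r^2/\epsilon^2)$ \textsf{SWAP} tests between $\rho$ and a purification of $E_i/\Tr[E_i]$ (obtained from \ora{prep}), and a second identical batch with $\sigma$ in place of $\rho$, produce empirical frequencies $\widetilde{p}_{i,\rho},\widetilde{p}_{i,\sigma}$. Since $E_i\preceq I$ and $\mathrm{rank}(E_i)\leq r$ imply $\Tr[E_i]\leq r$, a Chernoff bound identical to those appearing in \lem{oracle-implementation-SWAP} shows that $(2\widetilde{p}_{i,\rho}-1)\Tr[E_i]$ and $(2\widetilde{p}_{i,\sigma}-1)\Tr[E_i]$ approximate $\Tr[\rho E_i]$ and $\Tr[\sigma E_i]$ to additive error $\epsilon/8$ with probability $\geq 1-1/\poly(m)$; the scalar $\Tr[E_i]$ itself is read off by \ora{trace}. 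The POVM $M_i$ then simply flags ``violation'' when the estimated difference exceeds $3\epsilon/4$, cleanly separating the two hypotheses.

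Feeding the resulting projectors $\{\Lambda_i\}$ and the fixed joint state $\widetilde{\omega}=\rho^{\otimes C}\otimes\sigma^{\otimes C}\otimes\proj{0}^{\otimes a}$ into \lem{fast-quantum-OR-cite} with $\varepsilon,\varphi=O(1/\poly(m))$ and a constant gap parameter $\xi$ gives, in $\tilde{O}(\sqrt{m})$ applications of the controlled reflection $\operatorname{ctrl}\text{-}(I-2\sum_i \Lambda_i\otimes\proj{i})$, a constant-probability decision about the existence of a violated constraint. Each controlled reflection costs $p=\poly(\log m,\log n,r,\epsilon^{-1})$ gates and queries to \ora{prep} and \ora{trace}. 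To actually \emph{find} a violating index I would run binary search over $[m]$, recursing into the two halves at an overhead of $O(\log m)$, and finally apply $O(\log\delta^{-1})$ Chernoff-style repetitions to boost the overall success probability from constant to $1-\delta$. Multiplying through yields the claimed $\sqrt{m}\cdot\poly(\log m,\log n,r,\epsilon^{-1},\log\delta^{-1})$ gate and query complexity, with sample complexity in $\rho$ and $\sigma$ of the same order.

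The only substantive deviation from \lem{oracle-implementation} is that the quantity we now wish to test is a \emph{gap} $\Tr[\sigma E_i]-\Tr[\rho E_i]$ between two unknown traces rather than a deviation of a single trace from a classical threshold $a_j$. This forces each $\Lambda_i$ to consume copies of \emph{both} $\rho$ and $\sigma$ simultaneously and to split the additive error budget across two independent \textsf{SWAP}-test estimators, which is the origin of the $\epsilon/8$ per-estimator tolerance and of the $\epsilon/2$ buffer between the yes and no cases in the hypothesis. The low-rank assumption on the measurements enters exactly once, through the inequality $\Tr[E_i]\leq r$ that replaces the bound $B$ of \lem{oracle-implementation-SWAP}; every remaining ingredient — coherent purification, fast quantum OR reduction, binary search, and Chernoff amplification — transfers from \append{SDP-efficient} essentially verbatim.
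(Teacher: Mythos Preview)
Your proposal is correct and follows essentially the same route as the paper: the paper likewise derives \lem{oracle-implementation-learning} by combining \lem{oracle-implementation-SWAP} with \lem{oracle-implementation}, replacing the single state $\rho$ by the joint state $\rho^{\otimes C}\otimes\sigma^{\otimes C}\otimes\proj{0}^{\otimes a}$ and using the rank bound $\Tr[E_i]\leq r$ in place of $B$. Your explicit description of the two-sided \textsf{SWAP}-test estimator, the $3\epsilon/4$ threshold, and the binary-search/amplification layer spells out precisely what the paper leaves implicit.
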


\lem{oracle-implementation-learning} also follows from our fast quantum OR lemma (\lem{fast-quantum-OR-cite}) by combining \lem{oracle-implementation-SWAP} and \lem{oracle-implementation}, where we replace $\rho$ by $\rho^{\otimes C} \otimes \sigma^{\otimes C} \otimes \proj{0}^a$ for some $C=\poly(\log m,\log n,\epsilon^{-1})$; also notice that because $0\leq E_{i}\leq I$ and $\rank(E_{i})\leq r$, we have $B\leq r$. As a result, the detailed proof is omitted here.

\subsection{Gate complexity of learning quantum states}
Similar to \cor{SDP-feasibility-testing-quantum}, we solve \ques{shadow-tomography} by using \lem{gibbs_prep} to generate (copies) of the Gibbs state $\rho^{(t)}$ and relying on \lem{oracle-implementation-learning} to implement \ora{violation}.

\begin{corollary}\label{cor:efficient-shadow}
Assume we are given \ora{trace} and \ora{prep}. Then for any $\epsilon>0$, \ques{shadow-tomography} can be solved by \algo{efficient-shadow} with success probability at least 0.96, using at most $\poly(\log m,\log n,r,\epsilon^{-1})$ copies of $\rho$, and at most $\sqrt{m}\cdot\poly(\log m,\log n,r,\epsilon^{-1})$ quantum gates and queries to \ora{trace} and \ora{prep}.
\end{corollary}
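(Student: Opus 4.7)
The plan is to instantiate the master algorithm \algo{matrixMW-gain} on the feasibility program (\ref{eqn:SDP-shadow-1})--(\ref{eqn:SDP-shadow-4}), using \lem{oracle-implementation-learning} in place of \ora{violation} and \lem{gibbs_prep} for the Gibbs state preparation step. The reduction from \ques{shadow-tomography} to feasibility is already given in the excerpt; note that each two-sided constraint $|\Tr[\sigma E_i] - \Tr[\rho E_i]| \leq \epsilon$ is split into two one-sided linear constraints, which at most doubles the effective number of constraints, absorbed into the $\poly(\log m)$ factor. Since the $E_i$'s are PSD with $0 \preceq E_i \preceq I$ and $\rank(E_i)\le r$, the trace bound $B$ in \thm{SDP-feasibility-testing} is at most $r$, so all the complexity parameters collapse into $\poly(\log m,\log n,r,\epsilon^{-1})$.

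First, I would formally spell out \algo{efficient-shadow} as the analogue of \algo{efficient-SDP}: initialize $W^{(1)} = I_n$, run $T = O(\log n / \epsilon^2)$ iterations in which we (i) prepare $\poly(\log m,\log n,r,\epsilon^{-1})$ copies of the Gibbs state $\sigma^{(t)} = W^{(t)}/\Tr[W^{(t)}]$ by invoking \lem{gibbs_prep} on the running sum $\frac{\epsilon}{4}\sum_{\tau<t} M^{(\tau)}$, where each $M^{(\tau)}$ is $\pm \tfrac12(I - E_{i^{(\tau)}})$ depending on which side of the constraint was violated; and (ii) invoke \lem{oracle-implementation-learning} on $\sigma^{(t)}$ together with fresh copies of the true state $\rho$ to either identify a violated constraint $i^{(t)}$ or declare feasibility. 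The key observation, echoing \sec{tech} and the discussion around (\ref{eqn:SDP-shadow-1}), is that \lem{oracle-implementation-learning} does not require explicit knowledge of the numbers $\Tr[\rho E_i]$; it only needs SWAP-test access to $\rho$ and to $E_i/\Tr[E_i]$ via \ora{prep}, combined with the fast quantum OR lemma \lem{fast-quantum-OR-cite}.

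Correctness then follows immediately from \thm{SDP-master-1}, provided that each simulated call of \ora{violation} is correct. Setting the failure probability in \lem{oracle-implementation-learning} to $\delta = \Theta(\epsilon^2/\log n)$ and taking a union bound over the $T = O(\log n/\epsilon^2)$ iterations yields overall success probability at least $0.96$, matching the guarantee. For the complexity count, at each iteration the partial sum $K_t^{\pm}$ has rank at most $tr = O(r \log n/\epsilon^2)$ and trace at most $\tfrac{\epsilon t}{4}\cdot r = O(r\log n/\epsilon)$ and is expressible as a linear combination of $|S|\leq t = O(\log n/\epsilon^2)$ matrices accessible through \ora{trace} and \ora{prep}; thus \lem{gibbs_prep} returns each Gibbs sample in $\poly(\log n,r,\epsilon^{-1})$ gates, contributing the $n$-free, $m$-free portion of the runtime. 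Meanwhile, each call to \lem{oracle-implementation-learning} costs $\sqrt{m}\cdot \poly(\log m,\log n,r,\epsilon^{-1})$ gates, and multiplying by $T$ iterations yields the claimed $\sqrt{m}\cdot\poly(\log m,\log n,r,\epsilon^{-1})$ total gate complexity.

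The main technical obstacle will be verifying that the sample complexity in $\rho$ remains polylogarithmic in $n$: copies of $\rho$ are consumed only inside \lem{oracle-implementation-learning} (to realize the SWAP-test-based projection comparing $\Tr[\rho E_i]$ with $\Tr[\sigma^{(t)} E_i]$), not in any Gibbs-state preparation step, and the number of copies per iteration is $\poly(\log m,\log n,r,\epsilon^{-1})$; multiplying by $T$ still yields $\poly(\log m,\log n,r,\epsilon^{-1})$ copies overall. A minor subtlety I would handle carefully is that \lem{oracle-implementation-learning} is stated with an asymmetric promise (violation of size $\epsilon$ vs.\ feasibility with slack $\epsilon/2$); this is reconciled with the master algorithm by running it at accuracy $\epsilon/2$, which only rescales the polynomial factors. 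The explicit output promised by \ques{shadow-tomography} (a quantum circuit preparing a state with the right expectation values) is then the circuit that prepares the Gibbs state $\exp\bigl(-\tfrac{\epsilon}{4}\sum_{\tau} M^{(\tau)}\bigr)/Z$ on the terminating iteration, exactly as in Jaynes' form (\ref{gibbsjaynes}), with the coefficients $\lambda_i$ read off from the history of violations $\{i^{(\tau)}\}$.
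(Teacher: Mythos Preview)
Your proposal is correct and follows essentially the same approach as the paper's proof: instantiate the master algorithm with \lem{oracle-implementation-learning} for the violation oracle and \lem{gibbs_prep} for Gibbs sampling, bound the rank and trace of the running Hamiltonian $K_t^{\pm}$ by $O(r\log n/\epsilon^2)$ and $O(r\log n/\epsilon)$, set $\delta=\Theta(\epsilon^2/\log n)$, and union-bound over the $T=O(\log n/\epsilon^2)$ rounds to get the $0.96$ success probability and the stated sample and gate complexities.

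One small slip: writing the gain matrix as ``$\pm\tfrac12(I-E_{i^{(\tau)}})$'' is not quite right, since $-\tfrac12(I-E_{i^{(\tau)}})\preceq 0$ violates the hypothesis $0\preceq M^{(t)}\preceq I$ used in \thm{SDP-master-1}. The paper takes $M^{(t)}=\tfrac12\bigl(I_n\mp(E_{i^{(t)}}-\Tr[\rho E_{i^{(t)}}]I_n)\bigr)$, i.e.\ the sign is on $E_{i^{(t)}}$ (shifted by a multiple of the identity), not on the whole expression; equivalently $\tfrac12(I\mp E_{i^{(\tau)}})$ would work since the identity component drops out of the Gibbs state anyway (cf.\ the remark after \cor{SDP-feasibility-testing-quantum}). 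With that correction your argument goes through verbatim.
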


\begin{algorithm}[htbp]
Initialize the weight matrix $W^{(1)}=I_{n}$, and $T=\frac{16\ln n}{\epsilon^{2}}$\;
\For{$t=1,2,\ldots,T$}{
	Prepare $\poly(\log m,\log n,r,\epsilon^{-1})$ samples of the Gibbs state $\rho^{(t)}=\frac{W^{(t)}}{\Tr[W^{(t)}]}$ by \lem{gibbs_prep}, and take $\poly(\log m,\log n,r,\epsilon^{-1})$ copies of $\rho$\; \label{line:Gibbs-efficient-shadow}
  Using these $\poly(\log m,\log n,r,\epsilon^{-1})$ copies of $\rho^{(t)}$ and $\rho$, apply \lem{oracle-implementation-learning} with $\delta=\frac{\epsilon^{2}}{400\ln n}$ to search for an $i^{(t)}\in\range{m}$ such that $|\Tr[\rho E_{i^{(t)}}]-\Tr[\rho^{(t)} E_{i^{(t)}}]|\geq\epsilon$. \label{line:observe-gain-matrix-shadow} \eIf{such $i^{(t)}$ is found}
    {\eIf{$(\Tr[\rho E_{i^{(t)}}]-\Tr[\rho^{(t)} E_{i^{(t)}}]\geq\epsilon)$}{Take $M^{(t)}=\frac{1}{2}\big(I_{n}-(-E_{i^{(t)}}+\Tr[\rho E_{i^{(t)}}]I_{n})\big)$\; \label{line:shadow-case-1}}({$(\Tr[\rho E_{i^{(t)}}]-\Tr[\rho^{(t)} E_{i^{(t)}}]\leq-\epsilon)$}){Take $M^{(t)}=\frac{1}{2}\big(I_{n}-(E_{i^{(t)}}-\Tr[\rho E_{i^{(t)}}]I_{n})\big)$\; \label{line:shadow-case-2}}}
    ({(no such $i^{(t)}$ exists)}){Claim $\rho^{(t)}$ to be the solution, and terminate the algorithm\; \label{line:observe-gain-matrix-efficient-shadow}}
	Define the new weight matrix: $W^{(t+1)}=\exp[-\frac{\epsilon}{2}\sum_{\tau=1}^{t}M^{(\tau)}]$\;
	}
\caption{Efficiently learn a quantum state via measurements.}
\label{algo:efficient-shadow}
\end{algorithm}

\begin{proof}
Similar to \cor{SDP-feasibility-testing-quantum}, the correctness of \algo{efficient-shadow} is automatically established by \thm{SDP-master-1}; it suffices to analyze the gate cost of \algo{efficient-shadow}.

In Line \ref{line:Gibbs-efficient-shadow} of \algo{efficient-shadow} we apply \lem{gibbs_prep} to compute the Gibbs state $\rho^{(t)}$. In round $t$, because either
\begin{align}\label{eqn:shadow-case-1}
M^{(t)}=\frac{1}{2}\big(I_{n}-(-E_{i^{(t)}}+\Tr[\rho E_{i^{(t)}}]I_{n})\big)=\frac{1-\Tr[\rho E_{i^{(t)}}]}{2}I_{n}+\frac{1}{2}E_{i^{(t)}}
\end{align}
when Line \ref{line:shadow-case-1} executes, or
\begin{align}\label{eqn:shadow-case-2}
M^{(t)}=\frac{1}{2}\big(I_{n}-(E_{i^{(t)}}-\Tr[\rho E_{i^{(t)}}]I_{n})\big)=\frac{1+\Tr[\rho E_{i^{(t)}}]}{2}I_{n}-\frac{1}{2}E_{i^{(t)}}
\end{align}
when Line \ref{line:shadow-case-2} executes, we can take $K_{t}^{+}=\frac{\epsilon}{2}\sum_{\tau=1}^{t}\frac{1}{2}E_{i^{(\tau)}}^{+}$ and $K_{t}^{-}=\frac{\epsilon}{2}\sum_{\tau=1}^{t}\frac{1}{2}E_{i^{(\tau)}}^{-}$, where $E_{i^{(\tau)}}^{+}=E_{i^{(\tau)}}$, $E_{i^{(\tau)}}^{-}=0$ when \eqn{shadow-case-1} holds for round $\tau$, and $E_{i^{(\tau)}}^{+}=0$, $E_{i^{(\tau)}}^{-}=E_{i^{(\tau)}}$ when \eqn{shadow-case-2} holds for round $\tau$. Because $t\leq\frac{16\ln n}{\epsilon^{2}}$, $K_{t}^{+}$, $K_{t}^{-}$ have rank at most $t\cdot r=O(\log n\cdot r/\epsilon^{2})$ and $\Tr[K_{t}^{+}]$, $\Tr[K_{t}^{-}]$ are at most $\frac{\epsilon t}{4}\cdot r=O(\log n\cdot r/\epsilon)$, \lem{gibbs_prep} guarantees that
\begin{align}
\frac{16\ln n}{\epsilon^{2}}\cdot\poly\Big(\log n,\frac{r\log n}{\epsilon^{2}},\frac{r\log n}{\epsilon},\epsilon^{-1}\Big)=\poly(\log n,r,\epsilon^{-1})
\end{align}
quantum gates and queries to \ora{trace} and \ora{prep} suffice to prepare the Gibbs state $\rho^{(t)}$. Because there are at most $\frac{16\ln n}{\epsilon^{2}}=\poly(\log n,\epsilon^{-1})$ iterations and in each iteration $\rho^{(t)}$ is prepared for $\poly(\log m,\log n,r,\epsilon^{-1})$ copies, in total the gate cost for Gibbs state preparation is $\poly(\log m,\log n,r,\epsilon^{-1})$.

Furthermore, by \lem{oracle-implementation-learning}, \algo{efficient-shadow} finds an $i^{(t)}\in\range{m}$ such that $|\Tr[\rho E_{i^{(t)}}]-\Tr[\rho^{(t)} E_{i^{(t)}}]|\geq\epsilon$ with success probability at least $1-\frac{\epsilon^{2}}{400\ln n}$, using $\poly(\log m,\log n,r,\epsilon^{-1})$ copies of $\rho$, and $\sqrt{m}\cdot\poly(\log m,\log n,r,\epsilon^{-1})$ quantum gates and queries to \ora{trace} and \ora{prep}. Because \algo{efficient-shadow} has at most $\frac{16\ln n}{\epsilon^{2}}$ iterations, with success probability at least $1-\frac{16\ln n}{\epsilon^{2}}\cdot\frac{\epsilon^{2}}{400\ln n}=0.96$ we can assume that the quantum search in \lem{oracle-implementation-learning} works correctly, and the total gate cost of calling \algo{efficient-shadow} is $\sqrt{m}\cdot\poly(\log m,\log n,r,\epsilon^{-1})$.

In conclusion, $\poly(\log m,\log n,r,\epsilon^{-1})$ is an upper bound on the number of copies of $\rho$, and $\sqrt{m}\cdot\poly(\log m,\log n,r,\epsilon^{-1})$ is an upper bound on the total number of quantum gates and queries to \ora{trace} and \ora{prep}.
\end{proof}

\begin{remark}\label{rem:lower_bound_learn}
Using the same idea as \thm{SDP-feasibility-lower}, we can prove that there exists a shadow tomography problem such that $r,\epsilon=\Theta(1)$, and solving the problem requires $\Omega(\sqrt{m})$ calls to \ora{trace} and \ora{prep}. Therefore \cor{efficient-shadow} is also optimal up to poly-logarithmic factors.
\end{remark}


\section{Gibbs sampling of low-rank Hamiltonians} \label{append:low-rank_sampling}
In this section, we demonstrate how to sample from the Gibbs state of low-rank Hamiltonians given a quantum oracle generating desired states. We repeatedly use the following result of \cite{lloyd2013quantum} (with a straightforward generalization in \cite{kimmel2017hamiltonian}):
\begin{lemma}[\cite{lloyd2013quantum, kimmel2017hamiltonian}]\label{lem:HamiltonianSimulation}
Suppose we are given a quantum oracle that prepares copies of two unknown (normalized) $l$-qubit quantum states $\rho^+$ and $\rho^-$, and we wish to evolve under the Hamiltonian $H = a_+ \rho^+ - a_- \rho^-$ for some nonnegative numbers $a_+, a_- \ge 0$. Then we can approximately implement the unitary $\exp(iHt)$ up to diamond-norm error $\delta$, using $O(a^2 t^2 / \delta)$ copies of $\rho^+$ and $\rho^-$ and $O(la^2 t^2 / \delta)$ other 1- or 2-qubit gates, where $a = a_+ + a_-$.
\end{lemma}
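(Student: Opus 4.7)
The plan is to reduce the claim to the single-state case and then glue the two signs together by Trotter. For the single-state case, I would recall the density-matrix exponentiation trick: if $S$ is the SWAP operator on two $l$-qubit registers, then for any state $\sigma$ and any reference state $\rho$,
\[
\Tr_{2}\!\left[e^{-iS\Delta\tau}(\sigma\otimes\rho)\,e^{iS\Delta\tau}\right] \;=\; \sigma - i\Delta\tau\,[\rho,\sigma] + O(\Delta\tau^{2}) \;=\; e^{-i\rho\Delta\tau}\sigma\, e^{i\rho\Delta\tau} + O(\Delta\tau^{2}),
\]
where the error is in trace distance and holds uniformly in $\sigma$, so in fact in diamond norm on the induced channel. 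To simulate $e^{-ia_{+}\rho^{+} s}$ for time $s$, I would concatenate $N$ such steps with $\Delta\tau = a_{+}s/N$, using a fresh copy of $\rho^{+}$ per step; a standard telescoping argument (triangle inequality plus the fact that composition of CPTP maps does not increase diamond-distance errors additively) bounds the total diamond error by $N\cdot O((a_{+}s/N)^{2}) = O(a_{+}^{2}s^{2}/N)$. This reproduces the single-state sample-based simulation with $O(a_{+}^{2}s^{2}/\epsilon)$ copies for diamond error $\epsilon$.

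Next, to handle $H = a_{+}\rho^{+} - a_{-}\rho^{-}$, I would apply the first-order Lie--Trotter product formula
\[
e^{-iHt} \;\approx\; \left(e^{-ia_{+}\rho^{+}t/r}\, e^{+ia_{-}\rho^{-}t/r}\right)^{r},
\]
whose total Trotter error is bounded by $O(\|[a_{+}\rho^{+},a_{-}\rho^{-}]\|\,t^{2}/r) = O(a_{+}a_{-}t^{2}/r)$ in operator norm, and hence in diamond norm on the associated unitary channel. I would pick $r = \Theta(a_{+}a_{-}t^{2}/\delta)$ so that the Trotter contribution is at most $\delta/2$. Inside each Trotter step I would implement $e^{-ia_{+}\rho^{+}t/r}$ and $e^{+ia_{-}\rho^{-}t/r}$ by the single-state simulator above, each to diamond accuracy $\delta/(4r)$.

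A short calculation balances the budget: the per-step simulation of $e^{\pm i a_{\pm}\rho^{\pm}t/r}$ costs
\[
O\!\left(\frac{(a_{\pm}t/r)^{2}}{\delta/(4r)}\right) \;=\; O\!\left(\frac{a_{\pm}^{2}t^{2}}{r\,\delta}\right)
\]
copies of the respective state, and summing over $r$ Trotter steps gives $O(a_{\pm}^{2}t^{2}/\delta)$ copies of each of $\rho^{+}, \rho^{-}$, hence $O(a^{2}t^{2}/\delta)$ copies in total with $a = a_{+}+a_{-}$. The gate count comes from the observation that the controlled-SWAP on two $l$-qubit registers decomposes into $O(l)$ elementary gates, so the gate cost is $O(l)$ times the number of copies, yielding the claimed $O(la^{2}t^{2}/\delta)$.

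The main technical obstacle is the error bookkeeping: one must verify that the per-step trace-norm bound from the SWAP identity upgrades to a diamond-norm bound on the effective channel (standard but needs the ancilla formulation of density-matrix exponentiation), and that the Trotter error bound remains valid when each exact exponential is replaced by an $\delta/(4r)$-approximate channel (handled by the triangle inequality and sub-multiplicativity of diamond norm under composition). Once these two facts are in place, the balancing of $r$ with the per-step accuracy gives the stated complexities directly.
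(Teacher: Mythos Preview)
The paper does not give its own proof of this lemma: it is quoted as a black-box result from \cite{lloyd2013quantum,kimmel2017hamiltonian} (``We repeatedly use the following result of \cite{lloyd2013quantum} (with a straightforward generalization in \cite{kimmel2017hamiltonian})''). So there is no in-paper proof to compare against; the relevant comparison is to the cited works.

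Your reconstruction is essentially the intended one. The SWAP identity you write is exactly the Lloyd--Mohseni--Rebentrost density-matrix exponentiation step, and iterating it $N$ times with fresh copies yields the $O(a_{+}^{2}s^{2}/\epsilon)$ single-state cost. For the two-sign Hamiltonian $H=a_{+}\rho^{+}-a_{-}\rho^{-}$, \cite{kimmel2017hamiltonian} interleave the $\rho^{+}$ and $\rho^{-}$ SWAP steps directly (effectively a first-order product formula baked into the sequence of partial swaps), which is operationally the same as your explicit Lie--Trotter argument; the error and sample accounting come out identically. Your remark that the per-step trace-distance bound upgrades to a diamond-norm bound (because the SWAP channel acts identically on any reference system) and that errors compose additively under CPTP composition are exactly the two technical points one needs, and they are standard. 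The gate count via decomposing the $l$-qubit SWAP (or partial SWAP) into $O(l)$ elementary gates is also correct. In short: your proposal is correct and matches the argument in the cited references; the paper itself simply imports the statement.
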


By using phase estimation on the operator $\exp(iHt)$ with $t = O(1/a)$, we have

\begin{lemma}
Under the same assumptions as Lemma~\ref{lem:HamiltonianSimulation}, we can perform eigenvalue estimation of $H$: given an eigenstate of $H$, we can estimate its eigenvalue up to precision $\epsilon$, with probability $1 - \xi$, using $O(a^2\epsilon^{-2}\xi^{-2})$ copies of $\rho^+$ and $\rho^-$ and $O(la^2 \epsilon^{-2}\xi^{-2})$ other 1- or 2-qubit gates, where $a = a_+ + a_-$. This procedure disturbs the input state by at most a trace distance error of $O(\sqrt{\xi})$.
\end{lemma}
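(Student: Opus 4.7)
The plan is to run Kitaev's phase estimation on $U = \exp(iHt)$ for a carefully chosen time $t$, using Lemma~\ref{lem:HamiltonianSimulation} to implement each controlled power $U^{2^k}$ approximately, and to extract the trace-distance disturbance bound from the gentle-measurement lemma. First I would choose $t = \Theta(1/a)$. Since $\|H\| \le a_+ + a_- = a$, this ensures the eigenphases $\lambda t$ of $U$ lie inside an $O(1)$-length interval, so there is no phase wrap-around and estimating the phase of $U$ to additive precision $\Delta = \Theta(\epsilon t) = \Theta(\epsilon/a)$ is equivalent to estimating the eigenvalue of $H$ to additive precision $\epsilon$.

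Next I would invoke textbook phase estimation: with $m = \Theta(\log(1/\Delta) + \log(1/\xi)) = \Theta(\log(a/(\epsilon\xi)))$ register qubits one obtains precision $\Delta$ with success probability $\ge 1 - \xi/2$, using controlled-$U^{2^k}$ for $k = 0, 1, \ldots, m-1$. Each $U^{2^k} = \exp(iH \cdot 2^k t)$ is itself implemented, to diamond-norm error $\delta_k$, by invoking Lemma~\ref{lem:HamiltonianSimulation} with evolution time $2^k t$; the cost is $O(a^2 (2^k t)^2 / \delta_k)$ copies of $\rho^\pm$ and $O(l a^2 (2^k t)^2 / \delta_k)$ two-qubit gates. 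Summing a geometric series in $k$ shows the cost is dominated by the largest power, yielding roughly $O((a/\epsilon)^2/\delta)$ copies and $O(l (a/\epsilon)^2/\delta)$ gates if we pick all $\delta_k \asymp \delta$.

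Then I would budget the simulation error. Using the triangle inequality for diamond norm across the $m$ controlled powers, the total deviation of the implemented phase-estimation channel from the ideal one is $O(m\delta)$. To keep this below $\xi/2$ — so that the overall success probability is $\ge 1-\xi$ — I would set $\delta = \Theta(\xi / m) = \tilde\Theta(\xi)$. Substituting back gives a copy count $\tilde O(a^2 \epsilon^{-2} \xi^{-1})$ and gate count $\tilde O(l a^2 \epsilon^{-2} \xi^{-1})$; to land on the stated $\xi^{-2}$ bound, one can afford a cruder accounting (for instance, demanding $\delta \le \xi^2/m$ so that downstream sampling noise remains bounded after postselection), which absorbs the logarithmic factors into an extra $\xi^{-1}$. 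Finally, for the disturbance claim I would apply the gentle measurement lemma: the ideal phase estimation on an eigenstate is a nondemolition measurement, so conditioning on the success event of probability $\ge 1-\xi$ perturbs the input by at most $O(\sqrt{\xi})$ in trace distance, and the $O(\xi)$ simulation-error contribution is absorbed into the same $O(\sqrt{\xi})$.

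The main obstacle I anticipate is the bookkeeping to pin down the exact polynomial in $\xi^{-1}$: the $t^2/\delta$ scaling in Lemma~\ref{lem:HamiltonianSimulation} forces a geometric sum in $k$ whose dominant term comes from the largest controlled power $U^{2^{m-1}}$, and one must simultaneously bound (i) the additive phase-estimation error, (ii) the per-step Hamiltonian-simulation error, and (iii) the total-variation distance between the ideal and approximate outcome distributions. A secondary subtlety is that one should confirm the gentle-measurement step actually yields the claimed $O(\sqrt{\xi})$ disturbance when success is certified only with probability $1-\xi$ (as opposed to a stronger post-selection guarantee); this is exactly what Winter's lemma provides, so it should go through cleanly.
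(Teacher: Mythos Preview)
Your approach is correct and is exactly what the paper does: it states only that the lemma follows ``by using phase estimation on the operator $\exp(iHt)$ with $t = O(1/a)$'' and gives no further details, so your write-up is already more thorough than the paper's own justification. One small bookkeeping slip: having taken $m = \Theta(\log(a/(\epsilon\xi)))$ register qubits, the longest controlled evolution has Hamiltonian time $2^{m-1}t = \Theta(1/(\epsilon\xi))$, not $\Theta(1/\epsilon)$, so the per-step cost picks up an extra $\xi^{-2}$; your claimed $\tilde O(a^2\epsilon^{-2}\xi^{-1})$ is what one gets instead via median-of-$O(\log(1/\xi))$ repetitions of a constant-success phase estimation, but either route comfortably lands inside the stated (loose) $O(a^2\epsilon^{-2}\xi^{-2})$ bound, and your gentle-measurement argument for the $O(\sqrt{\xi})$ disturbance is the right one.
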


In the following proof, we instead assume that eigenvalue estimation of $H$ can be done exactly. This assumption is not true, but it helps to simplify the exposition; the assumption will be removed in \append{gibbs}.

\subsection{Computing the partition function}\label{append:partitionFunction}
As a warm-up, we start with the following lemma:
\begin{lemma}\label{lem:partitionFunction}
Suppose $K = K^+ - K^-$, where $K^+$ and $K^-$ are $n \times n$ PSD matrices, and there is a quantum oracle that prepares copies of the states $\rho^+=K^+ / \tr(K^+)$, $\rho^- = K^-/ \tr(K^-)$, and an oracle for the numbers $\tr(K^+)$, $\tr(K^-)$. Moreover, assume that $\tr(K^+) + \tr(K^-) \leq B$ for some bound $B$,\footnote{The $B$ here is denoted as $B_{K}$ in \defn{quantum-Gibbs} and \lem{gibbs_prep}; for simplicity we make this abbreviation throughout \append{low-rank_sampling} and \append{gibbs}.} and that $K^+$, $K^-$ have rank at most $r_{K}$. Then it is possible to estimate the partition function $Z=\tr(\exp(-K))$ to multiplicative error $\epsilon$  with success probability at least $1-\xi$, with $\poly(\log n, r_{K}, B, \epsilon^{-1}, \xi^{-1})$ quantum gates.
\end{lemma}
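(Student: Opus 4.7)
The plan is to exploit the low-rank structure of $K$ by reducing the trace to a sum over at most $2r_K$ non-trivial eigenvalues, and to extract that sum by phase estimation applied to a state supported on $\operatorname{range}(\sigma)$, where $\sigma := \tfrac12(\rho^+ + \rho^-)$. First I would invoke \lem{HamiltonianSimulation} with $a_\pm := \tr(K^\pm)$: since $a_+\rho^+ - a_-\rho^- = K$ and $a_+ + a_- \le B$, this implements $e^{iKt}$ to diamond error $\delta$ using $\poly(B,t,1/\delta)$ copies of $\rho^\pm$ and the classical trace values from the oracle. An analogous application simulates $e^{i\sigma t}$. Because $\rho^\pm$ each have rank $\le r_K$, $\sigma$ has rank $r \le 2r_K$ and, crucially, $\operatorname{range}(K)\subseteq\operatorname{range}(\sigma)$, so one can choose an eigenbasis $\{|\psi_i\rangle\}_{i=1}^n$ of $K$ in which the first $r$ vectors span $\operatorname{range}(\sigma)$ and the remaining $n-r$ vectors lie in $\ker K$. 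Hence
\begin{equation*}
Z = \sum_{i=1}^n e^{-\lambda_i} = (n-r) + \sum_{i=1}^r e^{-\lambda_i},
\end{equation*}
and it suffices to estimate $r$ and $S_r := \sum_{i=1}^r e^{-\lambda_i}$ and return $\hat Z := (n-\hat r) + \hat S_r$.

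I would then prepare (an approximation of) the maximally mixed state $\tau := \Pi_\sigma / r$ on $\operatorname{range}(\sigma)$ by applying a quantum singular-value transformation to the $e^{i\sigma t}$ primitive, using a smooth polynomial approximation of the spectral indicator $\mu \mapsto \mathbb{I}[\mu > \mu_0]$ for a carefully chosen threshold $\mu_0$. Phase estimation of $K$ (with precision $O(\epsilon/(Br_K))$) on each copy of $\tau$ yields classical samples $\tilde\lambda$ approximately uniform on $\{\lambda_1,\dots,\lambda_r\}$; an analogous phase-estimation-and-count procedure on $\sigma$ produces an estimate $\hat r \approx r$. Replacing the naive classical average of $e^{-\tilde\lambda}$ by quantum amplitude estimation applied to a conditional rotation encoding $\tilde\lambda \mapsto e^{-\tilde\lambda}$ delivers $\hat E \approx S_r/r$ in a polynomial-in-$B$ query count. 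Feeding $\hat r, \hat E$ into $\hat Z := (n-\hat r) + \hat r\hat E$ and jointly budgeting the Hamiltonian-simulation, SVT, phase-estimation, and amplitude-estimation errors yields a multiplicative-$\epsilon$ estimator of $Z$ with confidence $1-\xi$ at cost $\poly(\log n, r_K, B, 1/\epsilon, 1/\xi)$.

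The main obstacle is preparing $\tau$ without paying a $\poly(1/\mu_{\min}(\sigma))$ overhead, since the spectrum of $\sigma$ may be arbitrarily skewed. The proof must therefore quantitatively bound the contribution of eigenvectors of $\sigma$ that lie below the SVT threshold $\mu_0$: such a subspace has total $\sigma$-weight at most $r\mu_0$ and $\|e^{-K}\|\le e^B$, so its contribution to $S_r$ is at most $r\mu_0 e^B$, which becomes $O(\epsilon Z)$ once $\mu_0$ is set appropriately against the lower bound $Z \ge n e^{-B}$. A second subtlety is that a naive Hoeffding bound on $e^{-\tilde\lambda}$ would give variance $e^{O(B)}$ and hence exponentially many samples; coherent amplitude estimation is what squeezes this down to the polynomial-in-$B$ dependence promised by the lemma. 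These two issues, together with coordinating the precisions of the two (non-commuting) phase estimations of $K$ and $\sigma$ so that their errors propagate correctly into the substitution $\hat Z = (n-\hat r)+\hat r\hat E$, are the technical heart of the proof.
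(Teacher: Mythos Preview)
Your route is quite different from the paper's, and it has a real gap at the step you yourself flag as the main obstacle. The quantity you must control when you threshold $\sigma$ at $\mu_0$ is the missing piece $\tr\bigl[e^{-K}\,\Pi_{<\mu_0}\bigr]$, where $\Pi_{<\mu_0}$ projects onto $\sigma$-eigenspaces with eigenvalue below $\mu_0$. Since $K$ and $\sigma$ need not commute, the only general bound is $\tr[e^{-K}\Pi_{<\mu_0}]\le e^{B}\operatorname{rank}(\Pi_{<\mu_0})=e^{B}(r-r')$, and $r-r'$ can be $\Theta(r)$ for \emph{any} fixed $\mu_0$ (take $\sigma$ with one eigenvalue close to $1$ and $r-1$ eigenvalues just below $\mu_0$). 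Your claimed bound $r\mu_0 e^{B}$ uses the $\sigma$-\emph{weight} $\tr[\sigma\,\Pi_{<\mu_0}]\le r\mu_0$, but that controls sampling from $\sigma$, not from the flattened state $\tau'$; it does not upper-bound $\tr[e^{-K}\Pi_{<\mu_0}]$. So the thresholded $\tau'$ has no controlled accuracy, and the whole uniform-sampling pipeline built on it collapses. (A secondary point: amplitude estimation does not give $\poly(B)$ either, since normalising the rotation by $e^{B}$ can leave amplitudes as small as $e^{-2B}$; ironically plain Chebyshev on $e^{-\tilde\lambda}$ already yields variance-to-mean-squared $\le r$, so that step was never the bottleneck.)

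The paper avoids the flattening problem entirely. It phase-estimates $K$ directly on $\rho^{\text{sgn}}$ (choosing $\text{sgn}\in\{+,-\}$ with probability $\tr(K^{\text{sgn}})/[\tr(K^+)+\tr(K^-)]$), and outputs the signed statistic $\text{sgn}\cdot\lambda^{-1}e^{-\lambda}$. The key algebraic fact is that on each $\lambda$-eigenspace of $K$, $\Pi_\lambda K^{+}\Pi_\lambda-\Pi_\lambda K^{-}\Pi_\lambda=\lambda\Pi_\lambda$, so the two signed contributions telescope and $\E[X]=Z_{\text{supp}}/[\tr(K^+)+\tr(K^-)]$ exactly; moreover $\E[X^2]\le \delta^{-2}Z_{\text{supp}}^2$, giving a multiplicative estimate with $O(B^2\delta^{-2}\epsilon^{-2})$ samples by Chebyshev. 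The ``kernel'' contribution $n-R$ is then estimated by phase-estimating $K$ on $I/n$. Nothing in this argument requires preparing a maximally mixed state on a low-rank subspace or inverting small eigenvalues of $\sigma$.
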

\begin{proof}[Proof Sketch]
As mentioned above, we assume that we can implement the unitary evolution $\exp(iKt)$ as well as the phase estimation protocol, perfectly to infinite precision. This idealization is made here for the sake of flashing out the core ideas behind the proposed protocol. These assumptions will be lifted in \append{gibbs} (\lem{app:partition}), where a careful error analysis of this scheme is presented.

Under these assumptions, let us first consider the estimation of
\begin{align}\label{eqn:Z_supp}
Z_{\text{supp}} \equiv \sum_{|\lambda_i| \ge \delta} e^{-\lambda_i},
\end{align}
where $0<\delta<1$ is a small threshold and $\lambda_i$'s are eigenvalues of $K$. Since $\delta>0$ is a small parameter, $Z_{\text{supp}}$ is the partition function when considering the approximated support of $K$.

The main idea in the estimation of $Z_{\text{supp}}$ is to perform phase estimation of the unitary operator $e^{2 \pi iK}$ on $\rho^+$ and $\rho^-$, after which we obtain
\begin{equation}\label{eq:barrho}
\rho^\pm=\frac{K^\pm }{ \tr(K^\pm)}\rightarrow \bar{\rho}^\pm=\frac{1}{ \tr(K^\pm)}\sum_{\lambda} \Pi_\lambda K^\pm \Pi_\lambda\otimes\ket{\lambda}\bra{\lambda},
\end{equation}
where $\Pi_\lambda$ is the projection onto the $\lambda$-eigenspace of $K$, and $\lambda$ is any eigenvalue of $K$. Let us define
\begin{equation}\label{eq:Klambda}
K^+_\lambda := \Pi_\lambda K^+ \Pi_\lambda, \quad K^-_\lambda := \Pi_\lambda K^- \Pi_\lambda.
\end{equation}
Then,
\begin{align}\label{eq:Kdiff}
K^+_\lambda - K^-_\lambda = \Pi_\lambda K \Pi_\lambda = \lambda \Pi_\lambda,
\end{align}
and therefore $K^+_\lambda$ and $K^-_\lambda$ differ by a multiple of the identity in their support space (the $\lambda$-eigenspace of $K$). Hence $K^+_\lambda$ and $K^-_\lambda$ are simultaneously diagonalizable, and their corresponding eigenvalues differ by exactly $\lambda$. In other words, there exists an eigenbasis of $K$, which we call $\{\ket{v_i}\}_i$ with corresponding eigenvalues $\lambda_i$, such that $K^+_{\lambda}$ and $K^-_{\lambda}$ are diagonal in this eigenbasis for all $\lambda$. We can therefore write
\begin{equation}\label{eq:Kdiag}
K^+_\lambda = \sum_{i: \lambda_i = \lambda} \lambda^+_i \ket{v_i}\bra{v_i}, \quad \quad
K^-_\lambda = \sum_{i: \lambda_i = \lambda} \lambda^-_i \ket{v_i}\bra{v_i},
\end{equation}
for some nonnegative numbers $\lambda^+_i$, $\lambda^-_i$ satisfying $\lambda^+_i - \lambda^-_i = \lambda_i$. Combining Eqs. ~\eqref{eq:barrho} and~\eqref{eq:Kdiag}, we obtain that $\bar{\rho}^+$ ($\bar{\rho}^-$) -- the state after performing phase estimation of the unitary operator $e^{2 \pi iK}$ on $\rho^+$ ($\rho^-$) is given by
\begin{equation}\label{eq:barrho2}
\bar{\rho}^\pm=\frac{1}{ \tr(K^\pm)}\sum_{\lambda_i} \lambda^{\pm}_i \ket{v_i}\bra{v_i}\otimes\ket{\lambda_i}\bra{\lambda_i}.
\end{equation}

Now consider the following procedure, and let its output be the random variable $X$:

\begin{algorithm}[H]
\caption{Estimation of $Z_{\text{supp}}$}
 \label{algo:Zsupp}
\begin{enumerate}
\item Let $\text{sgn} = +$ with probability $\tr(K^+)/[\tr(K^+)+\tr(K^-)]$, and $\text{sgn} = -$ otherwise.
\item Perform phase estimation of the operator $e^{2 \pi iK}$ on $\rho^{\text{sgn}}$; Let the output state be $\bar{\rho}^{\text{sgn}}=\frac{1}{ \tr(K^{\text{sgn}})}\sum_{\lambda_i} \lambda^{\text{sgn}}_i \ket{v_i}\bra{v_i}\otimes\ket{\lambda_i}\bra{\lambda_i}$. Measure the second register and let the obtained eigenvalue of $K$ be $\lambda$.
\item If $|\lambda| < \delta$ output 0; else if $\text{sgn}=+$ output $\lambda^{-1}e^{-\lambda}$; else output $-\lambda^{-1}e^{-\lambda}$.
\end{enumerate}
\end{algorithm}
Then, under the assumption of perfect phase estimation, we have
\begin{align}
\E[X] &=  \frac{\tr(K^+)}{\tr(K^+)+\tr(K^-)}\sum_{|\lambda_i| \ge \delta} \frac{\lambda_i^+}{\tr(K^+)} \frac{e^{-\lambda_i}}{\lambda_i}-\frac{\tr(K^-)}{\tr(K^+)+\tr(K^-)}\sum_{|\lambda_i| \ge \delta} \frac{\lambda_i^-}{\tr(K^-)} \frac{e^{-\lambda_i}}{\lambda_i} \nonumber\\ &=\frac{1}{\tr(K^+)+\tr(K^-)}\sum_{|\lambda_i| \ge \delta} e^{-\lambda_i} = \frac{Z_{\text{supp}}}{\tr(K^+)+\tr(K^-)},
\end{align}
where $\lambda_i^\pm$ are the eigenvalues of  $K^\pm_{\lambda_i}$, satisfying $\lambda_i^+ - \lambda_i^-=\lambda_i$. Therefore $\E[X]$ is proportional to $Z_{\text{supp}}$, and obtaining a multiplicative estimate of $\E[X]$ gives us a multiplicative estimate of $Z_{\text{supp}}$.

The second moment of $X$ reads
\begin{equation}
\E[X^2] =\frac{1}{\tr(K^+)+\tr(K^-)}\sum_{|\lambda_i| \ge \delta} (\lambda_i^+ + \lambda_i^-) \frac{e^{-2\lambda_i}}{\lambda_i^2}\le \max_{|\lambda_i| \ge \delta} |\lambda_i|^{-2} e^{-2\lambda_i} \le \delta^{-2} Z^2_{\text{supp}}.
\end{equation}
 We see that $\E[X^2] \le B^2\delta^{-2} \E[X]^2$, and therefore by Chebyshev's inequality we can obtain, with constant probability, an $\epsilon$-error multiplicative estimate of $\E[X]$, hence of $Z_{\text{supp}}$, by running the above procedure $O(B^2\delta^{-2}\epsilon^{-2})$ times and taking the mean.

We still need to calculate $Z$, the full partition function including small eigenvalues of $K$. Let $R$ denote the number of eigenvalues of $K$ (including degeneracy) with absolute value at least $\delta$, and note that $R \le 2r_{K}$, where recall that $r_{K}$ upper bounds the rank of $K^+$ and $K^-$ . Define the following approximation of $Z$:
\begin{equation}
Z' \equiv Z_{\text{supp}} + (n-R) = \sum_{|\lambda_i| \ge \delta} e^{-\lambda_i} + \sum_{|\lambda_i| < \delta} e^{0}.
\end{equation}
Using $e^\delta \le 1+2\delta$ and $e^{-\delta} \ge 1-\delta$, we get that
\begin{equation}
| Z - Z' | \le 2\delta(n-R).
\end{equation}
Therefore if we make $\delta$ small enough,  say $\delta = O(\epsilon)$, $Z'$ gives a good multiplicative estimate for $Z$.

To compute $Z'$, we need a good multiplicative estimate of $n-R$. This can essentially be done by estimating the probability of a random state having eigenvalue smaller than $\delta$. Let the output of the following procedure be $Y$:

\begin{algorithm}[H]
\caption{Estimation of $n-R$}
 \label{algo:n-R}
\begin{enumerate}
\item Perform phase estimation of the operator $e^{2\pi iK}$ on the uniformly random state $I/n$; let the output eigenvalue be $\lambda$.
\item If $|\lambda| < \delta$ output 1; otherwise output 0.
\end{enumerate}
\end{algorithm}
$Y$ is a Bernoulli random variable with mean $\E[Y] = (n-R)/n$ and variance $\text{Var}[Y] = R (n-R) / n^2 \leq R \E[Y]^2$. By Chebyshev's inequality, $O(r_{K} \epsilon^{-2})$ repetitions of the above procedure gives us an $\epsilon$-error multiplicative estimate of $\E[Y]$, and thus of $n-R$.

Putting everything together, we see that  $O(B^2\epsilon^{-4} + r_{K}\epsilon^{-2})$ uses of (perfect) phase estimation   of $e^{2\pi iK}$ suffices to get a $O(\epsilon)$-error multiplicative estimate of $Z$, completing the proof.
\end{proof}

\subsection{Sampling from the Gibbs state}\label{append:low-rank-Gibbs}

\begin{theorem}[Full proof deferred to \append{gibbs}]\label{thm:low-rank-Gibbs}
Suppose $K = K^+ - K^-$, where $K^+$ and $K^-$ are $n \times n$ PSD matrices, and there is a quantum oracle that prepares copies of the states $\rho^+=K^+ / \tr(K^+)$, $\rho^- = K^-/ \tr(K^-)$, and an oracle for the numbers $\tr(K^+)$, $\tr(K^-)$. Moreover, assume that $\tr(K^+) + \tr(K^-) \le B$ for some bound $B$, and that $K^+$, $K^-$ have rank at most $r_{K}$. Then it is possible to prepare the Gibbs state $\rho_G=\exp(-K)/\tr(\exp(-K))$ to $\epsilon$ precision in trace distance, with $\poly(\log n, r_{K}, B, \epsilon^{-1})$ quantum gates.
\end{theorem}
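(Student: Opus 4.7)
\begin{pr_sk}
The plan is to adapt the spectral-splitting strategy that underlies \lem{partitionFunction} from scalar estimation to coherent state preparation. I would first fix a threshold $\delta=\Theta(\epsilon/r_K)$ so that every eigenvalue of $K$ with $|\lambda_i|<\delta$ gives a Gibbs weight $e^{-\lambda_i}\in[1-O(\epsilon/r_K),\,1+O(\epsilon/r_K)]$, and let $R\le 2r_K$ be the number of eigenvalues with $|\lambda_i|\ge\delta$. Writing $Z'=Z_{\text{supp}}+(n-R)$ with $Z_{\text{supp}}=\sum_{|\lambda_i|\ge\delta}e^{-\lambda_i}$, the target state satisfies
\[
  \rho_G \;\;\approx\;\; \tfrac{Z_{\text{supp}}}{Z'}\,\rho_{\text{supp}}\;+\;\tfrac{n-R}{Z'}\,\tfrac{\Pi_{\text{null}}}{n-R}
\]
up to trace distance $O(\epsilon)$, where $\rho_{\text{supp}}=Z_{\text{supp}}^{-1}\sum_{|\lambda_i|\ge\delta}e^{-\lambda_i}|v_i\>\<v_i|$ and $\Pi_{\text{null}}$ is the projector onto the low-eigenvalue subspace. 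Multiplicative estimates $\widehat{Z_{\text{supp}}}$ and $\widehat{n-R}$ of the two mixing weights are supplied by \lem{partitionFunction} and \algo{n-R}, so what remains is to prepare each component state coherently and then mix them classically.

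For the null component I would purify $I/n$ as $n^{-1/2}\sum_i|i\>|i\>$, perform phase estimation of $K$ via \lem{HamiltonianSimulation} with $a=B$ and evolution time $t=O(1/\delta)$ (a single phase estimate costing $\poly(B,r_K,\epsilon^{-1})$ gates), and post-select the eigenvalue register on $|\lambda|<\delta$. The acceptance probability is exactly $(n-R)/n$, near-unity in the regime of interest, so tracing out the purification and eigenvalue registers yields a state $O(\epsilon)$-close to $\Pi_{\text{null}}/(n-R)$ after $O(1)$ rounds of amplitude amplification. For the support component I would follow \algo{Zsupp} closely: with probability $\tr(K^+)/B$ use $\rho^+$ as the source state and otherwise $\rho^-$; phase-estimate $K$; condition on $|\lambda_i|\ge\delta$; and apply a controlled rotation on a fresh ancilla imparting amplitude proportional to $e^{-\lambda_i/2}$ on the accept branch. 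Post-selecting the ancilla projects onto $\rho_{\text{supp}}$, and because the per-step sample weight $(\lambda^+_i+\lambda^-_i)/B$ is bounded below by $\delta/B$ on the support, amplitude amplification can boost the acceptance probability to $\Omega(1)$ in a controlled number of rounds.

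The hard part will be keeping the cost of this last amplification step polynomial in $B$ rather than exponential, because $K^-$ can drive eigenvalues of $K$ as low as $-\|K^-\|$ and hence $e^{-\lambda_i}$ as large as $e^{B}$. The structural observation I would exploit is that $K$ has at most $\rank(K^-)\le r_K$ negative eigenvalues, and on each such branch the source $\rho^-$ already assigns sampling weight $\lambda^-_i/\tr(K^-)\ge |\lambda_i|/B$, which offsets the large $e^{-\lambda_i}$ factor. Splitting the controlled-rotation schedule separately on positive- and negative-eigenvalue branches (conditioned on the measured phase estimate) keeps each rotation amplitude bounded by one and the overall acceptance probability at $\Omega(\poly(1/B,1/r_K,\epsilon))$. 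The end-to-end error analysis then combines, by triangle inequality over the pipeline, the diamond-norm error of \lem{HamiltonianSimulation} (set to $\epsilon/\poly$), the phase-estimation precision $\delta/\poly$, the spectral truncation error from the $\delta$-threshold, and the multiplicative errors of $\widehat{Z_{\text{supp}}}$ and $\widehat{n-R}$. Aggregating per-step gate counts then yields the stated $\poly(\log n,r_K,B,\epsilon^{-1})$ total complexity.
\end{pr_sk}
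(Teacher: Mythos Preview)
Your high-level decomposition (split $\rho_G$ into a ``support'' piece and a ``kernel'' piece, prepare each separately, classically mix) matches the paper, and your treatment of the kernel piece is fine. But the support piece contains a genuine gap.

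After you sample $\rho^{\text{sgn}}$ and phase-estimate $K$, the resulting mixed state (averaging over $\text{sgn}$) is
\[
\bar\varrho \;\propto\; \sum_{|\lambda_i|\ge\delta}(\lambda_i^{+}+\lambda_i^{-})\,|v_i\>\<v_i|\otimes|\lambda_i\>\<\lambda_i|,
\]
not a uniform mixture over eigenvectors. Applying a controlled rotation with amplitude $\propto e^{-\lambda_i/2}$ and post-selecting therefore yields a state proportional to $\sum (\lambda_i^{+}+\lambda_i^{-})\,e^{-\lambda_i}|v_i\>\<v_i|$, which is \emph{not} $\rho_{\text{supp}}$. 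You must first undo the $(\lambda_i^{+}+\lambda_i^{-})$ weighting, and the phase register only stores $\lambda_i=\lambda_i^{+}-\lambda_i^{-}$, not the sum. The paper handles this by a second round of phase estimation---quantum PCA on the state $\bar\varrho$ itself---to read out $\mu=\lambda_i^{+}+\lambda_i^{-}$ into a fresh register, and then accepts with probability $\frac{\delta}{\mu}\cdot\frac{(1-\epsilon)e^{-\lambda_i}}{Z'_{\text{supp}}}$. Your proposal omits this step entirely.

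Relatedly, the ``hard part'' you worry about---$e^{-\lambda_i}$ blowing up to $e^{B}$---is not actually hard once the rotation is correctly normalized by $Z'_{\text{supp}}$ (which you already computed via \lem{partitionFunction}). Since $Z_{\text{supp}}\ge e^{-\lambda_i}$ for every term in the sum, the ratio $e^{-\lambda_i}/Z'_{\text{supp}}$ is at most $1/(1-\epsilon)$ and the acceptance probability is at most $1$; the overall success probability then comes out to $\Omega(\delta/B)$ with no exponential factor. Your sign-splitting workaround is unnecessary and, as stated, does not compensate for the missing $1/\mu$ factor anyway.
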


\begin{proof}[Proof sketch]
Similar to the proof sketch of the partition function, here as well we assume an infinite precision implementation of the unitary evolution operator $\exp(iKt)$ as well as  of the phase estimation protocol. In addition we assume that quantum principal component analysis can be implemented perfectly.  These assumptions will be lifted in \append{gibbs} (\thm{gibbs}), where a complete proof is presented.

The procedure is somewhat similar to that of calculating the partition function above. We pick $\delta = O(\epsilon)$, a small threshold, and first consider a procedure to sample from
\begin{equation}
\rho_{\text{supp}} \equiv \sum_{|\lambda_i| \ge \delta} e^{-\lambda_i}\ket{v_i}\bra{v_i}/Z_{\text{supp}},
\end{equation}
where $\lambda_i$'s and $\ket{v_i}$'s are eigenvalues and eigenstates of $K$. (In the case that $\rho_{\text{supp}}$ is undefined, i.e. that all eigenvalues of $K$ have magnitude less than $\delta$, it is easy to see that the uniformly mixed state $I/n$ is already an $O(\epsilon)$-trace distance error approximation to $\rho_G$. This is the case when $Z_{\text{supp}} = 0$.)  $\rho_{\text{supp}}$ is the Gibbs state when considering only the (approximated) support of $K$. Consider the procedure in \algo{rhosupp}.

\begin{algorithm}
\caption{Estimation of $\rho_{\text{supp}}$}
 \label{algo:rhosupp}
\begin{enumerate}
\item Let $\text{sgn} = +$ with probability $\tr(K^+)/[\tr(K^+)+\tr(K^-)]$, and $\text{sgn} = -$ otherwise.
\item Perform phase estimation of the unitary operator $e^{2 \pi iK}$ on $\rho^{\text{sgn}}$;  let the output state be $\bar{\rho}^{\text{sgn}}=\frac{1}{ \tr(K^{\text{sgn}})}\sum_{\lambda_i} \lambda^{\text{sgn}}_i \ket{v_i}\bra{v_i}\otimes\ket{\lambda_i}\bra{\lambda_i}$.
\item Project $\bar\rho^{\text{sgn}}$ onto $\bar\varrho^{\text{sgn}}=\frac{1}{ \tr(K^{\text{sgn}})}\sum_{\lambda_i:|\lambda_i|\ge\delta} \lambda^{\text{sgn}}_i \ket{v_i}\bra{v_i}\otimes\ket{\lambda_i}\bra{\lambda_i}$.
\item The average state at this stage is
$$
\bar\varrho=\frac{1}{\tr(K^+)+\tr(K^-)}\sum_{\lambda_i:|\lambda_i|\ge\delta} (\lambda^+_i + \lambda^-_i) \ket{v_i}\bra{v_i}\otimes\ket{\lambda_i}\bra{\lambda_i}.
$$
Perform phase estimation of the operator $e^{2\pi iK}$ on $\bar\varrho$; let the measured eigenvalue be $\mu=\lambda^+ + \lambda^-$, and the resulting state be ${\varrho}_{\mu}$.
\item Accept the state ${\varrho}_{\mu}$ with probability $\frac{\delta}{\mu}\frac{(1-\epsilon)e^{-{\lambda_i}}}{Z'_{\text{supp}}}$, for $Z'_{\text{supp}}$ a $\epsilon$-multiplicative error approximation of $Z_{\text{supp}}$ by \algo{Zsupp}.
\end{enumerate}
\end{algorithm}
Note that $\frac{\delta}{{\lambda}^+ +{\lambda}^-}\frac{(1-\epsilon)e^{-{\lambda}}}{Z'_{\text{supp}}} \le 1$ since $\lambda^\pm\ge0$ and by assumption $|\lambda|=|\lambda^+ -\lambda^-|\ge \delta$, and $Z'_{\text{supp}} \leq (1+\epsilon)Z_{\text{supp}} \leq (1+\epsilon)\sum_{|\lambda_i| \ge \delta} e^{-\lambda_i}$ by \eqn{Z_supp}. Moreover assuming that $K$ has at least one eigenvalue with magnitude at least $\delta$, the success probability in Line 5 of \algo{rhosupp} is at least
\begin{align}\label{eqn:Z_supp-prob}
\frac{\delta}{\tr(K^+)+\tr(K^-)}\sum_{\lambda_i:|\lambda_i|\ge\delta} \frac{(1-\epsilon)e^{-\lambda_i}}{Z'_{\text{supp}}} \geq \frac{\delta(1-\epsilon)}{B(1+\epsilon)},
\end{align}
and therefore we can output $\rho_{\text{supp}}$ efficiently by repeating \algo{rhosupp} until success, which takes $O(B/\delta)$ trials in expectation.

Accounting for the randomness in Step 1, at the end of Step 3, we obtain  the mixed state $\bar\varrho$. However, for Gibbs sampling, we should have factors of the form  $e^{-\lambda_i}\ket{{v}_i}\bra{{v}_i}$ instead of $({\lambda}^+_i + {\lambda}^-_i) \ket{{v}_i}\bra{{v}_i}$ that appear in $\bar\varrho$. Therefore, at this stage of the protocol, to accept $\ket{{v}_i}\bra{{v}_i}$ with probability proportional to  $e^{-\lambda_i}/({\lambda}^+_i + {\lambda}^-_i)$, but for that we need to measure  ${\lambda}^+_i + {\lambda}^-_i$. This is done in steps 4 and 5 of the above procedure, which is equivalent to applying
$\sum_{\lambda_i:|\lambda_i|\ge\delta} \frac{\delta}{{\lambda_i}^+ +{\lambda_i}^-}\frac{e^{-{\lambda_i}}}{Z_{\text{supp}}} \ket{v_i}\bra{v_i}\otimes\ket{\lambda_i}\bra{\lambda_i}
$ to $\bar\varrho$. Upon keeping only the first register we obtain
\begin{equation}
\frac{\delta}{\tr(K^+)+\tr(K^-)}\sum_{|\lambda_i| \ge \delta} \frac{e^{-\lambda_i}}{Z_{\text{supp}}}\ket{v_i}\bra{v_i}\propto \frac{1}{Z_{\text{supp}}} \sum_{|\lambda_i| \ge \delta}  e^{-\lambda_i}\ket{v_i}\bra{v_i}\equiv\rho_{\text{supp}},
\end{equation}
where $\rho_{\text{supp}}$ is the Gibbs state when considering only the (approximated) support of $K$.

We still need to calculate $\rho_G$, the full Gibbs state including small eigenvalues of $K$. Recall that $R$ denotes the number of eigenvalues (including degeneracy) of $K$ with absolute value at least $\delta$, and note that $R \leq 2r_{K}$, where $r_{K}$ upper bounds the rank of $K^+$ and $K^-$ . Define the following approximation of $\rho_G$:
\begin{equation}
\rho_G' \equiv \frac{Z_{\text{supp}}}{Z'}\rho_{\text{supp}}+\frac{n-R}{Z'}\rho_{\text{ker}}= \frac{1}{Z'} (\sum_{|\lambda_i| \ge \delta} e^{-\lambda_i}\ket{v_i}\bra{v_i} + \sum_{|\lambda_i| < \delta} \ket{v_i}\bra{v_i}),
\end{equation}
where $\rho_{\text{ker}}=\frac{1}{n-R}\sum_{|\lambda_i| < \delta} \ket{v_i}\bra{v_i}$ is the  uniformly random state on the orthogonal complement of the (approximate) support of $K$.
Then
\begin{align}
\|\rho_G - \rho_G' \|_{\tr} &=\Big|\frac{1}{Z}-\frac{1}{Z'}\Big|\sum_{|\lambda_i|\ge\delta}e^{-\lambda_i}+\sum_{|\lambda_i|<\delta}\Big|\frac{e^{-\lambda_i}}{Z}-\frac{1}{Z'}\Big| \\
&\le\Big|\frac{1}{Z}-\frac{1}{Z'}\Big|\sum_{|\lambda_i|\ge\delta} e^{-\lambda_i}+\left[\sum_{|\lambda_i|<\delta}\Big|\frac{1}{Z}-\frac{1}{Z'}\Big|+\frac{2\delta}{Z}e^{-\lambda_i}\right] \\
&\le\Big|\frac{1}{Z}-\frac{1}{Z'}\Big|Z'+2\delta\le4\delta.
\end{align}
Therefore if we make $\delta$ small enough, $\rho_G'$ gives a good estimate (in trace distance) for $\rho_G$.

To estimate $\rho_{\text{ker}}$, we consider the output of \algo{rhoker} below.
\begin{algorithm}
\caption{Estimation of $\rho_{\text{ker}}$}
 \label{algo:rhoker}
\begin{enumerate}
\item Perform phase estimation of the operator $e^{2\pi iK}$ on the uniformly random state $I/n$; let the output eigenvalue be $\lambda$ and the resulting state be $\Pi_\lambda$.
\item If $|\lambda| \ge \delta$ abort; otherwise, accept the state.
\end{enumerate}
\end{algorithm}

Finally, $\rho_G$ is generated by running the Algorithm~\ref{algo:rhosupp} with probability $\frac{Z_{\text{supp}}}{Z}=\Omega(\frac{\delta}{B})$ (by \eqn{Z_supp-prob}) until we accept  $\rho_{\text{supp}}$, and running the Algorithm~\ref{algo:rhoker} with probability
\begin{align}
\frac{n-R}{Z}\geq\frac{n-2r_{K}}{n}=\Omega(1),
\end{align}
until we accept $\rho_{\text{ker}}$. The detailed analysis is presented in \append{Gibbs-function}.

In the previous subsection we proved that, upon setting $\delta=O(\epsilon)$ we can obtain $Z_{\text{supp}}, {Z}$ and $n-R$ up to an $O(\epsilon)$ multiplicative error with $\poly(\log n, r_{K}, B, \epsilon^{-1})$ quantum gates. Therefore, Using Lemma 7 of \cite{vanApeldoorn2017quantum} we obtain  $\frac{Z_{\text{supp}}}{Z}$ and $\frac{n-R}{Z}$ to $O(\epsilon)$ multiplicative error. This, in turns, implies the with the above procedure we prepare the Gibbs state $\rho_G$ up to error $O(\epsilon)$ in trace distance, with $\poly(\log n, r_{K}, B, \epsilon^{-1})$ quantum gates.
\end{proof}


\section{Proof for Gibbs sampling of low-rank states}~\label{append:gibbs}
In this section we provide a complete proof of \lem{partitionFunction} and~\thm{low-rank-Gibbs} given in \append{low-rank_sampling}.

\subsection{Preliminaries}
Before we start the proof, we first gather some preliminary facts that we need. First of all, the output of the phase estimation protocol is probabilistic and depends on the measurement. This is a problem since we often want the output of phase estimation to be consistent across multiple runs of the protocol. For example, when the eigenvalue is to our cutoff $\delta$, the phase estimation protocol may not be able to consistently decide whether it was greater than or less than $\delta$. To overcome this problem, in the proofs below we use Ta-Shma's \emph{consistent phase estimation} algorithm instead:
\begin{lemma}[\cite{tashma2013inverting}] \label{lem: consistent}
Let $U$ be a $D$-dimensional unitary matrix, and $\delta, \xi > 0$. There is a quantum algorithm that first chooses a random shift $s$, such that with probability at least $1-D\xi$ the following holds for \emph{all} eigenstates $\ket{v_\lambda}$ of $U$ (where $U\ket{v_\lambda} = e^{2 \pi i \lambda} \ket{v_\lambda}$) (in this case we call $s$ a \emph{good} shift):
\begin{itemize}
\item On input $\ket{v_\lambda}\ket{\bar{0}}$, where $\ket{\bar{0}}$ is a fixed reference state, the algorithm outputs a state $O(\sqrt{\xi})$-close to $\ket{v_\lambda}\ket{f(s,\lambda)}$ in trace distance.
\item $f(s,\lambda)$ is a function only of $s$ and $\lambda$, and $|f(s,\lambda) - \lambda| < \delta$.
\end{itemize}
This algorithm requires $\poly(\xi^{-1},\delta^{-1})$ uses of the controlled-$U$ operation and other quantum gates.
\end{lemma}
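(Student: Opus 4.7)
The plan is to build Ta-Shma's consistent phase estimator by composing three ingredients: an $O(\log(1/\delta))$-bit phase estimation subroutine, a single random grid shift $s$ chosen once and for all to avoid decision boundaries simultaneously for every eigenvalue of $U$, and a coherent median-of-estimates amplification that turns a high-probability classical guarantee into a small trace-distance error on the output quantum state. I sketch each piece in turn.

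First, I would recall ordinary phase estimation: with $b$ bits of precision (hence $O(2^{b})$ applications of controlled-$U$) it maps $\ket{v_\lambda}\ket{0}$ into a register whose measurement reads the multiple of $2^{-b}$ nearest to $\lambda$ with probability at least $1-\eta$, provided $\lambda$ lies at distance at least $\eta\cdot 2^{-b}$ from its two neighbouring grid points. Choosing $b=\lceil\log_{2}(1/\delta)\rceil$ already gives $|\text{estimate}-\lambda|<\delta$. To guarantee consistency across different eigenvectors and independent runs, draw $s$ uniformly from $[0,\delta)$ once and run phase estimation of the modified unitary $e^{-2\pi i s}U$, reporting its output plus $s$; this effectively shifts the output grid to $\{s+k\delta:k\in\Z\}$, and I define $f(s,\lambda)$ to be the unique grid point within $\delta/2$ of $\lambda$. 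By construction $|f(s,\lambda)-\lambda|<\delta$ and $f$ depends on the eigenvector only through $\lambda$. For any fixed $\lambda$, the probability over $s$ that $\lambda$ lies within $\xi\delta$ of a boundary between grid cells is at most $2\xi$; a union bound over the at most $D$ distinct eigenvalues of $U$ shows that with probability at least $1-2D\xi$ the shift is \emph{good}, i.e.\ every eigenvalue is bounded away from a boundary by more than $\xi\delta$.

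Condition from now on on a good $s$. If one only wanted a classical output, amplified phase estimation would already return $f(s,\lambda)$ with overwhelming probability. The main obstacle — and the only non-routine step — is that the lemma requires the full output state to be $O(\sqrt{\xi})$-close in trace distance to $\ket{v_\lambda}\ket{f(s,\lambda)}$, so we need a coherent, ancilla-clean amplification. I would handle this by running $T=O(\log(1/\xi))$ independent copies of the shifted phase estimation in parallel, computing the median of the $T$ registers into a fresh register, and then uncomputing the $T$ original registers. A Chernoff bound shows that in the resulting superposition the coefficient on outcomes different from $f(s,\lambda)$ has squared norm at most $\xi$, so by the gentle-measurement lemma (equivalently, amplitude on the good branch at least $\sqrt{1-\xi}$) the produced pure state differs from $\ket{v_\lambda}\ket{f(s,\lambda)}\ket{\bar 0}$ in trace distance by $O(\sqrt{\xi})$, which is precisely where the square root in the statement comes from.

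Putting the pieces together, each call to shifted phase estimation costs $O(1/\delta)$ uses of controlled-$U$, we make $O(\log(1/\xi))$ parallel copies inside the coherent-median block, and the union-bound failure of $s$ is absorbed into the overall parameter $\xi$; the total is $\poly(\xi^{-1},\delta^{-1})$ applications of controlled-$U$ plus $\poly(\xi^{-1},\delta^{-1})$ ancillary two-qubit gates for the arithmetic (shifts, medians, uncomputation), matching the stated complexity. The step I expect to require the most care is the coherent median argument: one must verify that uncomputing $T-1$ of the registers does not leak phase information about $\lambda$, so that the residual ancilla is a genuine product state independent of $\ket{v_\lambda}$ up to the $O(\sqrt{\xi})$ error, rather than merely approximately diagonal in the eigenbasis of $U$.
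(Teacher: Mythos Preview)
Your proposal is correct and follows essentially the same approach as the paper: the paper only records a one-paragraph sketch of the cited Ta-Shma result, namely ``choose a random shift $s$ and run phase estimation on $e^{is}U$ so that with high probability all eigenvalues land far from grid boundaries,'' and your write-up is a faithful elaboration of exactly that idea together with the standard amplification to obtain the $O(\sqrt{\xi})$ trace-distance guarantee. Your concern about uncomputing the median registers is legitimate but can be sidestepped within the stated $\poly(\xi^{-1},\delta^{-1})$ budget by simply boosting a single shifted phase estimation to success probability $1-\xi$ via extra precision, which yields the clean product state directly without any median or uncomputation.
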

The essential idea of this algorithm is to choose a random shift $s$, and perform phase estimation on $e^{is} U$ instead. If the precision $\delta$ is small enough, then with high probability over $s$, the eigenvalue $s+\lambda$ will always be far away from any half-multiple of $\delta$ (i.e. a number of the form $(z + 0.5) \delta$, $z \in \mathbb{Z}$), for all $\lambda$. The result of phase estimation will therefore (with high probability) depend only on $\lambda$, and not on the measurement.

Using the consistent phase estimation together with Lemma~\ref{lem:HamiltonianSimulation}, we can straightforwardly derive the following lemma:
\begin{lemma}\label{lem:consistent-phaseEst-adapted}
Suppose we are given a quantum oracle that prepares copies of two unknown (normalized) $n$-qubit quantum states $\rho^+$ and $\rho^-$, and define the Hamiltonian $H = a_+ \rho^+ - a_- \rho^-$. Also assume the ranks of $\rho^+$ and $\rho^-$ are upper bounded by $r$. Then for $\delta,\xi > 0$, there is a quantum algorithm that first chooses a random shift $s$, such that with probability at least $1-2r\xi$ the following holds for all eigenstates $\ket{v_i}$ of $H$, where $H\ket{v_i} = \lambda_i\ket{v_i}$ (we call such a $s$ a \emph{good} shift):
\begin{itemize}
\item On input $\ket{v_i}\ket{\bar{0}}$, where $\ket{\bar{0}}$ is a fixed reference state, the algorithm outputs a state $O(\sqrt{\xi})$-close to $\ket{v_\lambda}\ket{f(s,\lambda_i)}$ in trace distance.
\item $f(s,\lambda_i)$ is a function only of $s$ and $\lambda_i$, and $|f(s,\lambda_i) - \lambda_i| < \delta$.
\end{itemize}
This algorithm requires $\poly(a^++a^-,\xi^{-1},\delta^{-1})$ copies of $\rho^+$ and $\rho^-$, and $\poly(n,a^++a^-,\xi^{-1},\delta^{-1})$ 1- and 2-qubit quantum gates.
\end{lemma}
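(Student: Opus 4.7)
The plan is to derive Lemma~\ref{lem:consistent-phaseEst-adapted} by plugging the quantum-PCA-style Hamiltonian simulation of Lemma~\ref{lem:HamiltonianSimulation} into Ta-Shma's consistent phase estimation routine of Lemma~\ref{lem: consistent}. The key observation that replaces the ``dimension $D$'' in Ta-Shma's bound with the rank $r$ is that the Hamiltonian $H = a_+\rho^+ - a_-\rho^-$ has at most $2r$ nonzero eigenvalues, and its kernel contributes only a single additional distinct eigenvalue $0$; hence the shift-randomization step need only succeed over at most $2r+1$ distinct eigenvalues, rather than over all $2^n$ basis-indexed outcomes.

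First, I would fix a target precision $\delta_{\mathrm{sim}}$ (to be determined) and invoke Lemma~\ref{lem:HamiltonianSimulation} to construct a unitary $\tilde U$ that approximates $U := e^{2\pi i H}$ to diamond-norm error $\delta_{\mathrm{sim}}$, together with its controlled version; this uses $O((a_++a_-)^2 /\delta_{\mathrm{sim}})$ copies of $\rho^{\pm}$ and $O(n (a_++a_-)^2 /\delta_{\mathrm{sim}})$ additional one- and two-qubit gates. I would then run Ta-Shma's algorithm from Lemma~\ref{lem: consistent} on $\tilde U$ in place of $U$, picking a uniformly random shift $s$. Formally Ta-Shma's bound gives $1 - D\xi$ over $s$, but the proof of that bound is a union bound over the distinct eigenvalues of $U$ that could fall near a half-multiple of $\delta$; restricting attention to the $\le 2r+1$ distinct eigenvalues of $H$ yields the claimed $1 - 2r\xi$ probability of a good shift, with the promised $|f(s,\lambda_i)-\lambda_i|<\delta$ on each eigenspace of $H$.

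Next I would control the propagation of the simulation error into the consistent phase estimation output. Lemma~\ref{lem: consistent} applied ideally to $U$ uses $N = \poly(\xi^{-1},\delta^{-1})$ controlled calls to $U$, so replacing $U$ by $\tilde U$ perturbs the whole circuit by at most $N\delta_{\mathrm{sim}}$ in diamond norm; choosing $\delta_{\mathrm{sim}} := \xi / N = \poly(\xi,\delta)$ ensures this is $\le\xi$, which combined with Ta-Shma's native $O(\sqrt{\xi})$ per-eigenstate accuracy keeps the final trace-distance error at $O(\sqrt{\xi})$. Multiplying through, the total resource count is $\poly(a_++a_-,\xi^{-1},\delta^{-1})$ state copies and $\poly(n,a_++a_-,\xi^{-1},\delta^{-1})$ gates, as required.

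The only genuinely delicate step is the replacement of $D = 2^n$ by $2r$ in the shift-success probability: this is not an immediate corollary of Lemma~\ref{lem: consistent} as stated, so I would either reprove the relevant union-bound step directly on the spectrum of $H$, or decompose the input Hilbert space as $\operatorname{supp}(\rho^+)+\operatorname{supp}(\rho^-)$ plus its orthogonal complement (on which $H\equiv 0$) and observe that both the kernel and the $\le 2r$-dimensional support have well-controlled spectra to which Ta-Shma's argument applies with the improved bound. The Hamiltonian-simulation and gate-count bookkeeping is routine once this rank-based refinement is in place.
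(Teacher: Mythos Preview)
Your proposal is correct and matches the paper's approach: the paper simply states that the lemma follows ``straightforwardly'' by combining Ta-Shma's consistent phase estimation (Lemma~\ref{lem: consistent}) with the density-matrix Hamiltonian simulation (Lemma~\ref{lem:HamiltonianSimulation}), without spelling out any details. You have filled in exactly the bookkeeping the paper omits, and you correctly identified the one non-routine point---replacing the dimension $D$ by $2r$ via a union bound over the at most $2r+1$ distinct eigenvalues of $H$---that makes the shift-success probability $1-O(r\xi)$ rather than $1-2^n\xi$.
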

This, in particular, allows us to consistently estimate eigenvalues of $K = K^+ - K^-$ using $\poly(\log n, B,\xi^{-1},\delta^{-1})$ operations in total.

For technical reasons, we will also need an approximation of the minimum eigenvalue of $K$ (possibly ignoring eigenvalues less than a threshold $\delta$). We use the following procedure:\\
\begin{algorithm}[H]
\caption{Estimation of minimum eigenvalue of $K$}
 \label{algo:lambda_min}
\begin{enumerate}
\item \textbf{Input:} Quantum oracles for $\rho^+$, $\rho^-$. A random good shift $s$ for eigenvalue estimation of $K$. Numbers $\delta, \gamma > 0$.
\item Use consistent phase estimation to estimate the eigenvalue of $K$ on $\rho^+$ and $\rho^-$, with precision $\delta$ and error probability $\xi = O(B^{-1}\delta/\log\gamma^{-1})$. Discard the estimate if its absolute value is less than $\delta$.
\item Repeat Step 2 $\Theta(B\delta^{-1}\log \gamma^{-1})$ times and output the minimum, denoted $\tilde{\lambda}_{\min}$.
\end{enumerate}
\end{algorithm}
\lem{consistent-phaseEst-adapted} implies that that with probability $1 - O(\gamma)$, the above algorithm outputs the minimum number $\tilde{\lambda}_{\text{min}}$ such that $|\tilde{\lambda}_{\text{min}}| \ge \delta$ and $\tilde{\lambda}_{\text{min}} = f(s,\lambda)$ for some eigenvalue $\lambda$ of $K$.

Finally, for operators $A$ and $B$, we will use $A\approx_{O(\epsilon)}B$ to denote that A is $O(\epsilon)$-close to B in trace distance.

\subsection{Computing the partition function}
We will prove the following lemma, using consistent phase estimation protocol:
\begin{lemma} \label{lem:app:partition}
Suppose $K = K^+ - K^-$, where $K^+$ and $K^-$ are $n \times n$ PSD matrices, and there is a quantum oracle that prepares copies of the states $\rho^+=K^+ / \tr(K^+)$, $\rho^- = K^-/ \tr(K^-)$, and an oracle for the numbers $\tr(K^+)$, $\tr(K^-)$. Moreover, assume that $\tr(K^+) + \tr(K^-) \le B$ for some bound $B$, and that $K^+$, $K^-$ have rank at most $r_{K}$. Then it is possible to estimate the partition function $Z=\tr(\exp(-K))$ to multiplicative error $\epsilon$ with success probability at least $1-\xi$, with $\poly(\log n, r_{K}, B, \epsilon^{-1}, \xi^{-1})$ quantum gates.
\end{lemma}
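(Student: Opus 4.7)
\begin{pr_sk}
The plan is to make the idealized proof sketch of \lem{partitionFunction} rigorous by replacing every invocation of "exact" phase estimation with the consistent phase-estimation protocol of \lem{consistent-phaseEst-adapted}, and then carefully tracking the three sources of error -- consistency failure, the $\delta$-precision of the reported eigenvalue, and finite-sample statistical fluctuation -- into a final multiplicative bound on $Z$. First I would fix a threshold $\delta = \Theta(\epsilon)$ and a per-call failure probability $\xi' = \Theta(\xi/r_K)$, draw a single global shift $s$ to be used in all subsequent phase-estimation calls, and condition on the event (which has probability $1-O(\xi)$ by \lem{consistent-phaseEst-adapted}) that $s$ is \emph{good}: for every eigenvalue $\lambda_i$ of $K$ the protocol deterministically outputs a value $f(s,\lambda_i)$ with $|f(s,\lambda_i)-\lambda_i|<\delta$. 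Conditioning on this event decouples the remaining analysis into purely classical statistics over the measurement outcomes of the various mixed-state inputs.

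Next I would run the two samplers from the sketch against this consistent estimator. For $Z_{\text{supp}}$ I execute \algo{Zsupp} for $N_1 = O(B^2\delta^{-2}\epsilon^{-2}\log(1/\xi))$ rounds, amplifying to high confidence by median-of-means: the expectation of each output $X$ equals $\widetilde Z_{\text{supp}}/(\tr K^+ + \tr K^-)$ for some $\widetilde Z_{\text{supp}}$ with $|\widetilde Z_{\text{supp}} - Z_{\text{supp}}| = O(\delta\, Z_{\text{supp}} + \delta\, r_K)$, where the first error term comes from multiplying each $e^{-\lambda_i}$ by $e^{\pm\delta}$ and the second bounds the contribution of at most $O(r_K)$ eigenvalues that consistent phase estimation may misclassify across the cutoff $\delta$. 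The second-moment bound $\E[X^2]\le B^2\delta^{-2}\E[X]^2$ from the sketch survives verbatim, so Chebyshev's inequality plus median-of-means yields an $\epsilon$-multiplicative estimate of $Z_{\text{supp}}$ with probability $1-\xi/3$. For $n-R$ I execute \algo{n-R} on the maximally mixed state $I/n$ (prepared by tracing out half of $|\Phi^+\>=n^{-1/2}\sum_i|i\>|i\>$) $N_2 = O(r_K\epsilon^{-2}\log(1/\xi))$ times; because I reuse the same good shift $s$, the set $\{i:|f(s,\lambda_i)|<\delta\}$ is identical to the one implicit in the first sampler, so the two estimates refer to the same partition of the spectrum. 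Finally, defining $Z' = \widetilde{Z_{\text{supp}}} + \widetilde{n-R}$ and using $|Z-Z'|\le 2\delta(n-R)$ from the sketch, a union bound over the three randomness sources ($s$ is good; the first sampler concentrates; the second sampler concentrates) yields the desired $\epsilon$-multiplicative approximation of $Z$ with probability $1-\xi$.

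The main obstacle is the threshold at $|\lambda|=\delta$: a naive phase-estimation call cannot consistently classify eigenvalues that lie close to the cutoff, so without care the classical estimators $X$ and $Y$ would not be i.i.d.\ across repetitions and the Chebyshev analysis would collapse. Consistent phase estimation solves this exactly by randomizing the grid via $s$, so that with high probability no eigenvalue of $K$ lies near a half-integer multiple of the resolution; once $s$ is fixed the classification becomes a deterministic function of the input eigenstate, and in particular the "$|\tilde\lambda|\ge\delta$" event is the same across the two samplers and across all their repetitions. The rest is bookkeeping: each consistent-phase-estimation call costs $\poly(\log n, B, \epsilon^{-1}, \xi^{-1})$ operations by \lem{consistent-phaseEst-adapted}, and multiplying by $N_1+N_2 = \poly(r_K, B, \epsilon^{-1}, \xi^{-1})$ samples yields the overall gate complexity $\poly(\log n, r_K, B, \epsilon^{-1}, \xi^{-1})$ claimed in the lemma.
\end{pr_sk}
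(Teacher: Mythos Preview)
Your high-level plan matches the paper's: replace exact phase estimation by consistent phase estimation with a single shared shift $s$, estimate $\widetilde Z_{\text{supp}}$ via the signed sampler, estimate $n-\widetilde R$ on $I/n$, and add. The median-of-means amplification is a harmless variant of what the paper does.

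There is, however, a genuine gap. You read \lem{consistent-phaseEst-adapted} as saying that once $s$ is good the measured eigenvalue is \emph{deterministically} $f(s,\lambda_i)$, and you build the whole error budget on that (your three listed error sources are good-shift failure, the $\delta$-precision of $f$, and sampling noise). That is not what the lemma guarantees: even with a good shift, each call only outputs a state $O(\sqrt{\xi'})$-close in trace distance to $\ket{v_i}\ket{f(s,\lambda_i)}$. So every run of the $Z_{\text{supp}}$ sampler still has an $O(\sqrt{\xi'})$ probability of reporting a spurious $\tilde\lambda$ unrelated to any true eigenvalue. The output random variable is $\pm\tilde\lambda^{-1}e^{-\tilde\lambda}$, and a spurious $\tilde\lambda$ near the most negative admissible value contributes on the order of $\delta^{-1}e^{B}$ to the expectation, whereas $\E[X]$ itself can be as small as $e^{-B}/B$ (think of a single large positive eigenvalue). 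The resulting relative error scales like $e^{2B}\sqrt{\xi'}$, so making it $O(\epsilon)$ would force $\xi'$ to be exponentially small in $B$, destroying the polynomial gate count.

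The paper fixes exactly this with an ingredient you omit: before running the sampler it computes an approximate minimum eigenvalue $\tilde\lambda_{\min}$ (\algo{lambda_min}) and in Step~3 of \algo{tildeZsupp} additionally rejects any reported $\tilde\lambda<\tilde\lambda_{\min}$. This caps the random variable by $\delta^{-1}e^{-\tilde\lambda_{\min}}\le\delta^{-1}\widetilde Z_{\text{supp}}$, so the $O(\sqrt{\xi'})$ spurious mass contributes only $O(\sqrt{\xi'}\,\delta^{-1}B)\cdot\E[X]$, which is $O(\epsilon)\E[X]$ for the polynomially small $\xi'$ the paper chooses. Once you add this truncation (and list the per-call $O(\sqrt{\xi'})$ deviation as a fourth error source), your sketch goes through along the paper's lines.
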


\begin{proof}
As stated previously, we are using consistent phase estimation to unambiguously decide whether to keep an eigenvector in our approximate support. To be precise, choose $\delta = O(\eps)$, $\xi = O(\eps^2\delta^2 B^{-2}r_{K}^{-1})$, and pick a random shift $s$ -- assume that this $s$ is a good shift (this happens with probability $1 - O(\eps^2\delta^2B^{-2})$). Define $\widetilde{Z}_{\text{supp}} = \sum_{|f(s,\lambda_i)| \ge \delta} e^{-f(s,\lambda_i)}$, and consider \algo{tildeZsupp} for estimating $\widetilde{Z}_{\text{supp}}$ (let its output be $\widetilde{X}$).
\begin{algorithm}
\caption{Estimation of $\widetilde{Z}_{\text{supp}}$}
 \label{algo:tildeZsupp}
\begin{enumerate}
\item Let $\text{sgn} = +$ with probability $\tr(K^+)/[\tr(K^+)+\tr(K^-)]$, and $\text{sgn} = -$ otherwise.
\item Use consistent phase estimation to perform eigenvalue estimation of $K$ on $\rho^{\text{sgn}}$; let the output be the normalized state $\tilde{\rho}^{\text{sgn}} = \frac{1}{\tr[K^{\text{sgn}}]}\sum_{\tilde{\lambda}} \widetilde{K}^{\text{sgn}}_{\tilde{\lambda}} \otimes \ket{\tilde{\lambda}}\bra{\tilde{\lambda}}$ for some unnormalized states $\widetilde{K}^{\text{sgn}}_{\tilde{\lambda}}$. Measure the obtained eigenvalue to obtain some $\tilde{\lambda}$. With probability at least $1-O(\xi)$, $\tilde{\lambda} = f(s,\lambda)$ for some eigenvalue $\lambda$ of $K$.
\item Compute $\tilde{\lambda}_{\min}$ by \algo{lambda_min}. If $|\tilde{\lambda}| < \delta$ or $\tilde{\lambda} < \tilde{\lambda}_{\text{min}}$ output 0; else if $\text{sgn}=+$ output $\tilde{\lambda}^{-1}e^{-\tilde{\lambda}}$; else output $-\tilde{\lambda}^{-1}e^{-\tilde{\lambda}}$.
\end{enumerate}
\end{algorithm}
In Step 3 we need to discard eigenvalues smaller than the approximate minimum eigenvalue $\tilde{\lambda}_{\text{min}}$ to keep the expectation of $\tilde{X}$ well-bounded. This is one consequence of possible error due to the application of the phase estimation procedure.

Another consequence, is that  if some eigenvalues of $K$ are close enough, they could be mapped to the same approximation $\tilde\lambda$, and are therefore treated as degenerate. We will redo the analysis to illustrate this fact: Recall that $\Pi_\lambda$ be the projection onto the $\lambda$-eigenspace of $K$. Define the projector
\begin{equation}
\widetilde{\Pi}_{\tilde{\lambda}} = \sum_{\lambda: f(s,\lambda) = \tilde{\lambda}} \Pi_\lambda
\end{equation}
to be the projector that projects onto the set of eigenvectors of $K$, with eigenvalues that get mapped to $\tilde{\lambda}$ under our consistent phase estimation procedure. We note that if we define the unnormalized states (here $\text{id}$ stands for "ideal'')
\begin{equation}
\widetilde{K}^+_{\tilde{\lambda},\text{id}} = \widetilde{\Pi}_{\tilde{\lambda}}K^+\widetilde{\Pi}_{\tilde{\lambda}}, \quad \quad
\widetilde{K}^-_{\tilde{\lambda},\text{id}} = \widetilde{\Pi}_{\tilde{\lambda}}K^-\widetilde{\Pi}_{\tilde{\lambda}}
\end{equation}
then
\begin{align}
\widetilde{K}^+_{\tilde{\lambda},\text{id}} - \widetilde{K}^-_{\tilde{\lambda},\text{id}} &= \widetilde{\Pi}_{\tilde{\lambda}}K\widetilde{\Pi}_{\tilde{\lambda}}= \sum_{i: f(s,\lambda_i) = \tilde{\lambda}} \lambda_i \ket{v_i}\bra{v_i} \approx_{\xi} \tilde{\lambda} \sum_{i: f(s,\lambda_i) = \tilde{\lambda}} \ket{v_i}\bra{v_i}  = \tilde{\lambda} \widetilde{\Pi}_{\tilde{\lambda}}.
\end{align}
Note that consistent phase estimation implements an operation $O(\sqrt{\xi})$-close to the operation $\sum_{\tilde{\lambda}}\widetilde{\Pi_{\lambda}} \otimes \ket{\tilde{\lambda}}\bra{\tilde{\lambda}}$, and therefore
\begin{align}
\tilde{\rho}^{\text{sgn}}{\equiv} \frac{1}{\tr[K^{\text{sgn}}]}\sum_{\tilde{\lambda}} \widetilde{K}^{\text{sgn}}_{\tilde{\lambda}} \otimes \ket{\tilde{\lambda}}\bra{\tilde{\lambda}}&\approx_{O(\sqrt{\xi})} \frac{1}{\tr[K^{\text{sgn}}]} \sum_{\tilde{\lambda}} \widetilde{\Pi}_{\tilde{\lambda}} K^{\text{sgn}} \widetilde{\Pi}_{\tilde{\lambda}} \otimes \ket{\tilde{\lambda}}\bra{\tilde{\lambda}} \nonumber \\
&=\frac{1}{\tr[K^{\text{sgn}}]}\sum_{\tilde{\lambda}} \widetilde{K}^{\text{sgn}}_{\tilde{\lambda},\text{id}} \otimes \ket{\tilde{\lambda}}\bra{\tilde{\lambda}}.
\end{align}
Thus $\tilde{K}^{\text{sgn}}_{\tilde{\lambda}} \approx_{O(B\sqrt{\xi})} \tilde{K}^{\text{sgn}}_{\tilde{\lambda},\text{id}}$, and hence
\begin{equation}
\tilde{K}^+_{\tilde{\lambda}} - \tilde{K}^-_{\tilde{\lambda}} \approx_{O(B\sqrt{\xi})} \tilde{\lambda}\widetilde{\Pi}_{\tilde{\lambda}}.
\end{equation}
We see that the consistent phase estimation of Step 2 serves to approximately project $\rho^+$ or $\rho^-$ onto the span of eigenvectors of $K$ with eigenvalue approximately equal to some $\tilde{\lambda}$; and on this space, the unnormalized output states at Step 2 approximately differ only on by a multiple of the identity on their support. There is therefore a basis of vectors $\{\ket{\tilde{v}_i}\}$ where $\widetilde{K}^+_{\tilde{\lambda}}$ and $\widetilde{K}^-_{\tilde{\lambda}}$ are approximately diagonal for all $\tilde{\lambda}$. These vectors are approximate eigenvectors of $K$, i.e.
\begin{equation}\label{eq:tilde-v}
\|K \ket{\tilde{v}_i} - \tilde{\lambda}_i \ket{\tilde{v}_i} \| = O(\xi)
\end{equation}
for some numbers $\tilde{\lambda}_i$.\footnote{Note that the basis $\{\ket{\tilde{v}_i}\}$ and exact eigenbasis of $K$, $\{\ket{v_i}\}$, are not necessarily equivalent, because the vectors in the former are only approximate eigenvectors of $K$.} Working in the approximate eigenbasis basis, we can write
\begin{equation}
\widetilde{K}^+_{\tilde{\lambda}} \approx_{O(B\sqrt{\xi})} \sum_{i: \tilde{\lambda}_i = \tilde{\lambda}} \tilde{\lambda}^+_i \ket{\tilde{v}_i}\bra{\tilde{v}_i}, \quad \quad
\widetilde{K}^-_{\tilde{\lambda}} \approx_{O(B\sqrt{\xi})} \sum_{i: \tilde{\lambda}_i = \tilde{\lambda}} \tilde{\lambda}^-_i \ket{\tilde{v}_i}\bra{\tilde{v}_i}
\end{equation}
for nonnegative numbers $\tilde{\lambda}^+_i - \tilde{\lambda}^-_i = \tilde{\lambda}$. This gives the following approximation for $\tilde{\rho}^\text{sgn}$:
\begin{equation}\label{eq:tilde-rho-app}
\tilde{\rho}^{\text{sgn}} \approx_{O(\sqrt{\xi})} \frac{1}{\tr[K^{\text{sgn}}]}\sum_i \tilde{\lambda}^{\text{sgn}}_i \ket{\tilde{v}_i}\bra{\tilde{v}_i} \otimes \ket{\tilde{\lambda}_i}\bra{\tilde{\lambda}_i}.
\end{equation}
Therefore the expectation of $\widetilde{X}$ is upper bounded by
\begin{align}
\E[\widetilde{X}] &\le \frac{\sum_i e^{-\tilde{\lambda}_i}}{\tr[K^+]+\tr[K^-]} + O(\sqrt{\xi}) \max_{i: \tilde{\lambda}_i \ge \tilde{\lambda}_{\text{min}}} \tilde{\lambda}_i^{-1}e^{-\tilde{\lambda}} \\
&\le (1+\delta) \frac{\widetilde{Z}_{\text{supp}}}{\tr[K^+]+\tr[K^-]} + O(\sqrt{\xi}\delta^{-1})\widetilde{Z}_{\text{supp}} \\
&\le (1 + \delta + \sqrt{\xi}\delta^{-1}B) \frac{\tilde{Z}_{\text{supp}}}{\tr[K^+]+\tr[K^-]} \\
&= (1+O(\epsilon))\frac{\tilde{Z}_{\text{supp}}}{\tr[K^+]+\tr[K^-]}.
\end{align}
A similar bound holds for lower bounding $\E[\widetilde{X}]$, showing that knowing $\E[\widetilde{X}]$ would give a $O(\epsilon)$-multiplicative error approximation to $\widetilde{Z}_{\text{supp}}$. Just as in the ideal case, we can simply repeat our procedure $O(\eta^{-1}B^2\delta^{-2}\epsilon^{-2})$ times and take the mean to obtain a $O(\epsilon)$-multiplicative error approximation of $\E[\widetilde{X}]$, and hence of $\tilde{Z}_{\text{supp}}$.

As before, we also need to estimate the number of eigenvalues $\lambda$ (including degeneracy) with $|f(s,\lambda)| < \delta$, i.e. the number of $i$'s with $|\tilde{\lambda}_i| < \delta$. Let this number be $n - \tilde{R}$. Let the output of the following procedure be $\tilde{Y}$:\\
\begin{algorithm}[H]
\caption{Estimation of $n-\tilde{R}$}
 \label{algo:n-tildeR}
\begin{enumerate}
\item Perform consistent phase estimation to estimate eigenvalues of $K$ on the uniformly random state $I/n$; let the output eigenvalue be $\tilde{\lambda}$.
\item If $|\tilde{\lambda}| < \delta$ output 1; otherwise output 0.
\end{enumerate}
\end{algorithm}
It is clear that $n\E[\tilde{Y}]$ is an $O(r_{K}\sqrt{\xi})$-multiplicative error approximation of $n-\tilde{R}$, and it can be proven as before that  $O(r_{K}\epsilon^{-2})$ repetitions of the above procedure suffice to give an $O(\epsilon)$-error multiplicative stimate of $(n-\tilde{R})/n$. It can again be argued that $O(r_{K}\epsilon^{-2})$ repetitions suffice to estimate $\E[\widetilde{Y}]$, and thus $n-\widetilde{R}$, to $O(\epsilon)$-multiplicative error.

Finally, to estimate the full partition function we merely note that $\widetilde{Z}_{\text{supp}} + (n-\widetilde{R})$ is an $O(\delta+\epsilon) = O(\epsilon)$-multiplicative error estimate of the partition function $Z$; we can therefore estimate $Z$ by estimating both terms separately and taking the sum.
\end{proof}

\subsection{Computing the Gibbs function}\label{append:Gibbs-function}
In this section we prove the following result:
\begin{theorem}\label{thm:gibbs}
Suppose $K = K^+ - K^-$, where $K^+$ and $K^-$ are $n \times n$ PSD matrices, and there is a quantum oracle that prepares copies of the states $\rho^+=K^+ / \tr(K^+)$, $\rho^- = K^-/ \tr(K^-)$, and an oracle for the numbers $\tr(K^+)$, $\tr(K^-)$. Moreover, assume that $\tr(K^+) + \tr(K^-) \le B$ for some bound $B$, and that $K^+$, $K^-$ have rank at most $r_{K}$. Then it is possible to prepare the Gibbs state $\rho_G=\exp(-K)/\tr(\exp(-K))$ up to error $\epsilon$ in trace distance, with $\poly(\log n, r_{K}, B, \epsilon^{-1})$ quantum gates.
\end{theorem}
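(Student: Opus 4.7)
The plan is to closely follow the sketch of \thm{low-rank-Gibbs}, lifting the idealised phase-estimation assumption by routing every eigenvalue extraction through the consistent phase estimation protocol of \lem{consistent-phaseEst-adapted}. I fix the threshold $\delta = \Theta(\epsilon)$, pick an error parameter $\xi = \poly(\epsilon/(B r_K))$ small enough that a random shift $s$ is good with probability $1 - O(\epsilon)$, and calibrate \lem{HamiltonianSimulation} so that phase estimation of $K = \tr(K^+)\rho^+ - \tr(K^-)\rho^-$ costs only $\poly(\log n, B, \xi^{-1}, \delta^{-1})$ gates. With $s$ fixed, the rounding $\tilde\lambda_i := f(s,\lambda_i)$ is stable across all runs of the protocol; let $\widetilde R$ be the number of $\tilde\lambda_i$ with $|\tilde\lambda_i| \ge \delta$, noting $\widetilde R \le 2 r_K$.

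Next I would decompose the target. Define $\widetilde Z_{\mathrm{supp}} = \sum_{|\tilde\lambda_i| \ge \delta} e^{-\tilde\lambda_i}$, $\widetilde Z = \widetilde Z_{\mathrm{supp}} + (n - \widetilde R)$, and
\[
\rho_G' \;=\; \frac{\widetilde Z_{\mathrm{supp}}}{\widetilde Z}\,\widetilde\rho_{\mathrm{supp}} \;+\; \frac{n - \widetilde R}{\widetilde Z}\,\widetilde\rho_{\mathrm{ker}},
\]
where $\widetilde\rho_{\mathrm{supp}}$ is the Gibbs distribution restricted to the approximate support $\{i : |\tilde\lambda_i| \ge \delta\}$, and $\widetilde\rho_{\mathrm{ker}}$ is the maximally mixed state on the approximate kernel. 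The bounds $|e^{\pm \delta}-1| \le 2\delta$ together with $\widetilde R \le 2 r_K$ yield $\|\rho_G - \rho_G'\|_{\tr} = O(\delta) = O(\epsilon)$ by the same computation sketched in the proof of \thm{low-rank-Gibbs}, so it suffices to produce $\rho_G'$ to trace-distance error $O(\epsilon)$.

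To produce $\widetilde\rho_{\mathrm{supp}}$ I would run \algo{rhosupp} with consistent phase estimation: (i) pick a sign $\pm$ with probability $\tr(K^\pm)/(\tr(K^+)+\tr(K^-))$; (ii) consistently phase estimate $K$ on $\rho^\pm$, obtaining $\tilde\lambda$; (iii) reject unless $|\tilde\lambda| \ge \delta$ and $\tilde\lambda \ge \tilde\lambda_{\min}$ from \algo{lambda_min}; (iv) consistently phase estimate $K$ a \emph{second} time to obtain $\mu = \tilde\lambda^+ + \tilde\lambda^-$; and (v) accept with probability $\delta(1-\epsilon) e^{-\tilde\lambda} / (\mu \widetilde Z_{\mathrm{supp}}')$, where $\widetilde Z_{\mathrm{supp}}'$ is an $O(\epsilon)$-multiplicative estimate of $\widetilde Z_{\mathrm{supp}}$ supplied by \lem{app:partition}. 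As in the partition-function argument, the blocks $\widetilde K^+_{\tilde\lambda}$ and $\widetilde K^-_{\tilde\lambda}$ are approximately simultaneously diagonal in a common basis $\{\ket{\tilde v_i}\}$ with eigenvalues $\tilde\lambda_i^\pm \ge 0$ satisfying $\tilde\lambda_i^+ - \tilde\lambda_i^- = \tilde\lambda_i$, so the probability of outputting $\ket{\tilde v_i}\bra{\tilde v_i}$ is proportional to $(\tilde\lambda_i^+ + \tilde\lambda_i^-) \cdot e^{-\tilde\lambda_i} / (\tilde\lambda_i^+ + \tilde\lambda_i^-) \propto e^{-\tilde\lambda_i}$, matching $\widetilde\rho_{\mathrm{supp}}$ up to trace-distance error $O(\sqrt{\xi}+\epsilon)$. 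The per-trial acceptance probability is $\Omega(\delta/B)$, so $O(B/\delta)$ trials suffice in expectation. The kernel piece $\widetilde\rho_{\mathrm{ker}}$ is produced by \algo{rhoker}: consistently phase estimate $K$ on $I/n$ and keep the output iff $|\tilde\lambda| < \delta$; this succeeds with probability $(n-\widetilde R)/n = \Omega(1)$ whenever $n > 4 r_K$, and the remaining low-dimensional cases are dispatched by outputting $I/n$ directly.

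Finally I would mix the two samples with a biased coin of weights $\widetilde Z_{\mathrm{supp}}/\widetilde Z$ and $(n-\widetilde R)/\widetilde Z$, obtaining these ratios to $O(\epsilon)$ multiplicative error by calling \lem{app:partition} to estimate both $\widetilde Z_{\mathrm{supp}}$ and $n - \widetilde R$ and combining via Lemma~7 of \cite{vanApeldoorn2017quantum}. All subroutines run in $\poly(\log n, r_K, B, \epsilon^{-1})$ gates; a union bound over the failure modes (bad shift, phase-estimation misclassification, partition-function estimation, rejection-sampling failure) gives total trace-distance error $O(\epsilon)$, and rescaling $\epsilon$ delivers the claim. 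The hard part will be the analysis of the \emph{second} phase estimation in step (iv): because the post-step-(ii) state is only approximately supported on eigenvectors of $K$ (the $\ket{\tilde v_i}$ satisfy $\|K\ket{\tilde v_i} - \tilde\lambda_i\ket{\tilde v_i}\| = O(\xi)$), one must show that measuring $\mu$ both returns the correct value $\tilde\lambda^+ + \tilde\lambda^-$ and, crucially, leaves the first register undisturbed up to error $O(\sqrt{\xi})$. This is precisely where choosing $\xi$ polynomially small in $1/(B r_K \epsilon^{-1})$ and invoking gentle-measurement bounds, in the same spirit as the analysis underlying \lem{app:partition}, will close the argument.
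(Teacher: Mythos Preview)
Your overall architecture matches the paper's proof: consistent phase estimation with a good shift, the support/kernel decomposition $\rho_G' = \frac{\widetilde Z_{\mathrm{supp}}}{\widetilde Z}\widetilde\rho_{\mathrm{supp}} + \frac{n-\widetilde R}{\widetilde Z}\widetilde\rho_{\mathrm{ker}}$, rejection sampling for $\widetilde\rho_{\mathrm{supp}}$, and mixing via multiplicative estimates from \lem{app:partition}. The parameter choices and the $\|\rho_G - \rho_G'\|_{\tr} = O(\delta)$ bound are also in line with the paper.

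There is, however, a real gap in your step (iv). You propose to ``consistently phase estimate $K$ a second time to obtain $\mu = \tilde\lambda^+ + \tilde\lambda^-$.'' But phase estimation of $K$ on (an approximate) eigenvector $\ket{\tilde v_i}$ returns $\tilde\lambda_i = \tilde\lambda_i^+ - \tilde\lambda_i^-$, not the sum $\tilde\lambda_i^+ + \tilde\lambda_i^-$; the sum is simply not an eigenvalue of $K$. The sketch in \thm{low-rank-Gibbs} is admittedly terse here, but the paper's full proof (see \algo{tilderhosupp}, step 5) makes explicit that this second estimation is \emph{quantum principal component analysis on the density matrix} $\tilde\rho$ itself, i.e.\ Hamiltonian simulation with $\tilde\rho$ as the Hamiltonian (via \lem{HamiltonianSimulation}), whose eigenvalues on the approximate support are $(\tilde\lambda_i^+ + \tilde\lambda_i^-)/(\tr K^+ + \tr K^-)$. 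That is precisely what delivers $\mu$, and the gentle-measurement disturbance of this step is controlled because quantum PCA needs only $\poly$ copies of $\tilde\rho$ and achieves precision $O(\sqrt{\xi}\delta)$ with error probability $O(\sqrt{\xi})$. Your closing remark correctly flags step (iv) as the crux, but the fix is not a sharper gentle-measurement bound for a second phase estimation of $K$; it is to replace that step with phase estimation of $\tilde\rho$.
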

\begin{proof}
The procedure will the sketch given in \append{low-rank-Gibbs}, but using consistent phase estimation rather than the na{\"i}ve protocol. We again assume we chose a good shift $s$ for the operator $K$, and first consider a procedure to sample from
\begin{equation}
\tilde{\rho}_{\text{supp}} \equiv \sum_{i:|\tilde{\lambda}_i| \ge \delta} e^{-\tilde{\lambda}_i}\ket{\tilde{v}_i}\bra{\tilde{v}_i}/\widetilde{Z}_{\text{supp}},
\end{equation}
where $\delta > 0$ is a small threshold and $\tilde{\lambda}_i$, $\ket{\tilde{v}_i}$ were defined previous in~\eqref{eq:tilde-v}.  $\rho_{\text{supp}}$ is the Gibbs state when considering only the space spanned by approximate eigenvectors of $K$ whose eigenvalues estimates (under consistent phase estimation) are at least $\delta$ in aboslute value. Again choose $\delta = O(\eps)$, $\xi = O(\eps^2\delta^2 B^{-2}r_{K}^{-1})$, and pick a good random shift $s$ -- assume that this $s$  (this happens with probability $1 - O(\eps^2\delta^2B^{-2})$). Consider \algo{tilderhosupp} below.

\begin{algorithm}[htbp]
\caption{Estimation of $\tilde{\rho}_{\text{supp}}$}
 \label{algo:tilderhosupp}
\begin{enumerate}
\item \textbf{Input:} A good random shift $s$, an $O(\epsilon)$-multiplicative error estimate $\widetilde{Z}'_{\text{supp}}$ of $\widetilde{Z}_{\text{supp}}$, quantum oracles for $\rho^+$, $\rho^-$.
\item Let $\text{sgn} = +$ with probability $\tr(K^+)/[\tr(K^+)+\tr(K^-)]$, and $\text{sgn} = -$ otherwise.
\item Use consistent phase estimation to perform eigenvalue estimation of $K$ on $\rho^{\text{sgn}}$; let the output be the normalized state $\tilde{\rho}^{\text{sgn}} = \frac{1}{\tr[K^{\text{sgn}}]}\sum_{\tilde{\lambda}} \widetilde{K}^{\text{sgn}}_{\tilde{\lambda}} \otimes \ket{\tilde{\lambda}}\bra{\tilde{\lambda}}$ for unnormalized states $\widetilde{K}^{\text{sgn}}_{\tilde{\lambda}}$. Including the randomness on choosing $\text{sgn}$, we have the state
\begin{align}
\tilde{\rho} &\equiv \frac{\tr[K^+]}{\tr[K^+]+\tr[K^-]}\tilde{\rho}^+ + \frac{\tr[K^-]}{\tr[K^+]+\tr[K^-]}\tilde{\rho}^- \\
&\approx_{O(\sqrt{\xi})} \frac{1}{\tr[K^+]+\tr[K^-]} \sum_i (\tilde{\lambda}^+_i + \tilde{\lambda}^-_i)\ket{\tilde{v}_i}\bra{\tilde{v}_i} \otimes \ket{\tilde{\lambda}_i}\bra{\tilde{\lambda}_i}.
\end{align}
Here in the second line we used the approximation~\eqref{eq:tilde-rho-app}.
\item Apply the projection $I \otimes \sum_{|\tilde{\lambda}| < \delta, \tilde{\lambda} \ge \tilde{\lambda}_{\text{min}}} \ket{\tilde{\lambda}}\bra{\tilde{\lambda}}$, where $\tilde{\lambda}_{\min}$ is the output of \algo{lambda_min}. In other words, measure the second register to make sure that $|\tilde{\lambda}| \ge \delta$ and $\tilde{\lambda} \ge \tilde{\lambda}_{\text{min}}$, and reject otherwise.
\item Apply the measurement operator
\begin{equation}
\sum_{i: |\tilde{\lambda}_i| \ge \delta, \tilde{\lambda}_i \ge \tilde{\lambda}_{\text{min}}} \frac{\delta}{\tilde{\lambda}^+ + \tilde{\lambda}^-}\frac{e^{-\tilde{\lambda}}}{2Z'_{\text{supp}}} \ket{\tilde{v}_i}\bra{\tilde{v_i}} \otimes \ket{\tilde{\lambda}_i}\bra{\tilde{\lambda}_i},
\end{equation}
up to $O(\sqrt{\xi})$ error, to the state. We can do this by first estimating $\tilde{\lambda}^+ + \tilde{\lambda}^-$ by quantum principal analysis (i.e. phase estimation of $\tilde{\rho}$), to precision $O(\sqrt{\xi}\delta)$ and error probability $O(\sqrt{\xi})$; then accept the resulting state with probability approximately $\frac{\delta}{\tilde{\lambda}^+ + \tilde{\lambda}^-}\frac{e^{-\tilde{\lambda}}}{2Z'_{\text{supp}}}$.
\end{enumerate}
\end{algorithm}
\algo{tilderhosupp} will give us a good approximation for $\tilde{\rho}_{\text{supp}}$, the Gibbs state on the approximate support of $K$ (ignoring small eigenvalues). As before, we will need to approximate the Gibb state on the approximate kernel of K as well, which we define as
\begin{equation}
\tilde{\rho}_{\text{ker}} = \frac{1}{n-\widetilde{R}} \sum_{i: \tilde{\lambda}_i} \ket{v_i}\bra{v_i}.
\end{equation}
This state can easily be approximated by starting with the completely mixed state $I/n$ and performing consistent phase estimation to estimate eigenvalues of $K$, postselecting on the case that the measured estimate is less than $\delta$ in magnitude.

To complete our estimation for the full Gibbs state, we see that $\rho_G = \exp(-K)/\tr(\exp(-K))$ can be approximated by
\begin{align}
\rho_G &\approx_{O(\delta)} \frac{1}{Z} \sum_{i} e^{-\tilde{\lambda}_i} \ket{\tilde{v}_i}\bra{\tilde{v}_i} \\
&= \frac{\widetilde{Z}_{\text{supp}}}{Z} \frac{1}{\widetilde{Z}_{\text{supp}}} \sum_{i:|\tilde{\lambda}_i| \ge \delta} e^{-\tilde{\lambda}_i} \ket{\tilde{v}_i}\bra{\tilde{v}_i} + \frac{n-\widetilde{R}}{Z} \frac{1}{n-\widetilde{R}} \sum_{i:|\tilde{\lambda}_i| < \delta} e^{-\tilde{\lambda}_i} \ket{\tilde{v}_i}\bra{\tilde{v}_i}\\
&\approx_{O(\epsilon)} \frac{\widetilde{Z}_{\text{supp}}}{Z} \tilde{\rho}_{\text{supp}} + \frac{n-\widetilde{R}}{Z} \tilde{\rho}_{\text{ker}}.
\end{align}
Thus by Lemma 7 of \cite{vanApeldoorn2017quantum} , it suffices to have $O(\epsilon)$-multiplicative error estimates for $\widetilde{Z}_{\text{supp}}$, $n-\widetilde{R}$, and $Z$, and $O(\epsilon)$-trace distance error approximations for $\tilde{\rho}_{\text{supp}}$ and $\tilde{\rho}_{\text{ker}}$. We have already shown how to achieve all of this.
\end{proof}

\end{document}